\newcommand{\R}{\mathbb{R}}
\newcommand{\N}{\mathbb{N}}
\newcommand{\prob}{\mathbb{P}}
\newcommand{\E}{\mathbb{E}}
\newcommand{\pto}{\xrightarrow{p}}
\newcommand{\dto}{\xrightarrow{d}}
\newcommand{\expMech}{\texttt{ExpMech}}
\newcommand{\blb}{\texttt{BLBquant}}
\newcommand{\privSub}{\texttt{PrivSub}}
\newcommand{\privBoot}{\texttt{PrivBoot}}
\newcommand{\dataset}{\boldsymbol{\omega}}
\newcommand{\datadom}{\Omega}
\DeclareMathOperator{\Var}{Var}
\theoremstyle{plain}
\newtheorem{theorem}{Theorem}[section]
\newtheorem{lemma}[theorem]{Lemma}
\newtheorem{corollary}[theorem]{Corollary}
\theoremstyle{definition}
\newtheorem{definition}[theorem]{Definition}
\theoremstyle{remark}
\newtheorem{remark}[theorem]{Remark}
\newtheorem{claim}[theorem]{Claim}
\title{Differentially Private Nonparametric Confidence Intervals Under Minimal Distributional Assumptions}
\author{%
  Tomer Shoham\thanks{Tomer.Shoham@mail.huji.ac.il} \quad
  Moshe Shenfeld \quad
  Noa Velner-Harris \quad
  Katrina Ligett \\
  \\
  School of Computer Science and Engineering\\
  The Hebrew University of Jerusalem\\
  Jerusalem, Israel 
}
\begin{document}

\maketitle

\begin{abstract}
  We consider the problem of constructing differentially private nonparametric confidence intervals (CIs)  for an arbitrary quantity using resampling. A growing body of work has adapted resampling ideas to the private setting, including private bootstrap methods \cite{brawner2018bootstrap, wang2025differentially,dette2025gaussian} and BLB-based subsample-and-aggregate approaches \cite{covington2025unbiased, chadha2024resampling}. However, existing methods typically rely on strong assumptions, such as asymptotic normality, or are tied to specific privacy mechanisms such as noise addition, and can be impractical in finite-sample regimes. We address these problems by introducing a simple, general framework that can convert any differentially private estimator satisfying mild conditions into a differentially private nonparametric CI for arbitrary target quantities. Our method repeatedly subsamples the data, applies the private estimator to each subset, and post-processes the resulting empirical CDF into a CI. The framework is black-box, and does not require a specific limiting distribution. We prove that the empirical CDF induced by our procedure converges to the sampling distribution of the private statistic, which implies that the resulting CI is asymptotically valid and tight, and provide heuristic guidance for choosing the hyperparameters. Empirically, our method outperforms competing general approaches, especially for non-smooth functionals and more challenging distributions.
\end{abstract}

\section{Introduction and related work}\label{sec:intro}
 
Confidence intervals (CIs) are a central tool in statistical inference: they quantify uncertainty around an estimator and support downstream tasks such as hypothesis testing and scientific reporting. A CI is a data-dependent subset of the parameter space that aims to contain the true value with a prespecified probability (the {\em confidence level}, typically 90–99\%), over independent sampling from the same data-generating process. The complement is the significance level $\alpha$: the tolerated probability that the returned set misses the truth. Note that since the dataset is random, the output set is random as well. We refer to the probability that the CI contains the true value as its {\em coverage}, and consider its accuracy to be the gap between the coverage and the desired confidence level. This gap is not symmetric: CIs that under-cover fail to provide the advertised inferential guarantee, and thus are invalid, whereas over-covering CIs are valid but conservative, typically at the cost of wider intervals.

\begin{figure*}[t!]
\vspace{.3in}
\centering\includegraphics[width=\linewidth]{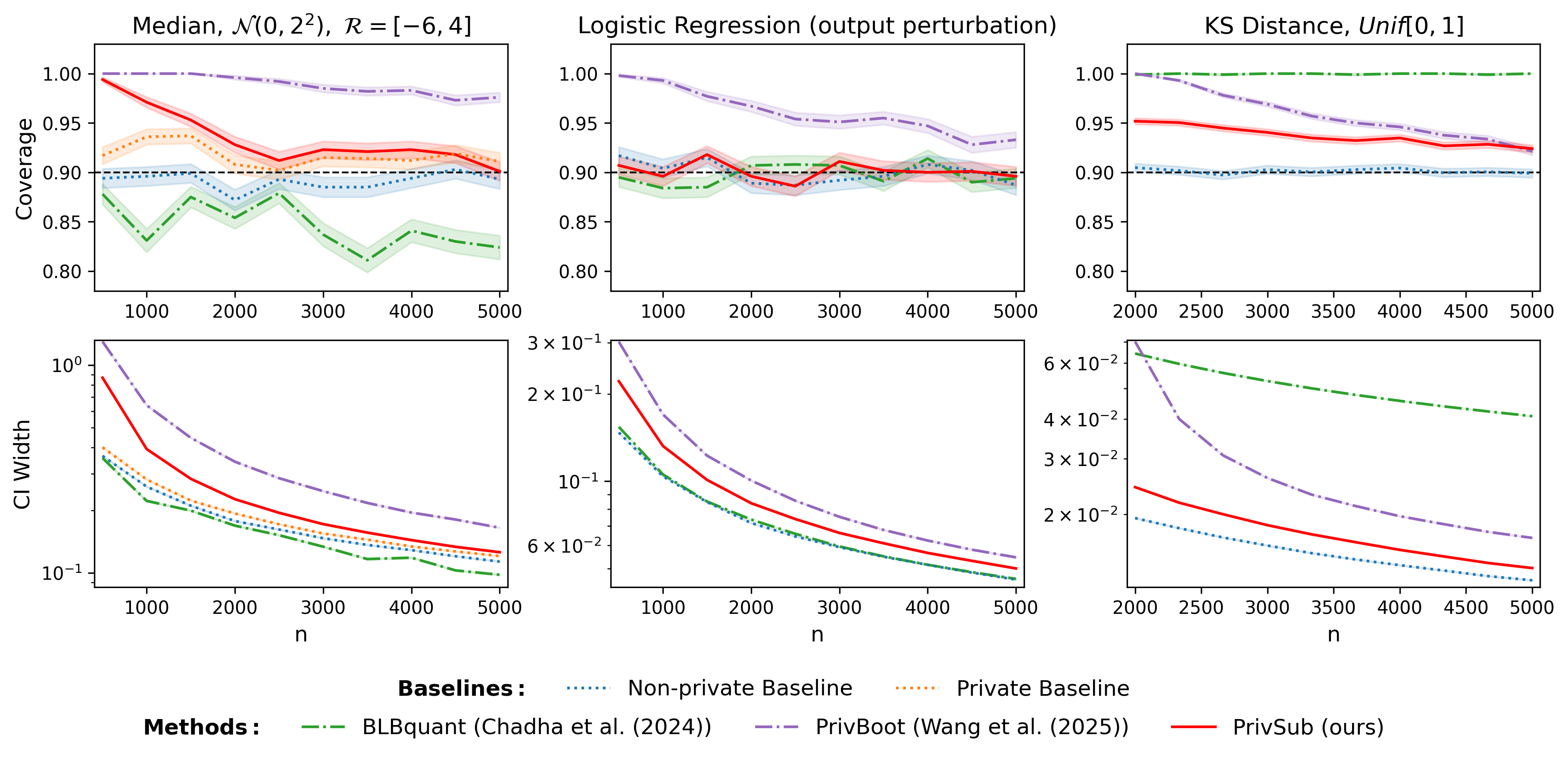}
\vspace{-0.5cm}
\caption{
Finite-sample comparison of \privSub\ with existing general nonparametric DP CI methods: \blb\ \citep{chadha2024resampling} and \privBoot\ \citep{wang2025differentially}. Across median estimation, logistic regression slope estimation, and KS-distance estimation, \privSub\ produces valid $0.9$-CIs with competitive, and often smaller, widths. We also report a median-specific exponential-mechanism baseline (\expMech; \citealp{drechsler2022nonparametric}) and the corresponding non-private baseline: bootstrap for the median and logistic regression, and subsampling for the KS statistic. For the median and KS experiments, all private methods are $5$-pure DP, except \privBoot, which is $(\varepsilon=5,\delta=10^{-6})$-DP; for logistic regression, all private methods are $(\varepsilon=5,\delta=10^{-6})$-DP. For width, shaded bands indicate $\pm 2$ empirical standard errors (SE) across repetitions; for coverage, shaded bands indicate $\pm 2$ binomial SE. Details are given in Section~\ref{sec:num_study}. }
\label{fig:main_fig}
\end{figure*}

In parametric settings, where we have an analytically characterized (or asymptotically approximated) sampling distribution for the statistic, CIs are typically constructed by estimating the distribution's parameter. For example, a mean statistic follows a normal limiting distribution, and its variance can be estimated directly from the data. However, these assumptions do not hold for many practical statistical tasks. In nonparametric settings, or when estimating the limiting parameters is challenging, resampling methods such as the bootstrap \cite{efron1992bootstrap} provide a general route to uncertainty quantification. Given a dataset, the bootstrap method draws multiple samples from it, each of the original sample size, by sampling with replacement, and recomputes the statistic on each. Intuitively, this procedure attempts to mimic the variability one would see if one could resample the initial dataset. Although resamples are not independent new datasets, under mild conditions the empirical distribution of the resampled statistics consistently approximates the sampling distribution of the original statistic. Consequently, taking its $\alpha/2$ and $1-\alpha/2$ empirical quantiles yields an approximate $(1-\alpha)$-CI.

\textbf{Differential privacy} When data is sensitive, such as medical records, statistical procedures must account not only for inferential accuracy but also for the privacy of the individuals represented in the data. Differential privacy (DP) \citep{dwork2006calibrating}, offers a rigorous framework that has become the gold standard both in academia and in industry for privacy-preserving data analysis. DP provides strong, mathematically provable privacy guarantees quantified by two parameters $\varepsilon > 0$ and $\delta \in [0,1]$,  requiring that w.p. at least~$\approx 1- \delta$ the probability of any class of outputs can increase by a factor of at most $e^{\varepsilon}$ when changing a single element in the dataset.

In parametric settings, CIs can often be derived by privately estimating the relevant distributional parameters~\citep{du2020differentially, karwa2018finite, wang2018statistical, shoham2022asking}. In nonparametric settings, or when the limiting distribution is difficult to characterize or estimate, CIs can sometimes be constructed for specific functionals by reformulating the interval endpoints as estimable distributional quantities. For example, \citet{drechsler2022nonparametric} provides nonparametric DP CIs for the median by directly estimating other quantiles. Both cases require analysis of the specific statistic and its distribution, and do not fall in our general framework.

In the classical bootstrap, each element might participate in the resampled dataset more than once, increasing the query's sensitivity to changing a single element, which worsens the privacy-accuracy tradeoff. \citet{brawner2018bootstrap} address this by capping the number of times any element may appear in a bootstrap sample. Since the cap is rarely triggered, the procedure preserves classical bootstrap behavior with high probability. Nevertheless, the cap induces a constant-factor increase in sensitivity, which can noticeably degrade accuracy in finite samples. \citet{wang2025differentially} further refine the privacy analysis by characterizing the privacy loss of a single bootstrap sample and its composition using the $f$-DP framework. However, their validity guarantees are restricted to the Gaussian mechanism, and in practice, their numerical privacy analysis method can account only for Gaussian DP. While the relative degradation in CI width caused by bootstrap privatization decreases as the number of bootstrap samples tends to infinity, we show that the degradation is significantly worse for reasonable sample sizes. Finally, \citet{dette2025gaussian} study the $m$-out-of-$n$ bootstrap, in which subsets of smaller size $m<n$ are sampled \emph{with} replacement. Their analysis is restricted to the Gaussian mechanism and to statistics with sensitivity $O(1/n)$. Coverage guarantees are derived under the assumption of a continuous underlying distribution. The privacy analysis for finite samples requires numerically evaluating convolutions, while closed-form bounds are provided asymptotically.

In recent years, a technique known as Bag-of-Little-Bootstrap (\textsc{BLB}), proposed by \citet{kleiner2014scalable}, has emerged as a valuable tool for non-private bootstrapping of large databases. It splits the data into disjoint subsets, bootstraps samples of the original size within each subset, and aggregates the resulting CDF or standard-deviation estimates. BLB subset estimates can also be aggregated \emph{privately} using the subsample-and-aggregate technique \citep{nissim2007smooth}, for example, using the CoinPress mechanism \citep{covington2025unbiased} or private median estimation \citep{chadha2024resampling}. Elegantly, the subsample-and-aggregate technique does not generally require a private estimator of the target quantity, but only of the aggregation. 
However, both \cite{covington2025unbiased} and \cite{chadha2024resampling} require asymptotic normality of the statistic to establish validity. More importantly, since their methods use sample splitting, each split must be sufficiently representative of the underlying distribution, which is hard to achieve with small datasets when the underlying distribution is more challenging, as illustrated in Figure \ref{fig:main_fig}.

\begin{wrapfigure}{R}{0.6\textwidth}
\begin{minipage}[t]{0.6\textwidth}
\vspace{-1cm}
\begin{algorithm}[H]
\caption{\privSub: Private Subsampling for Quantile CI}
\label{alg:privsub}
\begin{algorithmic}
\Require Dataset $\dataset=(\omega_1,\ldots,\omega_n)$, subsample size $m\in[n]$, number of subsamples $T\in \left[\binom{n}{m}\right]$, rate $\tau(\cdot)$, level $\alpha\in(0,1)$, DP mechanisms $\mathcal{M}_{\text{full}}$ and $\mathcal{M}_{\text{sub}}$.
\State $\widetilde{\theta} \gets \mathcal{M}_{\text{full}}(\dataset)$
\For{$i=1$ to $T$}
    \State Draw subsample $S_i \subseteq \dataset$ with $|S_i|=m$
    \State $\widetilde{\theta}_i \gets \mathcal{M}_{\text{sub}}(S_i)$
\EndFor
\State Sort $\{\widetilde{\theta}_i\}_{i=1}^T$ to obtain $\widetilde{\theta}_{(1)} \le \cdots \le \widetilde{\theta}_{(T)}$
\State $q_{\alpha/2} \gets \max \left(1,\lfloor (\alpha/2)T \rfloor \right)$, \quad $q_{1-\alpha/2} \gets \lceil (1-\alpha/2)T \rceil$
\State \Return $\Big[\widetilde{\theta}-\frac{\tau_m}{\tau_n}\big(\widetilde{\theta}_{(q_{1-\alpha/2})}-\widetilde{\theta}\big),\ \widetilde{\theta}+\frac{\tau_m}{\tau_n}\big(\widetilde{\theta}-\widetilde{\theta}_{(q_{\alpha/2})}\big)\Big]$
\end{algorithmic}
\end{algorithm}
\vspace{-.5cm}
\end{minipage}
\end{wrapfigure}

\textbf{Our approach} \ \ We overcome the data-reuse challenge by leveraging a different sampling method, known as subsampling \citep{politis1999subsampling}.
Given a dataset of size $n$, this method, parameterized by $m \in [n]$ and $T \in \left[\binom{n}{m}\right]$, first computes the statistic on the full dataset.
Then, it samples $T$ subsets of size $m$ without replacement and uses them to obtain $T$ statistics. This results in an estimation of the CDF of the statistic on $m$ samples.
Centering around the full dataset statistic and rescaling by a normalizing factor produces an estimation of the CDF of the statistic on $n$ samples. The CI can be constructed by taking the $\alpha/2$ and $1-\alpha/2$ quantiles of the empirical CDF, or by estimating the standard deviation of the statistic.

Surprisingly, this method results in valid CIs under milder assumptions than bootstrapping \citep{bickel2012resampling}. It relies on the fact that the sequence of distributions of statistic estimations on $n$ samples multiplied by an appropriate factor $\tau_{n}$, converges to some target distribution. We refer to the sequence of factors $\tau_{n}$ as the \emph{convergence rate} and to the target distribution as the \emph{limiting distribution}. For example, the mean converges to a normal distribution at convergence rate $\tau_n=\sqrt{n}$. For sufficiently large $m$ and $T$, the $T$ statistics accurately approximate the underlying distribution of the statistic on $m$ elements. Rescaling by $\tau_{m}/\tau_{n}$ results in an accurate estimation of the CDF of the statistic on $n$ samples.  Our private CI approach, presented in Algorithm~\ref{alg:privsub}, makes the subsampling algorithm private by replacing each statistic evaluation with a private estimate.

\textbf{Contributions}
We propose a privacy-preserving mechanism for constructing nonparametric CIs for any quantity of interest under minimal distributional assumptions, and prove it is differentially private (Theorem \ref{thm:eps_delta_total}) and asymptotically accurate (Corollary \ref{cor:privmed_valid_tight}). In fact, our method produces an asymptotically accurate estimate of the full CDF of the statistic (Theorem \ref{thm:cons_priv_quan_CI}), which, by itself, has many applications, such as estimating multiple moments and testing assumptions. We demonstrate that our algorithm performs well at practical sample sizes (Figure~\ref{fig:main_fig} and Appendix~\ref{subsec:full_emp_eval}).

Our validity guarantees hold under the standard setting of the non-private subsampling literature. We require the \textit{existence} of a limiting distribution and an arbitrary but known convergence rate.\footnote{Knowledge of the convergence rate is used only to obtain tight, rate-corrected CIs. Valid CIs can still be constructed without this knowledge, with more conservative scaling. Moreover, this assumption can be relaxed by estimating the convergence rate, as in \citet{bertail1999subsampling}; this extension is outside the scope of the present paper.} In addition, we require that the error introduced by the privacy-preserving mechanism is asymptotically lower order than the statistical error due to sampling (Definition~\ref{def:tau_n_cons}). Our method provides both pure ($\delta = 0$) and approximate ($\delta \in [0,1]$) DP, though its main asymptotic advantage comes from using advanced composition, which yields approximate DP.

These assumptions are weaker than those used by existing general, nonparametric private CI methods. In particular, prior works assume either a normal limiting distribution with a known convergence rate~\citep{chadha2024resampling, wang2025differentially}, or the specific convergence rate $\tau_n=\sqrt{n}$~\citep{dette2025gaussian}. The lower-order perturbation requirement is also standard in this line of work~\citep{chadha2024resampling, dette2025gaussian, wang2025differentially}. It is satisfied by many standard privacy-preserving mechanisms under typical choices of the privacy budget, although, naturally, whether it holds depends on the privacy level.

We allow any choice of $m$ and $T$ satisfying
$m/n \rightarrow 0$, $m \rightarrow \infty$, $\tau_m/\tau_n \rightarrow 0$, and $T\rightarrow \infty$ as $n\rightarrow \infty$. Because no general finite-sample rates are available under these minimal distributional assumptions, even in the non-private setting, we cannot derive universally optimal choices of $m$ and $T$. Instead, we provide in Section~\ref{sec:priv_and_acc_privsub} several heuristic choices, and discuss them in more detail in Appendix~\ref{sec:rates}.

\vspace{-0.25cm}
\paragraph{Empirical evaluation} In Figure \ref{fig:main_fig}, we compare several methods for constructing CIs for the median, logistic regression coefficient (both with normal limiting distribution), and the Kolmogorov-Smirnov (KS) statistic (non-normal limiting distribution) with a target confidence level of $(1-\alpha)=0.9$. We compare our method with the other two general, nonparametric DP methods: $\blb$ \citep{chadha2024resampling}, and \privBoot\ \citep{wang2025differentially}. The three methods privately estimate the median using the inverse sensitivity mechanism \citep{asi2020instance}, the logistic regression coefficient using output perturbation with Gaussian noise addition \cite{wang2019differentially}, and the KS statistic with noise addition (Laplace \textendash{} for \privSub\ and \blb, and Gaussian \textendash{} for \privBoot).
 For median estimation, we also include a private baseline, $\expMech$ \citep{drechsler2022nonparametric}.\footnote{We use output perturbation rather than objective perturbation because \privBoot\ is formulated under $\mu$-GDP, and we wanted to compare all three methods under a fair privacy framework. Our goal in this experiment is not to use the state-of-the-art private estimator for logistic regression, but rather to provide a fair comparison between the resulting CI procedures.}

Figure~\ref{fig:main_fig} illustrates the main empirical message of our work. Across the settings we consider, \privSub\ produces valid CIs whose widths approach those of the corresponding non-private baseline. Meanwhile, existing private resampling-based methods either undercover or become overly conservative at finite sample sizes, and are not robust under different statistics and distributions. Overall, our experiments demonstrate the flexibility of our framework beyond asymptotically normal settings, alongside its strong finite-sample performance. We provide a detailed discussion of these experiments in Section~\ref{sec:num_study} and additional empirical results in Appendix~\ref{app:detailed_analysis}.

\section{Preliminaries and notation}\label{sec:preliminaries}

Let $(\Omega,\mathcal{F},P)$ be a probability space and $P^{(n)}$ be the product distribution. We use \textbf{bold} symbols to denote vectors, e.g., $\dataset =(\omega_1,\ldots,\omega_n) \sim P^{(n)}$. Unless specified otherwise, $\prob$ denotes the probability taken with respect to the joint distribution over all sources of randomness, and limits are taken as $n\to\infty$. We aim to infer a distributional functional such as a mean, quantile, or correlation. To estimate this quantity, we define a function, $\theta:\Omega^* \rightarrow \R$; which, together with $P$, defines a random variable. We will denote the quantity of interest by $\theta^*$, formally defined by $\theta^* = \underset{n\rightarrow \infty}{\lim} \underset{\dataset \sim P^{(n)}}{\E}\theta(\dataset)$.
Given some constant $\tau \in \R^+$, we define the standardized centered cumulative distribution function of the random variable $\theta(\dataset)$ at the point $x \in \R$ as
\begin{equation}\label{eq:U_n(x,P)}
    U_{n, \tau}(x) = \underset{\dataset \sim P^{(n)}}{\prob}\left( \tau \cdot \bigl(\theta(\dataset) - \theta^* \bigr) \leq x \,\right).
\end{equation}
It is common practice to assume that there exists a non-decreasing sequence $\tau_{n} \rightarrow \infty$ and a distribution $U(\cdot)$, such that $U_{n, \tau_n}(\cdot) \to U(\cdot)$ pointwise. We refer to $\tau_n$ as the convergence rate and to $U(\cdot)$ as the limiting distribution, and denote $U_{n}(\cdot) \coloneqq U_{n, \tau_n}(\cdot)$. For example, under some conditions, means (by CLT), maximum likelihood and M-estimators, quantiles (including the median), U-statistics, smooth functionals, and empirical risk minimizers, all have a normal limiting distribution at rate $\sqrt{n}$ \cite{van2000asymptotic}.

To formally define what \emph{approaches} means here, we must define a notion of convergence of distributions. For random variables $X_n,X$ having some joint distribution and taking values in a metric space, we write $X_n \pto X$ (``$X_n$ converges to $X$ in probability'') if, for every $\xi>0$,
$\lim_{n\to\infty}\mathbb{P}(|X_n-X|>\xi)=0.$ Letting
$F_{X_n}(x)=\mathbb{P}(X_n\le x)$ and $F_X(x)=\mathbb{P}(X\le x)$, we write $X_n \xrightarrow{d} X$ (``$X_n$ converges to $X$ in distribution'') if
$\underset{n\to\infty}{\lim} F_{X_n}(x) = F_X(x)$
 for all continuity points of $F_X$. It can be shown that convergence in probability implies convergence in distribution.

With these notations and definitions, we can now formally define confidence intervals for the typical one-dimensional setting we study.

\begin{definition}[Asymptotically valid and tight confidence intervals]\label{def:con_int}
    Let $P$ be some distribution defined on $\Omega$ and $\theta : 
    \Omega^{*} \to \R$ be some function. Given a CI construction method from a sample of size $n$, denote by $(u_{n}(\dataset), v_{n}(\dataset))$ the random variables defining its edges. The coverage of $(u_{n},v_{n})$ is defined as $C_{n} \coloneqq \prob_{\dataset \sim P^{(n)}}\left(\theta^*  \in [u_{n}(\dataset), v_{n}(\dataset)]\right)$.     
    We say this method is an asymptotically valid $(1-\alpha)$-CI of $\theta$ if $\lim_{n\rightarrow \infty} C_{n} \ge 1-\alpha$ and asymptotically tight if $\lim_{n\rightarrow \infty} C_{n} \le 1-\alpha$.
\end{definition}

Validity and tightness capture fundamentally different aspects of a confidence interval and are not equivalent. Asymptotic validity is a one-sided, feasibility requirement ensuring that the nominal coverage level is not violated in the limit. In contrast, tightness is a calibration property, quantifying how closely the procedure attains the target level. In general, we can use any empirical distribution to construct valid and tight CIs, provided that this empirical distribution converges, in a sufficiently strong sense, to the limiting distribution of the statistic estimating the quantity of interest. Lemma~\ref{lem:val_CI_emp_dist} gives the formal statement.

\subsection{Subsampling}\label{sec:non_priv_subsampling}
Consider any function $m: \mathbb{N} \rightarrow \mathbb{N}$ such that $m(n) \in [n]$, $m(n) \rightarrow \infty$, $ m(n)/n \to 0$, and $\tau_m / \tau_n \to 0$ as $ n \to \infty$.\footnote{This last condition is always achievable since $\tau_n \rightarrow \infty$.} We refer to the setting where a limiting distribution exists and $m$ is chosen according to these conditions as the \emph{standard subsampling setting}. We denote $m \coloneqq m(n)$ for brevity.

Let $\tau_n$ be the convergence rate such that $U_n$ has a limiting distribution $U$, and define $\tau_m$ accordingly.  Given a subset size $m$, we denote by $I$ a subset of indices, that is, $I=(I_1,...,I_m)\subseteq [n]$, $|I|=m$. Given a sample $\dataset$, we will denote by $\dataset(I)=(\omega_{I_1},\ldots,\omega_{I_m})$, a subset of the data, indexed by $I$. A list of subset indices will be denoted by $\boldsymbol{I}$.

Given a dataset $\dataset$, list of subset indices $\boldsymbol{I}$, and a constant $\tau \in \R^+$, we define the empirical cumulative probability distribution at a point $x\in \R$ by 
\begin{equation}\label{eq:U_sub}
U_{\tau}(x; \dataset, \boldsymbol{I})  \coloneqq \frac{1}{|\boldsymbol{I}|}\sum_{I\in \boldsymbol{I}}\mathbbm{1}\{\tau \cdot (\theta(\dataset(I))-\theta(\dataset)) \leq x \}.
\end{equation}
We denote by $\widehat{U}_{n,m}(x)$ the random variable that draws $\dataset \sim P^{(n)}$ and then plugs it into Equation \ref{eq:U_sub}, with $\tau=\tau_m$ and $\boldsymbol{I} = (I_1,\ldots, I_{\binom{n}{m}})$ covering all possible choices of subsets. $\widehat{U}_{n,m}(\cdot)$ is the point-wise estimate of $U_{n}(\cdot)$ simultaneously over all $x \in \R$ \eqref{eq:U_n(x,P)}, estimated using all possible subsamples of size $m$.

We now have all the notation and definitions required to state the main theorem of consistency of non-private subsampling under minimal distributional assumptions.
\begin{theorem}[Adapted from Theorem 2.2.1 in \cite{politis1999subsampling}]\label{thm:non_priv_ci_consis}
Under the standard subsampling setting (\ref{sec:non_priv_subsampling}), we have that $\widehat{U}_{n,m}(x) \pto U(x)$ for any continuity point $x$ of $U(x)$. Furthermore, if $ U(\cdot)$ is  continuous, then $\sup_x |\widehat{U}_{n,m}(x) - U(x)| \pto 0.$
\end{theorem}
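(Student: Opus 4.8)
The plan is to reduce the statement to a standard result on convergence of empirical distribution functions, after first handling the fact that $\widehat U_{n,m}$ is centered at $\theta(\dataset)$ rather than at $\theta^*$. First I would fix a continuity point $x$ of $U$ and introduce the "oracle" quantity $\widehat U^{*}_{n,m}(x) \coloneqq \binom{n}{m}^{-1}\sum_{I}\mathbbm{1}\{\tau_m(\theta(\dataset(I))-\theta^*)\le x\}$, which differs from $\widehat U_{n,m}(x)$ only by the recentering term $\tau_m(\theta(\dataset)-\theta^*)$. Since $\theta(\dataset)-\theta^* = O_p(1/\tau_n)$ by the existence of the limiting distribution, and $\tau_m/\tau_n\to 0$, the shift $\tau_m(\theta(\dataset)-\theta^*)\pto 0$. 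So it suffices to prove $\widehat U^{*}_{n,m}(x)\pto U(x)$ and then transfer this to $\widehat U_{n,m}$ using the continuity of $U$ at $x$ together with a small-$\xi$ sandwiching argument (for any $\xi>0$, $\widehat U^{*}_{n,m}(x-\xi)\le \widehat U_{n,m}(x)\le \widehat U^{*}_{n,m}(x+\xi)$ on the event that the shift is at most $\xi$).

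For the oracle quantity, the key step is a mean/variance computation. Taking expectations over $\dataset\sim P^{(n)}$, linearity gives $\E\,\widehat U^{*}_{n,m}(x) = \prob_{\dataset(I)\sim P^{(m)}}(\tau_m(\theta(\dataset(I))-\theta^*)\le x) = U_{m,\tau_m}(x)$, which converges to $U(x)$ by assumption (and $x$ being a continuity point). For the variance, the summands are indicators over overlapping subsamples, so they are not independent; here I would invoke the classical U-statistic / incomplete-U-statistic bound: the variance of an average of $\binom{n}{m}$ identically-distributed indicators built from $m$-subsets of $n$ points is $O(m/n)$ (Hoeffding's decomposition, using that $m/n\to 0$). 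Hence $\Var(\widehat U^{*}_{n,m}(x))\to 0$, and Chebyshev gives $\widehat U^{*}_{n,m}(x)\pto U(x)$. Combined with the recentering argument above, this yields the pointwise claim.

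For the uniform statement when $U$ is continuous, the plan is a standard Pólya-type / Glivenko–Cantelli upgrade: pointwise convergence in probability of monotone functions to a continuous limit implies uniform convergence in probability. Concretely, fix $\eta>0$, pick a finite grid $x_0<x_1<\cdots<x_K$ with $U(x_{j+1})-U(x_j)<\eta$ (possible by uniform continuity of $U$ and $U(-\infty)=0$, $U(\infty)=1$), apply the pointwise result at each of the $K+1$ grid points with a union bound, and use monotonicity of both $\widehat U_{n,m}$ and $U$ to control the sup over each interval $[x_j,x_{j+1}]$ by $2\eta$ plus the grid errors. Letting $\eta\to 0$ gives $\sup_x|\widehat U_{n,m}(x)-U(x)|\pto 0$.

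The main obstacle is the variance bound for the overlapping-subsample average: unlike a true bootstrap or an i.i.d. average, the indicators share data points, so one genuinely needs the U-statistic variance estimate (or equivalently a Hoeffding projection argument) rather than a naive independence argument — this is exactly where the condition $m/n\to 0$ is used, and it is the technical heart of the proof. Everything else (the $O_p(1/\tau_n)$ recentering, the grid/monotonicity upgrade to uniformity) is routine. I would also note that if one is willing to cite Theorem 2.2.1 of \cite{politis1999subsampling} directly, the entire argument collapses to checking that the stated hypotheses match, but the sketch above is the self-contained route.
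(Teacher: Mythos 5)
Your proposal is correct and follows essentially the same route as the paper, which (citing Theorem 2.2.1 of \cite{politis1999subsampling}) rests on exactly your two ingredients: the recentering $\tau_m(\theta(\dataset)-\theta^*)\pto 0$ because $\tau_m/\tau_n\to 0$, and concentration of the oracle average as a U-statistic of degree $m$ with $m/n\to 0$, followed by the standard monotonicity/grid upgrade to uniformity. Your use of the U-statistic variance bound plus Chebyshev in place of Hoeffding's exponential inequality is an immaterial variation.
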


Theorem \ref{thm:non_priv_ci_consis} ensures uniform convergence in probability (or point-wise if the distribution of $U(\cdot)$ is not continuous) as long as the subsample size $m$ goes to infinity slower than $n$. By Lemma \ref{lem:val_CI_emp_dist} this implies that if $U(\cdot)$ is continuous at the $\alpha/2$ and $1-\alpha/2$ quantiles, then estimating the quantiles from $\widehat{U}_{n,m}(\cdot)$ will give an asymptotically valid and tight $(1-\alpha)$-CI. The proof of Theorem \ref{thm:non_priv_ci_consis} is based on two observations. The first is that $\theta(\dataset)$ converges to $\theta^*$ faster than $\theta(\dataset(I))
 $, which means that we can replace $\theta(\dataset)$ in \eqref{eq:U_sub} by $\theta^*$ (the error is negligible asymptotically). 

Summing over all $\binom{n}{m}$ subsets is computationally burdensome, but one can use a stochastic approximation instead, and use the Dvoretzky, Kiefer, Wolfowitz inequality (see \cite{serfling2009approximation}). Define the random variable $\widehat{U}^T_{n,m}(x)$ given by sampling $\dataset \sim P^{(n)}$, sampling with replacement $T$ subsets $(\boldsymbol{I}=I_{(1)},...,I_{(T)})$ of size $m$, where each subset consists of $m$ elements drawn without replacement, and plugging both into Equation \eqref{eq:U_sub}.

\begin{theorem}[Corollary 2.4.1 in \cite{politis1999subsampling}]\label{thm:stoch_nonpriv_subsample_valid}
    The results of Theorem \ref{thm:non_priv_ci_consis} hold when replacing $\widehat{U}_{n,m}(x)$ by $\widehat{U}^{T}_{n,m}(x)$, as long as $T\rightarrow \infty$.
\end{theorem}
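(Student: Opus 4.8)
The plan is to regard $\widehat{U}^{T}_{n,m}(x)$ as a Monte Carlo approximation of $\widehat{U}_{n,m}(x)$, to control the approximation error by a bound that is \emph{uniform in $\dataset$ and in $n$} so that it vanishes as soon as $T\to\infty$ (no matter the joint rate at which $n$ and $T$ diverge), and then to conclude by combining this with Theorem~\ref{thm:non_priv_ci_consis}. Everything is done by first conditioning on $\dataset$.

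First I would condition on $\dataset$. Given $\dataset$, the $T$ index sets $I_{(1)},\ldots,I_{(T)}$ are drawn independently and uniformly (with replacement) among the $\binom{n}{m}$ subsets of size $m$, so the indicators $\mathbbm{1}\{\tau_m(\theta(\dataset(I_{(t)}))-\theta(\dataset))\le x\}$ are i.i.d.\ Bernoulli with mean exactly $\widehat{U}_{n,m}(x)$, the average over all subsets appearing in \eqref{eq:U_sub}; hence $\E[\widehat{U}^{T}_{n,m}(x)\mid\dataset]=\widehat{U}_{n,m}(x)$, and more strongly, conditionally on $\dataset$ the whole function $x\mapsto\widehat{U}^{T}_{n,m}(x)$ is the empirical CDF of the $T$ i.i.d.\ scalars $\tau_m(\theta(\dataset(I_{(t)}))-\theta(\dataset))$, whose population CDF (still conditionally on $\dataset$) is $x\mapsto\widehat{U}_{n,m}(x)$. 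Applying Hoeffding's inequality for bounded i.i.d.\ variables at a fixed $x$ gives $\prob\big(|\widehat{U}^{T}_{n,m}(x)-\widehat{U}_{n,m}(x)|>\xi\mid\dataset\big)\le 2e^{-2T\xi^2}$, and the Dvoretzky--Kiefer--Wolfowitz inequality gives the same bound for $\prob\big(\sup_x|\widehat{U}^{T}_{n,m}(x)-\widehat{U}_{n,m}(x)|>\xi\mid\dataset\big)$. Since neither bound depends on $\dataset$, integrating out $\dataset$ yields the identical unconditional bounds, so for any $T=T(n)\to\infty$ we get $\widehat{U}^{T}_{n,m}(x)-\widehat{U}_{n,m}(x)\pto 0$ and $\sup_x|\widehat{U}^{T}_{n,m}(x)-\widehat{U}_{n,m}(x)|\pto 0$.

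The conclusion then follows by a triangle inequality. For a continuity point $x$ of $U$, $|\widehat{U}^{T}_{n,m}(x)-U(x)|\le|\widehat{U}^{T}_{n,m}(x)-\widehat{U}_{n,m}(x)|+|\widehat{U}_{n,m}(x)-U(x)|$, where the first term tends to $0$ in probability by the previous paragraph and the second by the pointwise part of Theorem~\ref{thm:non_priv_ci_consis}; hence $\widehat{U}^{T}_{n,m}(x)\pto U(x)$. When $U$ is continuous, apply the same decomposition with suprema over $x$, bounding the first term by the DKW estimate and the second by the uniform part of Theorem~\ref{thm:non_priv_ci_consis}, which gives $\sup_x|\widehat{U}^{T}_{n,m}(x)-U(x)|\pto 0$.

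I expect the only genuine subtlety to be the interchange of the two limiting regimes: because $\dataset\sim P^{(n)}$ itself changes with $n$ while $T$ grows as well, one must make sure the Monte Carlo error is dominated by a bound uniform in both $\dataset$ and $n$, which is exactly what the conditional Hoeffding/DKW argument provides. If one wished to avoid invoking DKW for the uniform statement, one could instead use a grid argument: choose continuity points $y_1<\cdots<y_K$ of $U$ with $U(y_1)\le\epsilon$, $U(y_K)\ge 1-\epsilon$, and consecutive gaps at most $\epsilon$, and then use the monotonicity of $\widehat{U}^{T}_{n,m}(\cdot)$ together with pointwise convergence at the $y_j$'s to sandwich $\sup_x|\widehat{U}^{T}_{n,m}(x)-U(x)|$ by $\max_{j}|\widehat{U}^{T}_{n,m}(y_j)-U(y_j)|+\epsilon$; this Pólya-type step is routine but is where one has to be slightly careful about the tail pieces $x<y_1$ and $x>y_K$.
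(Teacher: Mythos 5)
Your argument is correct and is exactly the route the paper (following Politis--Romano--Wolf) takes: conditionally on $\dataset$ the $T$ draws are i.i.d.\ with conditional CDF $\widehat{U}_{n,m}(\cdot)$, the DKW/Hoeffding bounds are uniform in $\dataset$ and $n$, and the conclusion follows by the triangle inequality together with Theorem~\ref{thm:non_priv_ci_consis}. The paper states only the DKW step, so your write-up is simply a fleshed-out version of the same proof.
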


\subsection{Differential privacy}

We give a very brief introduction to differential privacy here. A more detailed description can be found in Appendix \ref{app:DP}. Given some domain $\Omega$ and a sample size $n \in \N$, we call two datasets $\dataset, \dataset' \in \Omega^n$  neighbors, denoted by $\dataset \sim \dataset'$, if they are identical except for one of their elements.

\begin{definition}[Differential privacy]
        Given $\varepsilon \ge 0$, $\delta \in [0,1]$, a data domain $\Omega$ and some domain of responses $\mathcal{R}$, we say that a mechanism ${\cal M}:\Omega^n \rightarrow \mathcal{R}$ satisfies $(\varepsilon, \delta)$-\textit{differential privacy}, denoted by  $(\varepsilon, \delta)$-DP, if 
        $\prob({\cal M}(\dataset) \in E)\le e^\varepsilon \prob({\cal M}(\dataset') \in E)+\delta $
        for all $\dataset \sim \dataset' \in \Omega^n$ and all $E \subseteq \mathcal{R}$.
        When $\delta = 0$ we say $\cal M$ is \emph{pure} DP, and if $\delta > 0$ $\cal M$ is said to be \emph{approximate} DP.
\end{definition} 

Differential privacy enjoys several useful properties. First, it holds under post-processing; that is, if an algorithm is differentially private, then any follow-up analysis of the algorithm's output without additional access to the dataset cannot degrade the privacy guarantee (Proposition 2.1 in \cite{dwork2014algorithmic}). DP also composes well; that is, if we consecutively apply multiple differentially private mechanisms to the same dataset, the overall privacy loss can be bounded with linear and sublinear increase in privacy parameters for pure and approximate DP, respectively (Lemmas \ref{lem:basic_com}, \ref{lem:adv_com}). A third useful fact is that privacy is amplified by subsampling; that is, if a differentially private mechanism is only applied to a random subset of the dataset, the privacy guarantees of the mechanism are amplified by approximately the chance of each element to appear in the subset (Lemma \ref{lem:amp_by_sub}).

Privately estimating the mean and other moments can be done using noise addition mechanisms such as Laplace and Gaussian (Definitions \ref{Def:Lap_mec}, \ref{def:Gaus_mec}), if the sensitivity of the statistic with respect to a change of a single element can be bounded. The median and other quantiles can be privately estimated using the inverse sensitivity mechanism (\ref{def:inv_sen}), an instantiation of the general exponential mechanism (\ref{def:exp_mech}) which can be used to estimate any quantity if its utility's sensitivity is bounded.

\section{Privacy and accuracy of \privSub}\label{sec:priv_and_acc_privsub}

In this section, we analyze the privacy and accuracy (validity and tightness) of our proposed algorithm. 

\textbf{Privacy} We start by providing privacy guarantees for $\privSub$.
\begin{theorem}\label{thm:eps_delta_total}
    If
    $\mathcal{M}_{\operatorname{full}}$ is $(\varepsilon_{\operatorname{full}},\delta_{\operatorname{full}})$-DP, and $\mathcal{M}_{\operatorname{sub}}$ is $(\varepsilon_{\operatorname{sub}},\delta_{\operatorname{sub}})$-DP, for some $\varepsilon_{\operatorname{full}},\varepsilon_{\operatorname{sub}} \ge 0$; $\delta_{\operatorname{full}}, \delta_{\operatorname{sub}} \in [0,1]$, denoting
   $ \varepsilon_{\operatorname{amp}} = \log\left(1 + \frac{m}{n} \left(e^{\varepsilon_{\operatorname{sub}}} - 1\right)\right)$, $\delta_{\operatorname{amp}}= \frac{m}{n} \delta_{\operatorname{sub}} $,
   $\privSub$ is $(T \varepsilon_{\operatorname{amp}} + \varepsilon_{\operatorname{full}}, T \delta_{\operatorname{amp}} + \delta_{\operatorname{full}})$-DP; and for some $\delta'' > 0$ it is $(\varepsilon, T\delta_{\operatorname{amp}} + \delta'' + \delta_{\operatorname{full}})$-DP, where
   $$ \varepsilon = \varepsilon_{\operatorname{amp}} \cdot \left( \sqrt{2T \log(1/\delta'')}  + T \left(\frac{e^{\varepsilon_{\operatorname{amp}}} - 1}{e^{\varepsilon_{\operatorname{amp}}} + 1} \right) \right)  + \varepsilon_{\operatorname{full}}.$$
\end{theorem}
    
\begin{proof}
    By the basic composition lemma (\ref{lem:basic_com}) and the fact that $\mathcal{M}_{\operatorname{full}}$ (line 1) is $(\varepsilon_{\operatorname{full}}, \delta_{\operatorname{full}})$-DP, it suffices to bound the privacy loss resulting from the repeated calls to $\mathcal{M}_{\operatorname{sub}}$ (line 4). Amplification by subsampling (Lemma \ref{lem:amp_by_sub}) implies each of these calls is $(\varepsilon_{\text{amp}}, \delta_{\text{amp}})$-DP. Combining this with basic or advanced composition (Lemmas \ref{lem:basic_com}, \ref{lem:adv_com}) completes the proof.
\end{proof}
\vspace{-0.2cm}
We note that when $\varepsilon_{\text{sub}} \le 1$ we have $\varepsilon_{\text{amp}} \approx \frac{m}{n}\varepsilon_{\text{sub}}$, which implies $\varepsilon \approx \frac{m T}{n}\varepsilon_{\text{sub}} + \varepsilon_{\text{full}}$ using basic composition, and $\varepsilon \approx \frac{m}{n} \sqrt{2T \cdot \ln(1/\delta'')} \varepsilon_{\text{sub}} + T \left(\frac{m}{n} \right)^2\varepsilon_{\text{sub}}^{2} + \varepsilon_{\text{full}}$ using advanced composition. We focus on regimes with $mT \gtrsim n$, so that the total number of sampled records across all subsamples is at least on the order of the original sample size. Thus, using basic composition implies $\varepsilon_{\text{sub}}$ decreases as $n$ grows, but as long as $m\cdot\sqrt{T}<n$, it increases when using advanced composition. Though decreasing $\varepsilon_{\text{sub}}$ implies an increase in the scale of the perturbation of the mechanism, the increase in sample size mitigates this effect in many parameter regimes.

\textbf{Validity and tightness} We now turn to prove that the private CI based on the quantile method is asymptotically valid and tight, as defined in Definition \ref{def:con_int}. Fixing a dataset $\dataset$ and a sequence of indices subsets $\boldsymbol{I}$, we denote
\begin{equation}\label{eq:cen_stat_sub_CDF_priv_T}
\widetilde{U}_{\tau}(x; \dataset, \boldsymbol{I}) \coloneqq 
\frac{1}{|\boldsymbol{I}|}\sum_{I \in \boldsymbol{I}}
\mathbbm{1}\left\{
\tau \cdot
\left(
{\cal M}_{\mathrm{sub}}(\dataset(I))
-
{\cal M}_{\mathrm{full}}(\dataset)
\right)
\le x
\right\}.
\end{equation}
Unlike Equation~ \eqref{eq:U_sub}, Equation \eqref{eq:cen_stat_sub_CDF_priv_T} is a random quantity, due to $\cal M$'s internal randomness. We use this equation to define the private counterpart of $\widehat{U}_{n,m}(\cdot)$, denoting by $\widetilde{U}_{n,m}^{T}(x)$ the random variable that draws $\dataset \sim P^{(n)}$, samples $T$ subsets $(\boldsymbol{I}=I_{(1)},...,I_{(T)})$ of size $m$ randomly with replacement, then plugs it into Equation \ref{eq:cen_stat_sub_CDF_priv_T}, with $\tau=\tau_m$.

To derive the private equivalent of Theorem \ref{thm:non_priv_ci_consis}, it is essential to have accuracy guarantees for the private estimators as formalized in Definition \ref{def:tau_n_cons}. Intuitively, we require that the privacy-induced perturbation is lower order compared to the convergence rate $\tau_n$; that is, the typical distance between the private and non-private estimators decreases faster than the typical distance between the non-private estimator and the quantity of interest. An analysis of the regime where the perturbation error is larger than the statistical error would require specific accuracy  (e.g., CI width) and knowledge of the exact perturbation mechanism and the statistic, and therefore cannot be analyzed under this general framework. Thus, a similar assumption is made in \cite{chadha2024resampling} (rate-$\sqrt{n}$-resampling consistent, Equations 6 and 7),  \cite{wang2025differentially} and \cite{dette2025gaussian} explicitly consider only Gaussian noise addition with bounded sensitivity, which satisfy this property.

\begin{definition}[$\tau_n$-consistency]\label{def:tau_n_cons} 
    Given a convergence rate $\tau_n\to\infty$, we say a mechanism ${\cal M}$ is \emph{$\tau_n$-consistent} if $\tau_n \cdot (\theta(\dataset)-{\cal M}(\dataset))$ converges in probability to $0$. 
\end{definition}

This property is commonly achieved in various settings. The statistical error of many quantities of interest, such as moments, quantiles, and regression coefficients, scales as $O(1/\sqrt{n})$, while the privacy-induced error typically scales like $O(1/(n\varepsilon))$ \cite{asi2020instance, chadha2024resampling, wang2019differentially}. This ensures that, as the sample size grows, the contribution of the statistical signal dominates the injected noise. We give two examples of private mechanisms that are $\tau_n$-consistent: Noise-addition mechanisms used for mean and other moments estimation (Claim \ref{clm:lap_vanish}) and the inverse sensitivity mechanism \cite{asi2020instance} used for median and other quantile estimations (Claim \ref{clm:expmech_median_fixed}). From this point on, we assume that the private estimators, together with their privacy parameters, satisfy the consistency requirement, and we proceed to analyze the CIs produced by $\privSub$.

\begin{theorem}\label{thm:cons_priv_quan_CI}
The results of Theorem \ref{thm:non_priv_ci_consis} hold when replacing $\widehat{U}_{n,m}(x)$ by $\widetilde U^{T}_{n,m}(x)$ as long as the ${\cal M}_{full}$ and ${\cal M}_{sub}$ are $\tau_n$ and $\tau_m$-consistent (Definition \ref{def:tau_n_cons}), respectively, and $T \rightarrow \infty$.
\end{theorem}

Theorem \ref{thm:cons_priv_quan_CI} immediately implies the following Corollary by invoking Lemma \ref{lem:val_CI_emp_dist}

\begin{corollary}\label{cor:privmed_valid_tight}
    If the limiting distribution $U$ is continuous at its $\alpha/2$ and $1-\alpha/2$ quantiles, then the CI returned by \privSub\ is asymptotically valid and tight.
\end{corollary}

Unlike $U_{\tau}(x; \dataset, \boldsymbol{I})$ (Equation~\eqref{eq:U_sub}), which is a U-statistic, $\widetilde{U}_{\tau}(x; \dataset, \boldsymbol{I})$ (Equation \eqref{eq:cen_stat_sub_CDF_priv_T}) is a random quantity, so the proof technique of Theorem \ref{thm:non_priv_ci_consis} does not apply. But, since this randomness is independent of the data, we can decompose the centered private estimator into the non-private centered statistic that converges to the limiting distribution (Theorem \ref{thm:non_priv_ci_consis}), and additional perturbation terms that tend to $0$ in probability. The complete proof of Theorem \ref{thm:cons_priv_quan_CI} can be found in Appendix~\ref{app:miss_proof}. 

\paragraph*{On choosing $m$ and $T$}
As discussed above, under the minimal distributional assumptions, it is impossible to derive general finite-sample rates, even in the non-private setting, and therefore impossible to optimize the hyperparameters $m$ and $T$ in full generality. Nevertheless, to provide some guidance, we can consider a simpler regime in which the statistic admits a limiting normal approximation, as assumed by other nonparametric methods \cite{chadha2024resampling, wang2025differentially, dette2025gaussian}. Specifically, we consider a variance-based CI that extends the appropriate normal quantile times the estimated standard deviation (SD) to the left and right of the private point estimate. In this setting, one can derive a simple heuristic rate for \emph{estimating the SD} under an additive noise-addition mechanism, as in \cite{wang2025differentially,dette2025gaussian}. The derivation is given in Appendix~\ref{sec:rates}. 

If the goal is to minimize the \emph{expected width} of a normal-approximation CI, then increasing either the number of subsamples $T$ or the subsample size $m$ does not reduce the target variance itself. Rather, larger values of $T$ or $m$ mainly increase the privacy cost, and hence the magnitude of the privacy perturbation (see Figures~\ref{fig:varying_m} and \ref{fig:varying_T}). This suggests the natural constraint $m \cdot T \gtrsim n$, which ensures that the subsampling procedure uses the data at roughly the scale of the original sample without incurring unnecessary privacy cost. Under this constraint, the privacy-induced contribution is minimized by taking $m \propto \frac{n}{T}$.
This scaling is consistent with the choice suggested by \citet{dette2025gaussian}. If, instead, the objective is to minimize the \emph{mean squared error} of the variance estimator itself, then a different hyperparameter regime emerges. The heuristic analysis then favors taking $m=\Theta(n)$, and optimizing over the number of subsamples yields $T \propto \varepsilon^2 n$, matching the scaling obtained in \cite{wang2025differentially}.

In our experiments, we take $T=60$, which is large enough to obtain stable empirical quantiles, (see Figures~\ref{fig:varying_significance_width} and \ref{fig:varying_significance_coverage}),
and $m=n^{2/3}$, as suggested by \cite{politis1999subsampling}, although our numerical results are robust to a wide range of hyperparameters (see Appendix \ref{subsec:full_emp_eval}). In our parameter regime, this choice always satisfies $T \cdot m > n$. More generally, the choice of $m$ reflects a tradeoff: increasing $m$ improves the statistical fidelity of the subsample estimates to the full-sample estimator, but weakens privacy amplification by subsampling, thereby increasing the privacy-induced noise required for each estimate. Consequently, the optimal choice of $m$ can vary substantially depending on whether privacy noise is a dominant source of error or is negligible relative to the statistical error. When the privacy perturbation is significant, or when the statistic is highly non-regular or nonlinear, we suggest taking smaller subsamples, with $m$ closer to $\sqrt{n}$, to benefit more from amplification and to better capture nonstandard behavior. Conversely, when privacy noise is relatively small or the underlying distribution is challenging and larger subsamples are needed for stable estimation, one may benefit from taking $m$ larger.

\section{Empirical evaluation}\label{sec:num_study}

In this section, we summarize the numerical evaluation shown in Figure~\ref{fig:main_fig}. We compare \privSub\ with two existing general nonparametric DP CI methods: the BLB-based approach \blb\ \citep{chadha2024resampling} and the private bootstrap \privBoot\ \citep{wang2025differentially}. For each setting, sample size $n$, and method, we generate $1000$ independent datasets and construct nominal $90\%$ CIs. We report the empirical coverage and the average CI width, measuring validity and finite-sample tightness, respectively. We set $T=60$ for both \privSub\ and \privBoot, and use $m=n^{2/3}$ for \privSub. Other method-specific hyperparameters are chosen following the corresponding papers. When a method requires splitting the privacy budget across subroutines, we use an equal split.\footnote{Optimizing this split can improve finite-sample constants; Appendix~\ref{subsec:full_emp_eval} reports a sensitivity analysis. To avoid per-setting tuning and keep the comparison simple, we use an equal split throughout.} All methods use a privacy budget of $\varepsilon=5$ for all experiments. Since \privBoot\ cannot guarantee pure DP, we set $\delta=10^{-6}$, while other methods guarantee pure DP for median and KS-distance. Additional implementation details and experiments are given in Appendix~\ref{app:detailed_analysis}.

\textbf{Median estimation.}
We first consider CI construction for the median under a truncated normal distribution, a continuous, unimodal distribution, a bit asymmetric due to the truncation, the setting considered in \cite{chadha2024resampling}. In addition to the general methods \privSub, \blb, and \privBoot, we include \expMech\ \citep{drechsler2022nonparametric}, a median-specific private baseline based on privately estimating appropriate quantiles. For the resampling-based methods, the underlying DP median estimate is obtained using the inverse sensitivity mechanism.

\textbf{Logistic regression}.
We next study inference for the slope parameter in a simple logistic regression model with one covariate and an intercept, following \citet{wang2025differentially}. For each observation, we sample $\omega_i \sim \operatorname{Unif}[0,1]$, set $x_i=(1,\omega_i)^\top$, and generate a binary response from the logistic model with data-generating parameter $\theta=(0,0.8)$. The target parameter is the slope coordinate of the population minimizer of the corresponding regularized logistic risk. DP estimates are obtained using output perturbation for regularized logistic regression \citep{wang2019differentially}.

\textbf{The Kolmogorov--Smirnov statistic}.
Finally, we evaluate CI construction for the KS statistic
$
D_n=\sup_x |F_n(x)-F(x)|.
$
We focus on uniformity testing where $F$ is the uniform distribution, where the data are sampled from $F$. This example is meant to test performance for a non-smooth statistic, with a non-normal limiting distribution; therefore, the non-private baseline is subsampling. The KS statistic has global sensitivity $1/n$ \cite{awan2025differentially}, so it can be privatized directly using standard noise-addition mechanisms. 

The runtime of our algorithm matches that of the corresponding non-private resampling baseline, up to the additional cost of the privacy-preserving mechanism. Existing private nonparametric methods \cite{wang2025differentially,dette2025gaussian,chadha2024resampling} exhibit the same type of dependence. The complexity also scales with $T$, the number of subsets, but we propose choosing a small $T$, $60$ in our experiment. In practice, for all parameter regimes we considered, each task ran in at most a few seconds on a standard PC.

{\bf Results summary.}
Across all three experiments, \privSub\ maintains near-nominal coverage, is sometimes mildly conservative, and CI width approaches that of the corresponding non-private baseline as $n$ grows, while \privBoot\ is consistently conservative, resulting in a wider CI, and \blb's coverage is less stable across settings. In the \textbf{left panel} of Figure~\ref{fig:main_fig}, \privSub\ achieves valid coverage with widths close to the non-private baseline. The median-specific private baseline is also competitive, as expected, but is tailored to this particular functional. By contrast, \blb\ under-covers at the sample sizes considered, leading to invalid CIs, while \privBoot\ is substantially more conservative and produces wider intervals. In the \textbf{center panel}, \blb\ performs well and nearly matches the non-private baseline under the favorable hyperparameter choice used in this experiment. However, unlike \privSub, its performance is less robust across the non-smooth and more challenging settings considered here. The \textbf{right panel}, highlights the main advantage of our approach: the KS statistic is a non-smooth functional of the empirical CDF and has a non-normal limiting distribution. While \blb\ is overly conservative and \privBoot\ remains overly conservative, \privSub\ stays close to the non-private subsampling baseline and yields substantially tighter CIs than the competing methods.

These behaviors reflect finite-sample limitations of the two competing resampling approaches. For \blb, bootstrap samples of size $n$ are drawn from subsamples of size $n/s$, so each data point appears approximately $s$ times in a bootstrap replicate. When $n/s$ is small, this can underestimate variability and distort the resulting CI, even though the effect is asymptotically negligible under appropriate choices of $s$. For \privBoot, the main bottleneck is privacy composition: each bootstrap replicate incurs privacy cost, limiting the number of resamples that can be used at a fixed privacy budget. As a result, the finite-sample regimes considered here are far from the asymptotic regimes in which the theoretical guarantees of these methods become accurate.

{\bf Further discussion and empirical evaluations.}
Appendix~\ref{app:detailed_analysis} provides additional empirical evaluations and discussion. Specifically, we include a further comparison between \privSub\ and \privBoot\ under approximate DP (Figure~\ref{fig:ks_gaussian_noise}). We test the robustness of our results by experimenting with the setup of Section~\ref{sec:num_study} for different privacy budgets ($\varepsilon=2,5,8$, Figures~\ref{fig:median_comparison_normal}-\ref{fig:ks_comparison_uniform}), subset size $m$ (Figure~\ref{fig:varying_m}), number of subsets $T$ (Figure~\ref{fig:varying_T}), splits of the privacy budget (Figure~\ref{fig:varying_epsilon_split}), and confidence level (Figures~\ref{fig:varying_significance_width} and \ref{fig:varying_significance_coverage}). 
We conclude with an illustration of the convergence of the empirical CDF of \privSub\ to its limiting distribution, as established in Theorem~\ref{thm:cons_priv_quan_CI} (Figure~\ref{fig:cdf}).

\section{Discussion}

We propose a general subsampling-based framework for constructing differentially private nonparametric confidence intervals. The main advantage of the method is its modularity: under mild conditions, it can be applied to any private estimator as a black box. Its validity relies only on the existence of a limiting distribution and a known convergence rate for the underlying statistic, rather than on normality or smoothness assumptions on the target functional. This makes the procedure applicable in settings where bootstrap-based or normal-approximation methods fail.

Our theoretical results show that, under minimal distributional assumptions, the private subsampling distribution consistently approximates the sampling distribution of the target statistic, leading to asymptotically valid and tight CIs. The empirical results suggest that these theoretical guarantees translate into practical performance: across median estimation, logistic regression, and KS-distance estimation, the method achieves reliable coverage and competitive, often smaller, widths compared with existing general-purpose DP CI methods.

Several directions remain open. As in classical subsampling, the method requires the convergence rate used for rescaling; it could be interesting to develop data-driven methods for estimating the appropriate convergence rate. Similarly, sharper finite-sample guidance for choosing the subset size $m$ and the number of subsets $T$ could further improve practical performance. While the procedure requires repeated evaluation of private estimators on sampled subsets, in practice the dominant constraint is typically privacy composition rather than computational cost. Developing principled methods for balancing these tradeoffs is an interesting direction for future work, especially for complex private estimators and nonstandard statistical targets.

\begin{ack}
We are grateful to Karan Chadha, John Duchi, and Rohith Kuditipudi for helpful conversations regarding their paper \emph{Resampling methods for private statistical inference} \cite{chadha2024resampling}. We also thank Joerg Drechsler, Ira Globus-Harris, Audra McMillan, Jayshree Sarathy, and Adam Smith for discussions about their paper \emph{nonparametric Differentially Private Confidence Intervals for the Median} \cite{drechsler2022nonparametric}. We appreciate that the authors of both papers shared code accompanying their work and generously spent time with us to help us better understand their work and run their code. Any remaining errors are our own.

This work was supported in part by Simons Foundation Collaboration 733792, Israel Science Foundation (ISF) grant 2861/20, a gift from Apple, a grant from the Israeli Council of Higher Education, and ERC grant 101125913. In addition, Shenfeld was supported in part by an Apple Scholars in AIML fellowship. Views and opinions expressed are however those of the author(s) only and do not necessarily reflect those of the European Union or the European Research Council Executive Agency. Neither the European Union nor the granting authority can be held responsible for them.
\end{ack}

\newpage
\bibliographystyle{unsrtnat}
\bibliography{Main_bib}


\clearpage
\appendix

\section{Differential privacy}\label{app:DP}

Let $\datadom$ be an abstract data domain. A dataset of size $n$ is a collection of $n$ individuals' data records: $\dataset=\{\dataset_i\}_{i=1}^n \in \datadom^n$. 
We assume that $n$ is public; that is, we do not protect the size of the dataset.
We call two datasets $\dataset, \dataset' \in \datadom^n$ neighbors, denoted by $\dataset \sim \dataset'$, if they are identical except in one of their records. 

\begin{definition}[Differential privacy]
        Given $\varepsilon \ge 0$, $\delta \in [0,1]$, a data domain $\datadom$ and some domain of responses $\mathcal{R}$, we say that a mechanism ${\cal M}:\datadom^n \rightarrow \mathcal{R}$ satisfies $(\varepsilon, \delta)$-\textit{differential privacy}, denoted by  $(\varepsilon, \delta)$-DP, if 
        $$\prob({\cal M}(\dataset) \in E)\le e^\varepsilon \prob({\cal M}(\dataset') \in E)+\delta $$
        for all $\dataset \sim \dataset' \in \datadom^n$ and all $E \subseteq \mathcal{R}$.
        When $\delta = 0$ we say $\cal M$ is \emph{pure} DP, and if $\delta > 0$ $\cal M$ is said to be \emph{approximate} DP.
\end{definition} 

We detail a few properties of differential privacy that we use throughout the paper. The first is composition.

\begin{lemma}[Basic composition, see Theorem 3.14 in \cite{dwork2014algorithmic}]\label{lem:basic_com}
Let ${\cal M}_1, \ldots, {\cal M}_k$ be randomized algorithms, where each ${\cal M}_i$ satisfies $(\varepsilon, \delta)$-differential privacy. Then the sequence of algorithms $({\cal M}_1, \ldots, {\cal M}_k)$ satisfies $(k\varepsilon, k\delta)$-differential privacy.
\end{lemma}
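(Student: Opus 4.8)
The plan is to prove the claim by induction on $k$, reducing everything to the two-mechanism case. The base case $k=1$ is exactly the hypothesis. For the inductive step it suffices to show: if ${\cal A}$ is $(\varepsilon_a,\delta_a)$-DP and ${\cal B}$ is $(\varepsilon_b,\delta_b)$-DP (run with independent coins), then the joint release $({\cal A},{\cal B})$ is $(\varepsilon_a+\varepsilon_b,\delta_a+\delta_b)$-DP; applying this with ${\cal A}=({\cal M}_1,\dots,{\cal M}_{k-1})$, which is $((k-1)\varepsilon,(k-1)\delta)$-DP by the inductive hypothesis, and ${\cal B}={\cal M}_k$ then closes the induction. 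The same conditioning argument handles adaptive composition as well (where ${\cal M}_k$ may depend on the earlier outputs), as long as ${\cal M}_k$ is $(\varepsilon,\delta)$-DP for every fixing of those outputs.

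For the two-mechanism step I would fix neighbors $\dataset\sim\dataset'$ and an event $E$ in the product response space, and slice $E$ along its first coordinate: for each first-coordinate value $r$ let $E_r=\{r':(r,r')\in E\}$. Independence of the two mechanisms' randomness gives $\prob\bigl(({\cal A}(\dataset),{\cal B}(\dataset))\in E\bigr)=\E_{r\sim{\cal A}(\dataset)}\bigl[\prob({\cal B}(\dataset)\in E_r)\bigr]$. Applying the $(\varepsilon_b,\delta_b)$-guarantee of ${\cal B}$ pointwise in $r$ bounds the inner probability by $e^{\varepsilon_b}\prob({\cal B}(\dataset')\in E_r)+\delta_b$, so the whole quantity is at most $e^{\varepsilon_b}\E_{r\sim{\cal A}(\dataset)}[g(r)]+\delta_b$, where $g(r):=\prob({\cal B}(\dataset')\in E_r)\in[0,1]$. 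The remaining task is to swap the outer expectation from ${\cal A}(\dataset)$ to ${\cal A}(\dataset')$; since $g$ is $[0,1]$-valued rather than an indicator, I would do this via the layer-cake identity $\E_{r\sim{\cal A}(\dataset)}[g(r)]=\int_0^1\prob\bigl({\cal A}(\dataset)\in\{g>t\}\bigr)\,dt$, apply the $(\varepsilon_a,\delta_a)$-guarantee of ${\cal A}$ to each event $\{g>t\}$, and integrate (using $\int_0^1\delta_a\,dt=\delta_a$). Chaining the two estimates and recognizing $\E_{r\sim{\cal A}(\dataset')}[g(r)]=\prob\bigl(({\cal A}(\dataset'),{\cal B}(\dataset'))\in E\bigr)$ gives the comparison, and swapping $\dataset\leftrightarrow\dataset'$ gives the other direction.

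The step I expect to be the main obstacle is the bookkeeping of the additive slack. Carried out exactly as above, composing ``add $\delta_b$ inside'' with ``add $\delta_a$ outside'' produces the slightly loose term $e^{\varepsilon_a}\delta_b+\delta_a$ in place of $\delta_a+\delta_b$. To obtain the stated $(k\varepsilon,k\delta)$ on the nose, I would instead invoke the equivalent description of $(\varepsilon,\delta)$-DP via the hockey-stick divergence (equivalently, via a high-probability bound on the privacy-loss random variable): the output laws of each ${\cal M}_i$ on neighbors can be written as $(1-\delta)p_i'+\delta p_i''$ and $(1-\delta)q_i'+\delta q_i''$, where the ``main'' components $p_i',q_i'$ obey the pure likelihood-ratio bound $e^{-\varepsilon}\le p_i'/q_i'\le e^{\varepsilon}$. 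Taking products, the all-main term composes by the trivial pure-DP argument (the $e^{\varepsilon}$ factors simply multiply to $e^{k\varepsilon}$), while the total mass of every product term that involves at least one exceptional component is at most $1-(1-\delta)^k\le k\delta$ and is absorbed directly into the additive budget. This is precisely the argument behind Theorem 3.14 in \cite{dwork2014algorithmic}; in the pure case $\delta=0$ the whole thing degenerates to a one-line multiplication of density ratios, with no layer-cake or decomposition needed.
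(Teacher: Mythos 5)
The paper never proves this lemma; it is imported as-is from \citet{dwork2014algorithmic}, so there is no internal argument to compare against, and your proof must stand on its own --- which it essentially does. Your diagnosis of the only real difficulty is correct: the naive condition-on-the-first-output argument leaves one of the additive terms inflated by an $e^{\varepsilon}$ factor, and your repair via the mixture characterization of $(\varepsilon,\delta)$-DP (each pair of neighboring output laws written as $(1-\delta)$ times pure-$\varepsilon$-indistinguishable components plus $\delta$ times arbitrary ones, with the exceptional mass of the $k$-fold product bounded by $1-(1-\delta)^k\le k\delta$) is a correct and standard route, even giving a slightly stronger slack than $k\delta$. Two remarks. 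First, the heavy machinery is avoidable: your own conditioning argument already gives $\delta_a+\delta_b$ exactly if, before swapping the outer expectation, you cap the integrand at one. Writing $g'(r)=\prob\bigl(\mathcal{B}(\dataset')\in E_r\bigr)$ and $h=\min\{1,e^{\varepsilon_b}g'\}\in[0,1]$, the inner step gives $g\le h+\delta_b$ pointwise (since $g\le 1$ and $\min\{1,x+\delta\}\le\min\{1,x\}+\delta$), and the layer-cake step applied to $h$ yields $\E_{r\sim\mathcal{A}(\dataset)}[h]\le e^{\varepsilon_a}\E_{r\sim\mathcal{A}(\dataset')}[h]+\delta_a\le e^{\varepsilon_a+\varepsilon_b}\E_{r\sim\mathcal{A}(\dataset')}[g']+\delta_a$; the $e^{\varepsilon}$-inflated delta never appears, and the induction closes with $(k\varepsilon,k\delta)$ on the nose. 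Second, two small caveats on your write-up: the decomposition lemma you invoke is itself a nontrivial result of the same provenance as the statement being proved (the Dwork--Rothblum--Vadhan / Kairouz--Oh--Viswanath characterization), so you are trading one cited lemma for another rather than giving a self-contained proof; and the product-expansion argument as you state it is non-adaptive --- adequate for the lemma as stated and as used in \privSub, but your passing claim that adaptive composition is covered is justified only by the conditioning route (with the cap fix), not by the product decomposition.
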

A more delicate analysis gives a better asymptotic dependence in $\varepsilon$, that is, it composes like the square root of the compositions, but it comes with a non-zero 'failure probability' $
\delta$.
\begin{lemma}[Advanced composition, \cite{dwork2010boosting}]\label{lem:adv_com}
Let ${\cal M}_1, \ldots, {\cal M}_k$ be randomized algorithms, where each ${\cal M}_i$ satisfies $(\varepsilon, \delta)$-differential privacy. Then for any $\delta' > 0$, the sequence of algorithms $({\cal M}_1, \ldots, {\cal M}_k)$ satisfies $(\varepsilon', k\delta + \delta')$-differential privacy, where
$$
\varepsilon' = \varepsilon \left( \sqrt{2k \log(1/\delta')}  + k \left(\frac{e^{\varepsilon} - 1}{e^{\varepsilon} + 1}\right) \right).
$$
\end{lemma}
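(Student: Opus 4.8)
The plan is to follow the now-standard ``privacy loss random variable'' argument of \cite{dwork2010boosting}. Fix neighboring datasets $\dataset \sim \dataset'$ and write $M = (M_1,\dots,M_k)$ for the composed (possibly adaptive) mechanism. For an output $y = (y_1,\dots,y_k)$ define the privacy loss $L = \sum_{i=1}^k L_i$, where $L_i = \ln\big(\prob[M_i = y_i \mid y_{<i}, \dataset]/\prob[M_i = y_i \mid y_{<i}, \dataset']\big)$. The reduction I would use throughout is the elementary (two-sided) fact that to prove $(\varepsilon',\delta_{\mathrm{tot}})$-DP it suffices to show $\prob_{y\sim M(\dataset)}[L(y) > \varepsilon'] \le \delta_{\mathrm{tot}}$, since for any event $E$ one has $\prob[M(\dataset)\in E] \le \prob[M(\dataset)\in E,\, L\le\varepsilon'] + \prob[L>\varepsilon'] \le e^{\varepsilon'}\prob[M(\dataset')\in E] + \prob[L>\varepsilon']$, and symmetrically with $\dataset,\dataset'$ swapped.

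First I would reduce to the pure case $\delta = 0$. Using the standard characterization that an $(\varepsilon,\delta)$-DP mechanism, restricted to the complement of an output event of probability at most $\delta$, is $(\varepsilon,0)$-indistinguishable on neighbors, I can couple each $M_i$ with an $(\varepsilon,0)$-DP mechanism that agrees with it outside a ``bad'' event of probability $\le \delta$; applying this consistently along the adaptive sequence and taking a union bound over $i=1,\dots,k$ shows that, outside an event of probability $\le k\delta$, the composed loss coincides with that of a pure $\varepsilon$-DP composition. It therefore suffices to prove $\prob[L > \varepsilon'] \le \delta'$ when every $M_i$ is $(\varepsilon,0)$-DP, and then add $k\delta$ to the failure probability at the end.

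In the pure case, each conditional loss satisfies $|L_i| \le \varepsilon$ deterministically, and the key quantitative input is the ``expected privacy loss'' bound: for any $(\varepsilon,0)$-DP mechanism, $\E_{y\sim M_i(\dataset)}[L_i \mid y_{<i}] \le \mu := \varepsilon\,\tfrac{e^\varepsilon-1}{e^\varepsilon+1}$, which I would establish by reducing to two-point output distributions with likelihood ratios pinned at $e^{\pm\varepsilon}$ followed by an elementary optimization. Treating $\sum_i (L_i - \E[L_i\mid y_{<i}])$ as a bounded-increment martingale (this is what makes the argument valid under adaptive choice of the $M_i$), a moment-generating-function / Azuma-type estimate gives $\prob[L \ge k\mu + t] \le \exp(-t^2/(2k\varepsilon^2))$; setting the right-hand side equal to $\delta'$ yields $t = \varepsilon\sqrt{2k\ln(1/\delta')}$, hence $L \le k\mu + \varepsilon\sqrt{2k\ln(1/\delta')} = \varepsilon'$ except with probability $\delta'$. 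Combining with the reduction step gives $(\varepsilon', \delta' + k\delta)$-DP, which is exactly the claimed bound.

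I expect the main obstacle to be the concentration step: one must verify the sub-Gaussian-with-drift control of the privacy loss carefully, since a naive application of Hoeffding to the increments in $[-\varepsilon,\varepsilon]$ gives the right order but the constant in the exponent and the interaction with the drift $\mu$ need the MGF computation to come out to exactly $\exp(-t^2/(2k\varepsilon^2))$. The $\delta>0$ reduction is the other delicate point, since the coupling/``bad event'' lemma must be applied along the adaptive sequence so that the bad events union to precisely $k\delta$ without any blow-up.
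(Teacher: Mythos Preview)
The paper does not prove this lemma at all: it is stated in Appendix~\ref{app:DP} as a cited result from \cite{dwork2010boosting}, with no accompanying argument. So there is no ``paper's own proof'' to compare against; your sketch is simply supplying the standard proof that the citation points to.

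That said, your outline is the standard one and is essentially correct. The privacy-loss martingale decomposition, the reduction from $(\varepsilon,\delta)$ to $(\varepsilon,0)$ via a per-step bad event of mass $\le\delta$ (union-bounded to $k\delta$), the drift bound $\E[L_i\mid y_{<i}]\le \varepsilon\,\tfrac{e^\varepsilon-1}{e^\varepsilon+1}$, and the Azuma step with increments in $[-\varepsilon,\varepsilon]$ all match the argument in \cite{dwork2010boosting} (the tight drift constant $\varepsilon\tanh(\varepsilon/2)$ rather than the looser $\varepsilon(e^\varepsilon-1)$ is what produces the exact form stated here). Your self-identified obstacles are real but routine: the MGF/Azuma constant works out because the increments are bounded in an interval of width $2\varepsilon$, and the adaptive $\delta$-reduction is handled by the ``simulation'' lemma in the original paper (or equivalently via the Kairouz--Oh--Viswanath dominating-pair argument). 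Nothing is missing at the level of a sketch.
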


In this paper, we leverage the privacy amplification given by the subsampling procedure. Luckily, we have very explicit and tight results for sampling without replacement.

\begin{lemma}[Privacy amplification by subsampling, Theorem 9 in {\citet{balle2018privacy}}]\label{lem:amp_by_sub}
Let $\mathcal{M}$ be an $(\varepsilon, \delta)$-differentially private mechanism. Let $\mathcal{M}'$ be the mechanism that, given a dataset of size $n$, selects a subset of $m$ individuals uniformly at random \emph{without replacement}, and applies $\mathcal{M}$ to that subset. Then $\mathcal{M}'$ satisfies $(\varepsilon', \delta')$-differential privacy, where:
$$
\varepsilon' = \log\left(1 + \frac{m}{n} \left(e^\varepsilon - 1\right)\right), \quad \delta' = \frac{m}{n} \cdot \delta.
$$
\end{lemma}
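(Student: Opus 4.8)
I will reprove Lemma~\ref{lem:amp_by_sub} (privacy amplification by sampling without replacement). The plan is to unfold $\mathcal{M}'$ as a two–stage process and exploit the mixture structure created by sampling without replacement. Fix neighboring datasets $\dataset \sim \dataset'$ of size $n$ that differ only in the record at some position $i^*$, with value $a$ in $\dataset$ and $a'$ in $\dataset'$, and write $q \coloneqq m/n$. Let $S \subseteq [n]$, $|S| = m$, be the random index set drawn uniformly without replacement. Since the sampling ignores the data, $\prob(i^* \in S) = q$, and conditioned on $i^* \notin S$ the subsampled datasets coincide ($\dataset_S = \dataset'_S$, because $\dataset$ and $\dataset'$ agree off $i^*$). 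Hence there are distributions $\nu_0,\nu_1,\nu_1'$ on $\mathcal{R}$ with
\[
\mathcal{M}'(\dataset) \ \text{distributed as}\ (1-q)\,\nu_0 + q\,\nu_1, \qquad \mathcal{M}'(\dataset') \ \text{distributed as}\ (1-q)\,\nu_0 + q\,\nu_1',
\]
where $\nu_0$ is the common output law conditioned on $i^* \notin S$, and $\nu_1$ (resp.\ $\nu_1'$) is the output law conditioned on $i^* \in S$ when $\mathcal{M}$ is run on $\dataset$ (resp.\ $\dataset'$). It then suffices to show this pair of mixtures is $(\varepsilon',\delta')$-indistinguishable.

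The crux --- and the step I expect to be the main obstacle --- is showing that $\nu_0$, $\nu_1$, $\nu_1'$ are \emph{pairwise} $(\varepsilon,\delta)$-indistinguishable. Closeness of $\nu_1$ and $\nu_1'$ is routine: conditioned on $i^*\in S$ the remaining $m-1$ indices are equidistributed, and for each fixed choice of them $\dataset_S$ and $\dataset'_S$ are neighbors, so $\mathcal{M}$'s guarantee followed by averaging over that shared randomness gives the bound. What is less obvious is comparing $\nu_1$ (and $\nu_1'$) with $\nu_0$: the two inputs are a size-$m$ subsample containing $i^*$ versus one omitting it, which a priori can differ in as many as $m$ records. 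The resolution is a coupling: write the first index set as $\{i^*\}\cup S'$ with $S'$ a uniform $(m-1)$-subset of $[n]\setminus\{i^*\}$, draw an auxiliary index $j$ uniformly from the $n-m$ remaining indices, and pair it with the index set $S'\cup\{j\}$, which is exactly uniform over size-$m$ subsets of $[n]\setminus\{i^*\}$. Under this coupling the two inputs share the $m-1$ records indexed by $S'$ and differ only in a single slot ($a$ versus $\omega_j$), i.e.\ they are neighbors, so $\mathcal{M}$'s guarantee and averaging over $(S',j)$ give $(\varepsilon,\delta)$-closeness of $\nu_1$ to $\nu_0$; the same argument with $a$ replaced by $a'$ handles $\nu_1'$ and $\nu_0$. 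Omitting this observation and merely splitting the mixtures would only recover the factor $e^{\varepsilon}$ rather than the amplified $e^{\varepsilon'}$.

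With the pairwise bounds in hand, the conclusion is an elementary computation (this is the ``advanced joint convexity'' of the hockey-stick divergence specialized to mixtures with a shared component). Fix $E \subseteq \mathcal{R}$ and set $p_0 = \nu_0(E)$, $p_1 = \nu_1(E)$, $p_1' = \nu_1'(E)$. Combining $p_1 \le e^{\varepsilon} p_1' + \delta$ and $p_1 \le e^{\varepsilon} p_0 + \delta$ with weight $\lambda \coloneqq e^{\varepsilon'-\varepsilon}\in(0,1]$ on the first (here $\varepsilon'\le\varepsilon$ because $1+q(e^\varepsilon-1)\le e^\varepsilon$) yields $p_1 \le e^{\varepsilon}\big(\lambda p_1' + (1-\lambda)p_0\big)+\delta$. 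Substituting this into
\[
\prob(\mathcal{M}'(\dataset)\in E) - e^{\varepsilon'}\,\prob(\mathcal{M}'(\dataset')\in E) \;=\; (1-q)(1-e^{\varepsilon'})\,p_0 \;+\; q\,p_1 \;-\; q\,e^{\varepsilon'}\,p_1'
\]
and using $e^{\varepsilon'} = 1 + q(e^{\varepsilon}-1)$ and $e^{\varepsilon}-e^{\varepsilon'} = (1-q)(e^{\varepsilon}-1)$, one checks that the coefficients of $p_0$ and of $p_1'$ cancel identically, leaving at most $q\delta = \delta'$. This is exactly the $(\varepsilon',\delta')$-DP inequality, and the reverse inequality follows by repeating the argument with $\dataset$ and $\dataset'$ swapped, i.e.\ with $\nu_1$ and $\nu_1'$ exchanged --- which is why all three pairwise bounds are needed. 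Plugging in $q = m/n$ gives the stated $\varepsilon'$ and $\delta'$.
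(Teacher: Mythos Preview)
The paper does not actually contain a proof of this lemma: it is stated in Appendix~\ref{app:DP} purely as a citation of Theorem~9 in \citet{balle2018privacy}, with no argument given. So there is no ``paper's own proof'' to compare against.

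Your proof is correct and is essentially the argument of \citet{balle2018privacy}: the mixture decomposition according to whether the differing index $i^*$ is sampled, the coupling that exhibits $\nu_0$ and $\nu_1$ (and $\nu_0$ and $\nu_1'$) as averages of $\mathcal{M}$ applied to neighboring size-$m$ datasets, and the ``advanced joint convexity'' computation with $\lambda=e^{\varepsilon'-\varepsilon}$ that makes the $p_0$ and $p_1'$ coefficients cancel. The algebra checks out (in particular $(1-q)(1-e^{\varepsilon'})+qe^{\varepsilon}(1-\lambda)=1+q(e^{\varepsilon}-1)-e^{\varepsilon'}=0$ and $qe^{\varepsilon}\lambda-qe^{\varepsilon'}=0$), and your coupling for $\nu_0$ versus $\nu_1$ is valid: $S'\cup\{j\}$ with $S'$ uniform of size $m-1$ in $[n]\setminus\{i^*\}$ and $j$ uniform in the remaining $n-m$ indices is indeed uniform over size-$m$ subsets of $[n]\setminus\{i^*\}$.
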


\subsection{Noise addition mechanisms}

One way to achieve DP for algorithms that output numbers (or vectors of numbers) is by noise addition mechanisms. In order to define them, we first define a quantity that is called \textit{Global Sensitivity} see Definition \ref{def:global_sen}, that measures the maximal change an output of a query can change (in some norm), when we change one individual, for any neighboring datasets.

\begin{definition}[Global sensitivity]\label{def:global_sen}
    Given a data domain $\datadom$ and a function $f: \datadom^n\rightarrow \mathbb{R}$, the global sensitivity of $f$ is given by \[\Delta_{f} = \underset{\substack{\dataset,\dataset'\in \datadom^n \\ 
    \dataset \sim \dataset'}}{\max}|f(\dataset)-f(\dataset')|.\]
\end{definition}

The Laplace mechanism (see Definition \ref{Def:Lap_mec}) is one of the classic methods to obtain $(\varepsilon,0)$-DP. Simply put, it adds zero-mean Laplace noise to a statistic, with variance proportional to the global sensitivity. 

\begin{definition}[Laplace mechanism]\label{Def:Lap_mec}
   Consider a data domain $\datadom$ and a function $f: \datadom^n\rightarrow \mathbb{R}$. The Laplace mechanism, denoted by ${\cal M}^{Lap}_f$, simply adds independent Laplace noise to the results of $f$ on a dataset; that is,
    $$ {\cal M}^{Lap}_f(\dataset) = f(\dataset) + Y, \text{ where } Y \sim \operatorname{Lap}(b), \dataset \in \datadom^n,$$
   where $\operatorname{Lap}(b)$ denotes a distribution with probability density function $ p(x) = \frac{1}{2b}\exp\left(-\frac{|x|}{b}\right).$ 
\end{definition}

\begin{lemma}[Theorem 3.6 in \cite{dwork2014algorithmic}]\label{lem:Lap_DP}
    Given some $\varepsilon>0$, the Laplace Mechanism with $b=\Delta_f/\varepsilon$ is $(\varepsilon,0)$-DP, where $\Delta_f$ is the global sensitivity of $f$ (see Definition \ref{def:global_sen}).
\end{lemma}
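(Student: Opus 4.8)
The plan is to verify the $(\varepsilon,0)$-DP definition directly, by bounding the ratio of the two output densities pointwise and then integrating. First I would fix an arbitrary pair of neighbors $\dataset \sim \dataset' \in \datadom^n$ and set $a = f(\dataset)$, $a' = f(\dataset')$. Since $\mathcal{M}^{Lap}_f$ outputs $f(\cdot) + Y$ with $Y \sim \mathrm{Lap}(b)$ and $b = \Delta_f/\varepsilon$, the random variable $\mathcal{M}^{Lap}_f(\dataset)$ has density $p_a(z) = \tfrac{1}{2b}\exp(-|z-a|/b)$ on $\R$, and $\mathcal{M}^{Lap}_f(\dataset')$ has density $p_{a'}(z) = \tfrac{1}{2b}\exp(-|z-a'|/b)$.

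Second, I would bound the likelihood ratio at every point $z \in \R$:
\[
\frac{p_a(z)}{p_{a'}(z)} = \exp\!\left(\frac{|z-a'| - |z-a|}{b}\right) \le \exp\!\left(\frac{|a-a'|}{b}\right) \le \exp\!\left(\frac{\Delta_f}{b}\right) = e^{\varepsilon},
\]
where the first inequality is the reverse triangle inequality $|z-a'| - |z-a| \le |a-a'|$, the second uses the definition of global sensitivity (Definition \ref{def:global_sen}) together with $\dataset \sim \dataset'$, and the last equality is the choice $b = \Delta_f/\varepsilon$.

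Third, I would integrate this pointwise bound over an arbitrary measurable event $E \subseteq \R$:
\[
\prob\big(\mathcal{M}^{Lap}_f(\dataset) \in E\big) = \int_E p_a(z)\,dz \;\le\; e^{\varepsilon}\int_E p_{a'}(z)\,dz \;=\; e^{\varepsilon}\,\prob\big(\mathcal{M}^{Lap}_f(\dataset') \in E\big).
\]
This is exactly the $(\varepsilon, 0)$-DP inequality (the additive term is $\delta = 0$), and since $\dataset \sim \dataset'$ and $E$ were arbitrary, the mechanism is $(\varepsilon, 0)$-DP.

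The argument is essentially routine; the only step requiring any care is the pointwise density-ratio bound, where one must combine the reverse triangle inequality with the definition of global sensitivity — everything else is just unwinding definitions and a monotone integral bound. A minor remark is that the same computation applies coordinatewise to the multivariate Laplace mechanism with $\ell_1$-sensitivity, but for the one-dimensional statement here this generalization is not needed.
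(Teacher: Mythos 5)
Your proof is correct and is the standard density-ratio argument, which is exactly the proof of Theorem 3.6 in the cited reference (the paper itself does not reprove this lemma but imports it from \cite{dwork2014algorithmic}). The pointwise bound via the reverse triangle inequality, the global-sensitivity bound, and the integration over an arbitrary event $E$ are all handled correctly, so nothing further is needed.
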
 

The Gaussian Mechanism (Definition \ref{def:Gaus_mec}), similar to the Laplace Mechanism, adds zero-mean Gaussian noise to a statistic, with variance proportional to the global sensitivity.

\begin{definition}[Gaussian mechanism]\label{def:Gaus_mec}
    Consider a data domain $\datadom$ and a function $f: \datadom^n\rightarrow \mathbb{R}$. The Gaussian mechanism, denoted by ${\cal M}^{Gaus}_f$, simply adds independent Gaussian noise to the results of $f$ on a dataset; that is,
    $$ {\cal M}^{Gaus}_f(\dataset) = f(\dataset) + Y, \text{ where } Y \sim \text{N}(0, \sigma^2), \ \dataset \in \datadom^n.$$
\end{definition}

We have the following Theorem given in \cite{dwork2014algorithmic} that establishes privacy guarantees for the Gaussian mechanism

\begin{lemma}[Theorem A.1. in \cite{dwork2014algorithmic}]\label{lem:Gaus_mec_dwork}
   Let $f: \datadom^n \rightarrow \mathbb{R}$ be a function with global sensitivity $\Delta_f$. For any $\varepsilon \in (0,1)$, $\delta \in (0,1)$, and $\dataset \in \datadom$, the Gaussian mechanism ${\cal M}_f^{Gaus}(\dataset)=f(\dataset)+Z$ where $Z\sim \N(0, \sigma^2)$ is $(\varepsilon,\delta)$-DP with 
   $\sigma^2 = \frac{2\ln(1.25/\delta)\Delta_f^2}{\varepsilon^2}.$
\end{lemma}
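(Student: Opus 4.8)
The plan is to run the classical privacy-loss analysis of the Gaussian mechanism. The key reduction is the standard sufficient condition for approximate DP: if for every pair of neighbors $\dataset\sim\dataset'$ the \emph{privacy loss random variable} $\mathcal{L}(y)\coloneqq\ln\bigl(p_{{\cal M}(\dataset)}(y)/p_{{\cal M}(\dataset')}(y)\bigr)$, evaluated at a fresh draw $y\sim{\cal M}(\dataset)$, satisfies $\prob\bigl(\mathcal{L}(y)>\varepsilon\bigr)\le\delta$, then ${\cal M}$ is $(\varepsilon,\delta)$-DP. I would prove this by splitting any event $E\subseteq\mathcal{R}$ according to whether $\mathcal{L}\le\varepsilon$: on the low-loss part the density of ${\cal M}(\dataset)$ is pointwise at most $e^\varepsilon$ times that of ${\cal M}(\dataset')$, so that contribution is bounded by $e^\varepsilon\prob({\cal M}(\dataset')\in E)$, while the high-loss part contributes at most $\delta$ by assumption. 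Hence it suffices to control the upper tail of $\mathcal{L}$.

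Next I would compute $\mathcal{L}$ in closed form. Write $y=f(\dataset)+z$ with $z\sim N(0,\sigma^2)$ and set $\Delta\coloneqq f(\dataset)-f(\dataset')$, so $|\Delta|\le\Delta_f$ by the definition of global sensitivity (Definition \ref{def:global_sen}). Expanding the two Gaussian densities and simplifying,
$$\mathcal{L}=\frac{(y-f(\dataset'))^2-(y-f(\dataset))^2}{2\sigma^2}=\frac{z\Delta}{\sigma^2}+\frac{\Delta^2}{2\sigma^2},$$
so under $y\sim{\cal M}(\dataset)$ the loss is itself Gaussian, $\mathcal{L}\sim N\!\bigl(\tfrac{\Delta^2}{2\sigma^2},\tfrac{\Delta^2}{\sigma^2}\bigr)$. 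The event $\{\mathcal{L}>\varepsilon\}$ is equivalent to $\{z\Delta>\sigma^2\varepsilon-\Delta^2/2\}$; assuming $\Delta>0$ (the case $\Delta<0$ is symmetric under $z\mapsto-z$, and $\Delta=0$ is trivial) this reads $\{z>\sigma^2\varepsilon/\Delta-\Delta/2\}$. The right-hand side is decreasing in $\Delta$, so the tail probability is maximized at the worst case $\Delta=\Delta_f$, and it remains to bound $\prob(z>t)$ for $t=\sigma^2\varepsilon/\Delta_f-\Delta_f/2$ with $z\sim N(0,\sigma^2)$.

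Finally I would substitute $\sigma=\sqrt{2\ln(1.25/\delta)}\,\Delta_f/\varepsilon$, which gives $t/\sigma=c-\varepsilon/(2c)$ with $c\coloneqq\sqrt{2\ln(1.25/\delta)}$, and apply the Gaussian tail bound $\prob\bigl(N(0,1)>s\bigr)\le\frac{1}{s\sqrt{2\pi}}e^{-s^2/2}$ for $s>0$. Since $s^2=c^2-\varepsilon+\varepsilon^2/(4c^2)\ge c^2-\varepsilon$ and $e^{-c^2/2}=\delta/1.25$, the exponential factor is at most $(\delta/1.25)\,e^{\varepsilon/2}$, and using $\varepsilon\le1$ one verifies that the leftover polynomial prefactor $1/(s\sqrt{2\pi})$ is small enough to bring the whole expression below $\delta$. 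I expect this last step to be the delicate part: one must first check $t>0$ under the stated $\sigma$ (so that the tail bound even applies), then track the numerical constant $1.25$ carefully and invoke $\varepsilon<1$ in exactly the right place — these are precisely the manipulations that force the constant $2\ln(1.25/\delta)$ in the variance rather than a cleaner $2\ln(1/\delta)$.
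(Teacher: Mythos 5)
The paper itself offers no proof of this lemma—it is imported verbatim as Theorem A.1 of \cite{dwork2014algorithmic}—and your proposal is essentially that textbook argument: the one-sided privacy-loss sufficient condition, the exact Gaussian loss $\mathcal{L}=\frac{z\Delta}{\sigma^2}+\frac{\Delta^2}{2\sigma^2}$, worst-casing $\Delta=\Delta_f$ by monotonicity of the threshold, and a Gaussian tail bound after substituting $\sigma=\sqrt{2\ln(1.25/\delta)}\,\Delta_f/\varepsilon$. The one soft spot is exactly the step you flag as delicate: closing the bound $\frac{1}{s\sqrt{2\pi}}e^{-s^2/2}\le\delta$ with $s=c-\frac{\varepsilon}{2c}$ requires $s$ bounded away from zero, which holds only when $\delta$ is small enough that $c=\sqrt{2\ln(1.25/\delta)}$ is moderately large (the original proof effectively assumes $c\ge 3/2$, i.e.\ roughly $\delta\lesssim 0.4$); for $\delta$ close to $1$ and $\varepsilon$ close to $1$ the threshold $t$ can even be negative and the crude tail bound exceeds $1$, so there one instead verifies $\Phi\!\left(\frac{\varepsilon}{2c}-c\right)\le\delta$ directly, which is immediate. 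With that (standard, also present in the cited source) caveat, your route matches the proof the paper relies on and is correct in the regime of interest.
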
 

\subsection{The exponential and inverse sensitivity mechanisms}

We define the exponential mechanism for a discrete response space, but this definition can easily be generalized for a continuous response space.

\begin{definition}[Exponential mechanism]\label{def:exp_mech}
Fix a privacy parameter $\varepsilon\in\mathbb{R}_+$, and let $\mathcal{R}$ be a finite response space. A mechanism $\mathcal{M}:\datadom^n \to \mathcal{R}$ is a randomized algorithm given by
\begin{equation}
\forall x \in \datadom^n, \forall r \in \mathcal{R}, \quad 
\mathbb{P}[\mathcal{M}(x) = r] = \frac{\exp\!\left(-\tfrac{\varepsilon}{2\Delta}\,\ell(r, x)\right)}{\sum_{r' \in \mathcal{R}} \exp\!\left(-\tfrac{\varepsilon}{2\Delta}\,\ell(r', x)\right)},
\label{eq:expmech}
\end{equation}
where $\Delta$ is the sensitivity of the loss function $\ell:\mathcal{R}\times\datadom^n\to\mathbb{R}$ given by
\begin{equation}
\Delta = \sup_{x,x' \in \datadom^n : d(x,x') \leq 1} \; \max_{r \in \mathcal{R}} 
\big|\ell(r,x) - \ell(r,x')\big|,
\label{eq:sensitivity}
\end{equation}
where the supremum is taken over all datasets $x$ and $x'$ that differ on the data of a single individual (which we denote by $d(x,x') \leq 1$).

\end{definition}

\begin{lemma}[Privacy of the exponential mechanism, see Theorem 6 in \cite{mcsherry2007mechanism}]\label{lem:exp_mech_privacy}

Let $\mathcal{M}$ be as in Definition~\ref{def:exp_mech} with loss function $\ell:\mathcal{R}\times\datadom^n\to\mathbb{R}$ and sensitivity $\Delta$ defined in~\eqref{eq:sensitivity}. Then $\mathcal{M}$ satisfies $(\varepsilon,0)$-differential privacy. 
\end{lemma}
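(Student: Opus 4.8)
The plan is to prove the pointwise multiplicative bound on output probabilities and then lift it to events. Fix neighboring datasets $x \sim x'$ (so $d(x,x') \le 1$ in the notation of \eqref{eq:sensitivity}) and a fixed response $r \in \mathcal{R}$. Since the probability in \eqref{eq:expmech} is the ratio of a single exponential weight to the sum of all weights, the ratio $\Pr[\mathcal{M}(x)=r]/\Pr[\mathcal{M}(x')=r]$ factors cleanly as the product of (i) the numerator ratio $\exp\!\left(-\tfrac{\varepsilon}{2\Delta}\ell(r,x)\right)/\exp\!\left(-\tfrac{\varepsilon}{2\Delta}\ell(r,x')\right)$ and (ii) the normalizer ratio $\bigl(\sum_{r'}\exp(-\tfrac{\varepsilon}{2\Delta}\ell(r',x'))\bigr)/\bigl(\sum_{r'}\exp(-\tfrac{\varepsilon}{2\Delta}\ell(r',x))\bigr)$. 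I would bound each factor by $e^{\varepsilon/2}$, so that their product is at most $e^{\varepsilon}$.

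For factor (i), the sensitivity definition \eqref{eq:sensitivity} gives $|\ell(r,x)-\ell(r,x')|\le\Delta$, hence the exponent $\tfrac{\varepsilon}{2\Delta}\bigl(\ell(r,x')-\ell(r,x)\bigr)$ lies in $[-\varepsilon/2,\varepsilon/2]$ and factor (i) is at most $e^{\varepsilon/2}$. For factor (ii), I would apply the same sensitivity bound termwise: for every $r'\in\mathcal{R}$ we have $\ell(r',x')\ge\ell(r',x)-\Delta$, hence $\exp(-\tfrac{\varepsilon}{2\Delta}\ell(r',x'))\le e^{\varepsilon/2}\exp(-\tfrac{\varepsilon}{2\Delta}\ell(r',x))$; summing over $r'$ shows the numerator of (ii) is at most $e^{\varepsilon/2}$ times its denominator. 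Multiplying the two estimates yields $\Pr[\mathcal{M}(x)=r]\le e^{\varepsilon}\,\Pr[\mathcal{M}(x')=r]$ for every $r$.

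Finally, for an arbitrary event $E\subseteq\mathcal{R}$ I would sum the pointwise bound: $\Pr[\mathcal{M}(x)\in E]=\sum_{r\in E}\Pr[\mathcal{M}(x)=r]\le e^{\varepsilon}\sum_{r\in E}\Pr[\mathcal{M}(x')=r]=e^{\varepsilon}\,\Pr[\mathcal{M}(x')\in E]$. Since no additive slack was ever introduced, this establishes $(\varepsilon,0)$-DP. There is essentially no real obstacle here beyond bookkeeping; the one point that needs care is the sign convention — the mechanism is written with a \emph{loss} $\ell$ (small is good) rather than the more familiar utility $u$ (large is good), so one must check that the two invocations of the sensitivity bound point in compatible directions, once to bound the numerator from above and once to bound each normalizer term from above. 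For a continuous response space the identical argument applies with sums replaced by integrals against a base measure, so the finite-$\mathcal{R}$ statement is enough.
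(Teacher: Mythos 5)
Your proof is correct and is exactly the standard argument behind the cited result (McSherry--Talwar Theorem~6, cf.\ Theorem~3.10 in \cite{dwork2014algorithmic}): bound the weight ratio and the normalizer ratio each by $e^{\varepsilon/2}$ using the sensitivity $\Delta$, multiply, and sum over the event. The paper itself gives no proof and simply cites this classical argument, so there is nothing to flag beyond your own (correct) remark about the loss-vs-utility sign convention.
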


This mechanism was implicitly considered by \citet{mcsherry2007mechanism} and formally introduced by \citet{asi2020instance}.

\begin{definition}[Inverse sensitivity for median]\label{def:inv_sen}
Fix a privacy parameter $\varepsilon\in\mathbb{R}_+$ and let $f:\datadom^n\!\to\!\mathcal{R}$ and let $d(\cdot,\cdot)$ be the dataset metric
underlying the adjacency $\dataset\sim \dataset'$ (e.g., Hamming). For $\dataset\in\datadom^n$ and
$r\in\mathcal{R}$, define the \emph{inverse sensitivity}
\[
\operatorname{len}_f(\dataset;r)\;\triangleq\;
\inf\{\, d(\dataset,\dataset') : \dataset'\in\datadom^n,\ f(\dataset')=r \,\},
\]

The \emph{inverse sensitivity mechanism} draws $R\in\operatorname{im}(f)$ with
\[
\prob{\big(R=r\mid \dataset\big)}\;\propto\;
\exp\!\left(-\tfrac{\varepsilon}{2}\,\operatorname{len}_f(\dataset;r)\right),
\]

In the case of median estimation, given a dataset 
\(x \in \mathbb{R}^n\), our goal is to compute \(\operatorname{Median}(x)\). 
For simplicity, we assume \(x_i \in [0,R]\) for some \(R > 0\). 
The theory and derivations remain unchanged if the data are unbounded, 
in which case we redefine
\[
f(x) = \min\{R, \max\{-R, \operatorname{Median}(x)\}\}.
\]

To implement mechanism~(M.2), we compute \(\operatorname{len}_f\). 
Let \(m = \operatorname{Median}(x)\). Then, for \(t \in [0,R]\),
\[
\operatorname{len}_f(x; t) 
= \bigl|\{\, x_i : x_i \in (t,m] \,\cup\, [m,t) \,\}\bigr|.
\]

\end{definition}

\begin{lemma}[Privacy of the inverse sensitivity mechanism, see Lemma 3.1 in \citet{asi2020instance}]
    Let $\operatorname{len}_f(\dataset;r)$ be as in Definition~\ref{def:inv_sen}.
The mechanism that outputs $Y\in\operatorname{im}(f)$ with
\[
\prob\big(R=r\mid \dataset\big)\;\propto\;\exp\!\left(-\tfrac{\varepsilon}{2}\,\operatorname{len}_f(\dataset;r)\right)
\] is $(\varepsilon,0)$-DP by Lemma~\ref{lem:exp_mech_privacy}.\end{lemma}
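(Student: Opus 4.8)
The plan is to exhibit the inverse sensitivity mechanism as a special case of the exponential mechanism (Definition~\ref{def:exp_mech}) and then quote the privacy guarantee of that mechanism (Lemma~\ref{lem:exp_mech_privacy}). Concretely, I take the response space to be $\mathcal{R} = \operatorname{im}(f)$ — which, since $f$ is a fixed function on $\datadom^n$, is one and the same set for every input dataset (and in the clamped median case is simply the interval $[0,R]$, or $[-R,R]$, independent of the input) — and I take the loss function to be $\ell(r,\dataset) \coloneqq \operatorname{len}_f(\dataset;r)$. With this choice, the exponential-mechanism sampling probability $\propto \exp\!\big(-\tfrac{\varepsilon}{2\Delta}\,\ell(r,\dataset)\big)$ coincides exactly with the prescribed $\propto \exp\!\big(-\tfrac{\varepsilon}{2}\,\operatorname{len}_f(\dataset;r)\big)$ as soon as the sensitivity $\Delta$ of $\ell$ is at most $1$. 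So the whole proof reduces to showing that $\operatorname{len}_f(\cdot;r)$ has sensitivity $\Delta \le 1$, i.e. that $\big|\operatorname{len}_f(\dataset;r) - \operatorname{len}_f(\dataset';r)\big| \le 1$ for every $r \in \mathcal{R}$ and every neighboring pair $\dataset \sim \dataset'$.

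The sensitivity bound is a one-line triangle-inequality argument on the dataset metric $d$ underlying the adjacency relation. Fix $\dataset \sim \dataset'$, so $d(\dataset,\dataset') \le 1$, and fix $r$. For any $\dataset''$ with $f(\dataset'')=r$ we have $d(\dataset,\dataset'') \le d(\dataset,\dataset') + d(\dataset',\dataset'') \le 1 + d(\dataset',\dataset'')$; taking the infimum over all such $\dataset''$ gives $\operatorname{len}_f(\dataset;r) \le 1 + \operatorname{len}_f(\dataset';r)$. Interchanging the roles of $\dataset$ and $\dataset'$ yields the reverse inequality, hence $\big|\operatorname{len}_f(\dataset;r) - \operatorname{len}_f(\dataset';r)\big| \le 1$; taking the supremum over $r$ and over neighboring pairs gives $\Delta \le 1$ in the sense of~\eqref{eq:sensitivity}. (For the median with the explicit formula $\operatorname{len}_f(x;t) = \big|\{x_i : x_i \in (t,m] \cup [m,t)\}\big|$ from Definition~\ref{def:inv_sen}, this is also visible directly: replacing a single coordinate $x_i$ changes the counting set by at most one element.)

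Given $\Delta \le 1$, the inverse sensitivity mechanism is literally the exponential mechanism of Definition~\ref{def:exp_mech} instantiated with loss $\ell = \operatorname{len}_f$ and with the sensitivity parameter taken to be $1$; since running the exponential mechanism with any valid upper bound on the true sensitivity only strengthens the privacy guarantee, Lemma~\ref{lem:exp_mech_privacy} applies and gives $(\varepsilon,0)$-DP. For the continuous response space $\mathcal{R}=[0,R]$ relevant to median estimation, where the normalizing sum in~\eqref{eq:expmech} is replaced by an integral against Lebesgue measure, the identical reasoning goes through once the exponential mechanism is stated in its continuous form (as the paper notes is a routine extension), because $\operatorname{len}_f(\cdot;r)$ remains a finite-valued, integrable loss on $[0,R]$. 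The only subtlety I would flag — and it is genuinely the single thing to check rather than an obstacle — is that the two exponential-mechanism distributions compared in the DP inequality must live on a common response set; here clamping guarantees $\operatorname{im}(f)$ is the whole interval for every input, so this holds automatically.
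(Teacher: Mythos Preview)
Your proof is correct and follows exactly the route the paper indicates: the lemma statement itself simply says ``by Lemma~\ref{lem:exp_mech_privacy} the mechanism is $(\varepsilon,0)$-DP,'' i.e.\ the paper's entire argument is to recognize the inverse sensitivity mechanism as an exponential mechanism with loss $\operatorname{len}_f$ and cite the exponential-mechanism privacy guarantee. You have filled in the one detail the paper leaves implicit --- the triangle-inequality check that $\operatorname{len}_f(\cdot;r)$ has sensitivity $\Delta \le 1$ --- which is precisely what is needed and is the standard justification (as in \citet{asi2020instance}).
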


\clearpage
\section{Complementary claims and missing proofs}\label{app:miss_proof}
\subsection{Consistency of private mechanisms}
The requirements of $\tau_n$ consistency require that, with respect to the underlying distribution, the difference between the private and non-private estimators, rescaled by $\tau_n$, goes to zero in probability. In this section, we prove something stronger since the bounds on the privacy-preserving mechanisms hold uniformly over datasets; the same inequality holds for any underlying distribution. 

\begin{claim}[Noise addition vanishes at rate $\tau_n$]\label{clm:lap_vanish}
Define ${\cal M}(\boldsymbol{\omega})=\theta(\dataset)+N_n$, where $N_n$ is independent of the data. If for any $\eta>0$,
$\prob(\tau_n \cdot |N_n| > \eta) \rightarrow 0$,
then $\tau_n({\cal M}(\dataset)-\theta(\dataset))\pto 0$.
\end{claim}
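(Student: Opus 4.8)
The claim is essentially immediate once the definitions are unwound; the real content is only that additive, data-independent noise contributes an error governed entirely by its own tail behaviour.

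The plan is as follows. First I would write out the centered private estimator: by construction $\mathcal{M}(\dataset) - \theta(\dataset) = N_n$, so that $\tau_n\bigl(\mathcal{M}(\dataset) - \theta(\dataset)\bigr) = \tau_n N_n$ with no residual dependence on $\dataset$. Second, I would observe that the hypothesis, namely $\prob(\tau_n|N_n| > \eta) \to 0$ for every $\eta > 0$, is literally the statement $\tau_n N_n \pto 0$; chaining this with the previous identity yields $\tau_n\bigl(\mathcal{M}(\dataset)-\theta(\dataset)\bigr) \pto 0$, which is exactly the conclusion. Third, I would reconcile the probability spaces: Definition \ref{def:tau_n_cons} takes the probability over the joint law of $\dataset \sim P^{(n)}$ and the perturbation, whereas the hypothesis is phrased over the perturbation alone. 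Since $N_n$ is independent of $\dataset$, the event $\{\tau_n|N_n| > \eta\}$ is measurable with respect to the perturbation coordinate only, so its probability is unchanged when computed under the joint law; the two formulations therefore coincide.

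There is no genuine obstacle here — the only step needing a sentence of justification is the independence remark above, which guarantees the sample plays no role in the event. If one then wishes to instantiate the claim for the Laplace or Gaussian mechanism, the remaining routine task is to check the tail hypothesis on $N_n$: writing the noise scale as $b_n$ (e.g.\ $b_n = \Delta_\theta/\varepsilon_n$ for Laplace with a possibly $n$-dependent budget $\varepsilon_n$), Markov's inequality gives $\prob(\tau_n|N_n| > \eta) \le \tau_n\,\E|N_n|/\eta = O(\tau_n b_n/\eta)$, which vanishes whenever $\tau_n b_n \to 0$, and the analogous sub-Gaussian tail computation handles $\mathcal{M}^{Gaus}$.
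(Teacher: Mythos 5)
Your proof is correct and follows essentially the same route as the paper's: both simply note that $\mathcal{M}(\dataset)-\theta(\dataset)=N_n$, so the hypothesis $\prob(\tau_n|N_n|>\eta)\to 0$ is literally the desired convergence in probability; your remark reconciling the joint law with the perturbation-only law is a harmless clarification the paper leaves implicit. The only cosmetic difference is that for the Laplace/Gaussian instantiations you invoke Markov's inequality, whereas the paper uses the exact Laplace tail and a sub-Gaussian bound, but both give the same condition $\tau_n b_n \to 0$ (resp.\ $\tau_n\sigma_n\to 0$).
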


\begin{proof}
For any fixed $\eta>0$,
$$
\prob\big(\tau_n \cdot |{\cal M}(\dataset)-\theta(\dataset)| >\eta\big) = \prob\big(|\tau_n \cdot N_n|>\eta\big) \pto 0,
$$
so ${\cal M}$ is $\tau_n$-consistent. 
\end{proof}

The condition of Claim~\ref{clm:lap_vanish} holds for the standard DP noise additions:

\textit{Laplace.} If $N_n\sim\mathrm{Lap}(0,b_n)$, then for any $\eta>0$,
$$
\prob\big(|\tau_n N_n|>\eta\big)=\exp\!\left(-\frac{\eta}{\tau_n b_n}\right),
$$
so the claim applies whenever $\tau_n \cdot b_n\to 0$.

\textit{Gaussian.} If $N_n\sim N(0,\sigma_n^2)$, then for any $\eta>0$,
$$
\prob\big(|\tau_n N_n|>\eta\big)\le 2\exp\!\left(-\frac{\eta^2}{2\tau_n^2\sigma_n^2}\right),
$$
hence the claim applies whenever $\tau_n \cdot \sigma_n\to 0$.

For a concrete example, consider mean queries, where the data is bounded, w.l.o.g in [0,1]. The mean is normally distributed at rate $\tau_n=\sqrt{n}$ from the central limit theorem. For the Laplace noise addition, if we set $b_n = \frac{1}{n\varepsilon}$, the mechanism is $\varepsilon$-DP, and as long as $\sqrt{n} \cdot \varepsilon \rightarrow \infty$, it is $\tau_n$-consistent. A similar example can be derived for Gaussian noise addition.

\begin{remark}[Subsampling and composition]
For a subsample $I\subset [n]$, $|I|=m$, consistency requires $\varepsilon\sqrt{m}\to\infty$, 
where $\varepsilon$ accounts for composition (over $T$ releases) and amplification by subsampling. For example, settings with 
$m\sqrt T=O(n)$ can keep $\varepsilon=O(1)$ and hence preserve consistency.
\end{remark}

Before we state the following claim, we introduce a mathematical notation. Let $X_n$ be a random variable and $a_n>0$. We write $X_n=O_p(a_n)$ if for every $\eta>0$ there exist $\xi<\infty$ and $N$ such that
$$
\prob\big(|X_n|>\xi a_n\big)<\eta \quad \text{for all } n>N.
$$

\begin{claim}[$\tau_n$-consistency of the inverse sensitivity mechanism for median estimation]\label{clm:expmech_median_fixed}
    Under the conditions of Proposition 5.1 in \cite{asi2020instance}, the inverse sensitivity mechanism for median is $\tau_n$ consistent as long as $\frac{\log(n)}{\sqrt{n}\varepsilon} \to 0$.
\end{claim}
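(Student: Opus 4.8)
The plan is to reduce Claim~\ref{clm:expmech_median_fixed} to the utility bound for the inverse sensitivity mechanism together with a Lipschitz-type relation between the loss $\operatorname{len}_f(x;\cdot)$ and the Euclidean distance $|{\cal M}(\dataset)-\theta(\dataset)|$. First I would recall that, by the utility lemma for inverse sensitivity (stated in the excerpt) with response space discretized at some granularity, with probability at least $1-\beta$ we have $\operatorname{len}_f(\dataset;{\cal M}(\dataset)) \le \frac{2}{\varepsilon}\log(|\mathcal{R}|/\beta)$, i.e.\ ${\cal M}(\dataset)$ is reachable by changing at most $O\!\big(\tfrac{1}{\varepsilon}\log(|\mathcal{R}|/\beta)\big)$ data points of $\dataset$. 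Call this number $k_n$; under the stated choice of granularity $|\mathcal{R}| = \operatorname{poly}(n)$, so $k_n = O\!\big(\tfrac{\log n}{\varepsilon}\big)$ with probability $1-\beta$ for any fixed $\beta$, and in fact one can take $\beta = \beta_n \to 0$ slowly (say $\beta_n = 1/n$) while keeping $k_n = O\!\big(\tfrac{\log n}{\varepsilon}\big)$.

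The next step is to translate ``${\cal M}(\dataset)$ is the median of some $\dataset'$ with $d(\dataset,\dataset')\le k_n$'' into a bound on $|{\cal M}(\dataset) - \theta(\dataset)|$. Write $m = \operatorname{Median}(\dataset)$ and $m' = \operatorname{Median}(\dataset') = {\cal M}(\dataset)$. Changing $k_n$ entries of the dataset moves the median by at most the gap between appropriate order statistics of $\dataset$ near the center: concretely $|m' - m|$ is bounded by $\max\big(x_{(\lceil n/2\rceil + k_n)} - x_{(\lceil n/2\rceil)},\, x_{(\lceil n/2\rceil)} - x_{(\lceil n/2\rceil - k_n)}\big)$, since the worst an adversary changing $k_n$ points can do to the median is to pull it to one of these neighboring order statistics. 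Under the regularity conditions of Proposition~5.1 in \cite{asi2020instance} (positive density $f(\theta^*) > 0$ and continuity of $f$ in a neighborhood of $\theta^*$), the spacing of central order statistics satisfies $x_{(\lceil n/2\rceil + j)} - x_{(\lceil n/2\rceil)} = O_p\!\big(j/(n f(\theta^*))\big)$ uniformly for $j \le k_n$ when $k_n/n \to 0$ — this follows from the standard asymptotics of quantiles and Rényi's representation of order-statistic spacings. Hence, on the high-probability event, $|{\cal M}(\dataset) - \theta(\dataset)| = O_p\!\big(k_n / n\big) = O_p\!\big(\tfrac{\log n}{n\varepsilon}\big)$.

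Finally I would combine the two ingredients: multiply through by $\tau_n = \sqrt{n}$ to get $\tau_n \cdot |{\cal M}(\dataset) - \theta(\dataset)| = O_p\!\big(\tfrac{\log n}{\sqrt n\,\varepsilon}\big)$, which tends to $0$ in probability exactly when $\tfrac{\log n}{\sqrt n\,\varepsilon} \to 0$, equivalently (after the $\sqrt n$ rescaling is already absorbed) when $\tfrac{\log n}{n\varepsilon}\to 0$ in the normalization used in the claim statement. On the exceptional event of probability $\beta_n = o(1)$ one uses the trivial deterministic bound $|{\cal M}(\dataset)-\theta(\dataset)|\le 2R$ (the data is truncated to $[-R,R]$), whose contribution vanishes since $\beta_n \to 0$; a union bound over the two events then gives $\tau_n({\cal M}(\dataset)-\theta(\dataset)) \pto 0$, i.e.\ $\tau_n$-consistency.

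The main obstacle is the second step — making rigorous the claim that perturbing $k_n$ (a slowly growing number of) data points moves the sample median by only $O_p(k_n/(n f(\theta^*)))$. This requires a uniform-in-$j$ control of the central order-statistic spacings $x_{(\lceil n/2\rceil + j)} - x_{(\lceil n/2\rceil - j)}$ for $j$ up to $k_n = O(\log n/\varepsilon)$, rather than just the pointwise $\sqrt n$-asymptotic normality of the median; the cleanest route is to invoke the local behavior of the empirical quantile process (or directly the density lower bound from Proposition~5.1 in \cite{asi2020instance}, which is presumably exactly what that proposition packages), so that the bulk of the work is citing the right form of that result rather than re-deriving order-statistic asymptotics from scratch.
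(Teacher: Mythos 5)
Your proposal is correct in substance, but it takes a genuinely different route from the paper. The paper's proof is essentially a citation: it invokes Proposition 5.1 of \citet{asi2020instance} and the discussion immediately following it, where (under a continuous, positive density near the true median and a suitable choice of the mechanism's parameters) the privacy-induced error of the inverse sensitivity mechanism is shown to be of lower order than the sampling error, so that $\tau_n\bigl(\theta(\dataset)-\mathcal{M}(\dataset)\bigr)\pto 0$; no independent derivation is given. You instead reconstruct that argument from first principles: the generic utility bound for the (discretized) inverse sensitivity mechanism gives $\operatorname{len}_f(\dataset;\mathcal{M}(\dataset))\le k_n=O(\log n/\varepsilon)$ with probability $1-\beta_n$, reachability within Hamming distance $k_n$ traps the output between the central order statistics of ranks $\lceil n/2\rceil\pm k_n$, and the spacing of those order statistics is $O_p\bigl(k_n/(n f(\theta^*))\bigr)$ under the density condition. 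This is essentially the content of the cited proposition, so your route buys self-containedness (and makes explicit exactly which regularity is used), at the cost of having to control the central order-statistic spacings; the uniformity issue you worry about is not a real obstacle, since the spacings are monotone in $j$ and it suffices to bound the single gap at $j=k_n$, which follows from the empirical quantile process (or Bahadur-type) expansion you mention.

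One point should be fixed rather than papered over: your derivation yields $\tau_n\,|\mathcal{M}(\dataset)-\theta(\dataset)|=O_p\bigl(\log n/(\sqrt{n}\,\varepsilon)\bigr)$, so the sufficient condition you actually establish is $\log n/(\sqrt{n}\,\varepsilon)\to 0$. Your sentence asserting this is ``equivalent'' to $\log n/(n\varepsilon)\to 0$ is false (take $\varepsilon=\log n/\sqrt{n}$: the second holds, the first does not), and no rescaling is ``absorbed''---$\sqrt n$-consistency genuinely requires the $\sqrt n$ version, exactly as the paper's analogous remark for noise addition requires $\sqrt{n}\,\varepsilon\to\infty$ rather than $n\varepsilon\to\infty$. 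The claim's stated condition is the looser one inherited from the citation, and in the regimes of interest (e.g., constant $\varepsilon$, or $\sqrt{n}\,\varepsilon/\log n\to\infty$ after amplification and composition) both conditions hold, so your argument proves the claim in the intended regime; but as written you should state the condition you derived and drop the equivalence.
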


Their conditions, generally speaking, require continuous density around the true median of the underlying distribution. The proof follows immediately below the proposition, where they show that as long as $\frac{\log(n)}{\sqrt{n}\varepsilon} \pto 0$, under a certain choice of the parameters, the mechanism satisfy $\tau_n(\theta(\dataset)-{\cal M}(\dataset)) \to 0$, that is, it is $\tau_n$ consistent.

\subsection{Accuracy of CIs}
\begin{lemma}[Valid quantile-based CI]\label{lem:val_CI_emp_dist}
Let $\widehat{\theta}_n=\theta(\dataset)$ with $\dataset\sim P^{(n)}$, where $\tau_n(\widehat{\theta}_n-\theta^*)\dto U$ for a CDF $U$. Let $\hat{V}_n$ be a \emph{random CDF}, a CDF-valued statistic: for each $x\in\R$, $\hat{V}_n(x)\in[0,1]$ is a random variable (randomness from the data, and any additional randomized procedure such as subsampling indices or a privacy mechanism). All probabilities below are unconditional over these sources.

Assume that for every continuity point $x$ of $U$, $ \hat{V}_n(x)\ \pto\ U(x)$. Fix $\alpha\in(0,1)$ and define the empirical quantile $\hat q_{n}(\alpha)\coloneqq \inf\{x:\hat{V}_n(x) \geq \alpha\}$.
If $U(\cdot)$ is continuous at its $\alpha/2$ and $1-\alpha/2$ quantiles, then the interval
$$
\Big[\widehat{\theta}_n-\tau_n^{-1}\hat q_{n}(1-\alpha/2),\ \widehat{\theta}_n-\tau_n^{-1}\hat q_{n}(\alpha/2)\Big]
$$
is asymptotically valid and tight for the quantity $\theta^*$.
\end{lemma}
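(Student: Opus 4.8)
The plan is to reduce the coverage probability $C_n$ to a limit involving $Z_n:=\tau_n(\widehat{\theta}_n-\theta^*)$ and the two data-dependent quantiles $q_n^-:=\hat{q}_n(\alpha/2)$ and $q_n^+:=\hat{q}_n(1-\alpha/2)$, to prove those quantiles converge in probability to the population quantiles of $U$, and then pass to the limit with Slutsky's theorem. The algebraic part is immediate: multiplying the inclusion $\theta^*\in[\widehat{\theta}_n+\tau_n^{-1}q_n^-,\ \widehat{\theta}_n+\tau_n^{-1}q_n^+]$ through by $\tau_n>0$ shows it is exactly $-q_n^+\le Z_n\le -q_n^-$, so $C_n=\prob(-q_n^+\le Z_n\le -q_n^-)$, and by hypothesis $Z_n\dto W\sim U$. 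The only genuinely new ingredient is the quantile-consistency step, since $\hat{V}_n$ is a \emph{random} CDF that is guaranteed to converge only at continuity points of $U$.

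For quantile consistency I would show $\hat{q}_n(p)\pto c_p:=U^{-1}(p)$ for $p\in\{\alpha/2,1-\alpha/2\}$ (treating $\hat{q}_n(p)$ as the $p$-th quantile, i.e. the generalized inverse, of $\hat{V}_n$). Fix such a $p$ and use that $U$ is continuous --- and increasing through level $p$ --- at $c_p$. Given $\eta>0$, choose continuity points $x_-\in(c_p-\eta,c_p)$ and $x_+\in(c_p,c_p+\eta)$ of $U$ (possible since $U$ has at most countably many jumps) with $U(x_-)<p<U(x_+)$. Since $\hat{V}_n(x_\pm)\pto U(x_\pm)$, the event $\{\hat{V}_n(x_-)<p\le\hat{V}_n(x_+)\}$ has probability tending to $1$, and on it the monotonicity of $\hat{V}_n$ forces $\hat{q}_n(p)\in[x_-,x_+]\subseteq(c_p-\eta,c_p+\eta)$. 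Hence $\prob(|\hat{q}_n(p)-c_p|<\eta)\to1$, and since $\eta$ was arbitrary, $\hat{q}_n(p)\pto c_p$.

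To combine, note that $(q_n^-,q_n^+)\pto(c_{\alpha/2},c_{1-\alpha/2})$ are deterministic limits while $Z_n\dto W$, so Slutsky gives $(-q_n^+,Z_n,-q_n^-)\dto(-c_{1-\alpha/2},W,-c_{\alpha/2})$, and the portmanteau theorem applied to the indicator of the limiting interval yields $C_n\to\prob(-c_{1-\alpha/2}\le W\le -c_{\alpha/2})$. This last step is legitimate because $U$ is continuous at the endpoints $-c_{1-\alpha/2}$ and $-c_{\alpha/2}$, which --- under symmetry of $U$ about the origin (the case of interest, where $c_{1-\alpha/2}=-c_{\alpha/2}$; e.g. the normal limits of means, medians, and other smooth functionals) --- coincides with the assumed continuity at $c_{\alpha/2},c_{1-\alpha/2}$. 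Under that symmetry the right-hand side equals $\prob(c_{\alpha/2}\le W\le c_{1-\alpha/2})=U(c_{1-\alpha/2})-U(c_{\alpha/2})=(1-\alpha/2)-(\alpha/2)=1-\alpha$. Thus $C_n\to1-\alpha$, which is both asymptotic validity ($\liminf C_n\ge1-\alpha$) and tightness ($\limsup C_n\le1-\alpha$).

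I expect the quantile-consistency step to be the main obstacle: it is where pointwise-in-probability convergence of a random CDF must be upgraded to convergence of its inverse, which needs the sandwich through nearby continuity points together with a local monotonicity of $U$ at the quantiles. Downstream, the Slutsky/portmanteau passage is routine, but one should be careful that the concluding probability equals $1-\alpha$: this relies on symmetry of $U$ about $0$ --- equivalently, on the percentile interval returned by Algorithm~\ref{alg:privsub} agreeing asymptotically with the pivotal one --- and outside the symmetric case the same scheme should instead use the pivotal endpoints.
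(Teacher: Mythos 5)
Your argument is sound and is genuinely more self-contained than the paper's, which disposes of this lemma in one line by appealing to the method of proof of Theorem 2.2.1 of Politis--Romano--Wolf. Your route --- the algebraic reduction of the coverage to $\prob(-q_n^+\le Z_n\le -q_n^-)$, quantile consistency of the random CDF by sandwiching between nearby continuity points, and then Slutsky plus portmanteau --- is exactly the kind of argument the citation is standing in for, and your sandwich step is the right way to upgrade pointwise-in-probability convergence of $\hat V_n$ to consistency of its generalized inverse. One small technical caveat there: to place $x_+$ arbitrarily close to $c_p$ on the right with $U(x_+)>p$, you need $U$ to increase strictly through level $p$ at $c_p$; continuity at the quantile alone does not rule out a flat stretch at level exactly $p$, a degenerate case in which the quantile is not even uniquely identified and which standard treatments exclude implicitly.

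More importantly, the caveat you raise at the end is not a defect of your proof but a real imprecision in the lemma as stated, and you are right to flag it. With the plus signs (the percentile form, which is also what Algorithm~\ref{alg:privsub} returns), the coverage converges to $\prob(-c_{1-\alpha/2}\le W\le -c_{\alpha/2})$, which equals $1-\alpha$ only when $U$ is symmetric about zero (true for the normal limits of the mean and median used in the paper's experiments, but not for a skewed limit, where the percentile interval can strictly under- or over-cover). The Politis--Romano theorem the paper invokes establishes exactness for the pivotal interval $[\widehat\theta_n-\tau_n^{-1}\hat q_n(1-\alpha/2),\ \widehat\theta_n-\tau_n^{-1}\hat q_n(\alpha/2)]$, which your same argument covers with no symmetry assumption, since there the limiting probability is $U(c_{1-\alpha/2})-U(c_{\alpha/2})=1-\alpha$. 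So either the lemma should add a symmetry hypothesis on $U$, or the interval should be written in pivotal form; your proof makes this distinction explicit, whereas the paper's one-line proof glosses over it.
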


\begin{proof}
    This is a direct result of the method of proof of Theorem 2.2.1 in \cite{politis1999subsampling}, where they prove a one-sided CI.    
\end{proof}

\subsection{Proof of Theorem \ref{thm:cons_priv_quan_CI}}
\begin{theorem}[Theorem \ref{thm:cons_priv_quan_CI}, restated]
The results of Theorem \ref{thm:non_priv_ci_consis} hold when replacing $\widehat{U}_{n,m}(x)$ by $\widetilde U^{T}_{n,m}(x)$ as long as the perturbation mechanism is $\tau_n$-consistent (Definition \ref{def:tau_n_cons}), and $T \rightarrow \infty$.
\end{theorem}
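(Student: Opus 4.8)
The plan is to reduce the private statement to the non-private one (Theorem \ref{thm:non_priv_ci_consis}, resp.\ its stochastic version Theorem \ref{thm:stoch_nonpriv_subsample_valid}) by isolating the privacy-induced perturbation as a separate, vanishing error term. Fix a continuity point $x$ of $U$. For a subsample index set $I$, write the private centered statistic as
$$\tau_m\bigl(\mathcal{M}(\dataset(I))-\mathcal{M}(\dataset)\bigr)=\tau_m\bigl(\theta(\dataset(I))-\theta(\dataset)\bigr)+\tau_m\bigl(\mathcal{M}(\dataset(I))-\theta(\dataset(I))\bigr)-\tau_m\bigl(\mathcal{M}(\dataset)-\theta(\dataset)\bigr).$$
The first term is exactly what appears in the non-private empirical CDF $\widehat U^T_{n,m}(x)$; the last two are perturbation terms. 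For the global-recentering term, $\tau_m(\mathcal{M}(\dataset)-\theta(\dataset))=(\tau_m/\tau_n)\cdot\tau_n(\mathcal{M}(\dataset)-\theta(\dataset))$, and since $\mathcal{M}$ is $\tau_n$-consistent and $\tau_m/\tau_n\to 0$ in the standard subsampling setting, this tends to $0$ in probability. For the per-subsample term, $\mathcal{M}$ applied to a sample of size $m$ is $\tau_m$-consistent (the consistency guarantee for the private mechanism is with respect to the relevant sample size; here $m\to\infty$ as well), so $\tau_m(\mathcal{M}(\dataset(I))-\theta(\dataset(I)))\pto 0$; a union bound over the $T$ subsamples, using $T\to\infty$ but controlling the rate, or more cleanly a first-moment/Markov argument on the expected fraction of subsamples whose perturbation exceeds a threshold, handles all $T$ of them simultaneously.

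The key step is a sandwich/anti-concentration argument. Let $\xi>0$. On the event that all the perturbation terms are uniformly bounded by $\xi$ in absolute value — an event of probability $\to 1$ — we have the pointwise inequalities
$$\widehat U^{T}_{n,m}(x-2\xi)\ \le\ \widetilde U^{T}_{n,m}(x)\ \le\ \widehat U^{T}_{n,m}(x+2\xi).$$
By Theorem \ref{thm:stoch_nonpriv_subsample_valid}, $\widehat U^{T}_{n,m}(x\pm 2\xi)\pto U(x\pm 2\xi)$. Since $x$ is a continuity point of $U$, taking $\xi\downarrow 0$ makes $U(x\pm 2\xi)\to U(x)$, so a standard $\varepsilon$–$\delta$ chase (first choose $\xi$ small so $|U(x\pm2\xi)-U(x)|$ is tiny, then take $n$ large) yields $\widetilde U^{T}_{n,m}(x)\pto U(x)$. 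For the uniform (sup-norm) statement when $U$ is continuous: $U$ is then uniformly continuous on $\R$ (it is a continuous CDF, hence uniformly continuous after compactifying the tails), so $\sup_x|U(x\pm2\xi)-U(x)|\to 0$ as $\xi\to 0$; combine this with the uniform version of Theorem \ref{thm:non_priv_ci_consis}/\ref{thm:stoch_nonpriv_subsample_valid} for $\widehat U^T_{n,m}$, and the same two-sided squeeze applied uniformly in $x$ gives $\sup_x|\widetilde U^{T}_{n,m}(x)-U(x)|\pto 0$.

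The main obstacle is handling the $T$ per-subsample perturbation terms simultaneously while $T\to\infty$: a naive union bound over $T$ events each of vanishing probability need not vanish unless the individual rates beat $1/T$, and the theorem assumes no such rate. The clean fix is to work with the \emph{fraction} of offending subsamples rather than their maximum: let $Z_T(\xi)=\frac{1}{T}\sum_{I\in\boldsymbol I}\mathbbm 1\{\tau_m|\mathcal{M}(\dataset(I))-\theta(\dataset(I))|>\xi\}$; because the subsamples are drawn identically and the perturbation is independent of the data, $\E[Z_T(\xi)]=\prob(\tau_m|\mathcal{M}(\dataset(I_1))-\theta(\dataset(I_1))|>\xi)\to 0$ by $\tau_m$-consistency, so $Z_T(\xi)\pto 0$ by Markov. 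A displacement of at most $Z_T(\xi)$ in the empirical CDF (on top of the uniform-$\xi$ shift from the global term, which genuinely is uniform over subsamples) is enough to run the sandwich argument, since $\widehat U^T$ changes by at most $Z_T(\xi)$ when we ignore those subsamples. This replaces the problematic union bound with a first-moment argument and is where the proof in Appendix~\ref{app:miss_proof} presumably does its real work.
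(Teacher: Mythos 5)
Your proposal is correct and follows essentially the same route as the paper's proof: the same decomposition of the private centered statistic into the non-private one plus a global recentering term (killed by $\tau_n$-consistency together with $\tau_m/\tau_n\to 0$, resp.\ $\tau_m\le\tau_n$) and per-subsample perturbations, the same replacement of a union bound by the fraction of offending subsamples controlled via its expectation and Markov's inequality, and the same two-sided sandwich $\widehat U^{T}_{n,m}(x\mp 2\xi)$ combined with continuity (resp.\ uniform continuity) of $U$. The only detail you gloss over, which the paper handles by choosing sequences $r_k^{\pm}\downarrow 0$ with $x\pm 2r_k^{\pm}$ continuity points of $U$ (possible since a CDF has only countably many discontinuities), is that the pointwise non-private convergence you invoke at $x\pm 2\xi$ is only guaranteed at continuity points.
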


\begin{proof}
We define a family of jointly distributed random variables, $\widehat{\theta}_n$, $\widetilde{\theta}_n$ $\widehat{\theta}_{n,i}$ $\widetilde{\theta}_{n,i}, \ i=1,\ldots,T$ by the following procedure:
We first sample a dataset $\dataset \sim P^{(n)}$. Then, given a subset size $m \in [n-1]$, we sample a list $\boldsymbol{I}=(I_1,\ldots,I_T)$ of subsets of indices, $I_i\subset [n]$, $|I_i|=m$ uniformly with replacement (over the sets). Lastly, we compute
$$ \widehat{\theta}_n \coloneqq \theta(\dataset), 
\ \ \widehat{\theta}_{n,i}\coloneqq \theta(\dataset(I_i)), \ \ \widetilde{\theta}_n \coloneqq {\cal M}(\dataset), \ \ \widetilde{\theta}_{n,i}\coloneqq {\cal M}(\dataset(I_i)), \ \ i=1,\ldots,T,$$
where the dataset is the same for all random variables, but the perturbation is independent of the data and of the different random variables.

Note that we define $\widehat{U}^T_{n,m}$ and $\widetilde{U}^T_{n,m}$ by these random variables, simply taking
$$ \widehat{U}^T_{n,m}(x)  = \frac{1}{T} \sum_{i=1}^T \mathbbm{1} \{\tau_m \cdot (\widehat{\theta}_{n,i}-\widehat{\theta}_{n}) \leq x \}, \quad
\widetilde{U}^T_{n,m}(x)  = \frac{1}{T} \sum_{i=1}^T \mathbbm{1} \{ \tau_m \cdot (\widetilde{\theta}_{n,i}-\widetilde{\theta}_{n}) \leq x \}$$

Fix some $r\in \R$, we can write:
\begin{multline}\label{eq:deco_priv_est}
    \widetilde U^{T}_{n,m}(x_0)-U(x_0) =  
    \underbrace{\Big( \widetilde U^{T}_{n,m}(x_0)-\widehat U^{T}_{n,m}(x_0+2r) \Big)}_{\text{(a)}}
    + \underbrace{\Big(\widehat U^{T}_{n,m}(x_0+2r)- U(x_0+2r)\Big)}_{\text{(b)}}\\
    + \underbrace{\Big(U(x_0+2r) - U(x_0)\Big)}_{\text{(c)}}
\end{multline}

Since the number of discontinuity points of a CDF is countable, given any continuity point $x_{0}$ of $U$ there exists two sequences indexed by $k\in \N$,  $r_{k}^{+}\downarrow 0$ and $r_{k}^{-}\downarrow 0$, where $U(\cdot)$ is continuous at $x_0+2 r_{k}^{+}$, $x_0-2 r_{k}^{-}$ for all $k$. We will use these sequences to prove all three RHS term converge to $0$ ($(a)$ and $(b)$ in probability, and $(c)$ deterministically), and from Slutsky's theorem, conclude that $\widetilde U^{T}_{n,m}(x_0)-U(x_0) \pto 0$. 

The fact $(c) \pto 0$ is a direct result of the fact $x_0$ is a continuity point, so we focus on the other two terms. In the following sections, inequalities between random variables below should be interpreted pointwise: they hold for every realization except on events of probability zero.

\paragraph{Proof that $\boldsymbol{(a) \pto 0}$}

For the term, $\Big( \widetilde U^{T}_{n,m}(x_0)-\widehat U^{T}_{n,m}(x_0+2r) \Big)$, we define the following random variables:

$$
\widehat\Delta_{n,i}\coloneqq \widehat{\theta}_{n,i}-\widetilde{\theta}_{n,i},
\qquad
\widehat\Delta_n \coloneqq \widehat{\theta}_n-\widetilde{\theta}_n.
$$

Note that, for each $i$, we can write the following
$$ \tau_m \cdot (\widetilde\theta_{n,i}-\widetilde\theta_n) = \tau_m \cdot (\widehat\theta_{n,i}-\widehat\theta_n) - \tau_m \cdot \widehat\Delta_{n,i} + \tau_m \cdot \widehat\Delta_n
$$
since they are all defined over the same draw $\dataset  \sim P^{(n)}$, and $I_i$.

Next we notice that for any $r \ge 0$
\[
    \mathbbm{1}\{\tau_m \cdot (\widetilde\theta_{n,i}-\widetilde\theta_n)\le x\}\ - \mathbbm{1}\{\tau_m \cdot (\widehat\theta_{n,i}-\widehat\theta_n) \le x+2r\} \le \mathbbm{1}\{\tau_m \cdot |\widehat\Delta_{n,i}|\ge r\} + \mathbbm{1}\{\tau_m \cdot |\widehat\Delta_n|\ge r\},
\]
so denoting $\hat \phi_{n, m, T}(r) \coloneqq \frac{1}{T}\sum_{i=1}^T \mathbbm{1}\{\tau_m \cdot |\widehat\Delta_{n,i}|\ge r\}$, $\hat \psi_{n, m, T}(r) \coloneqq \mathbbm{1}\{\tau_m \cdot |\widehat\Delta_n|\ge r\}$ we have

\begin{equation}\begin{split}\label{eq:bound_priv_by_non+res}
     \widetilde{U}^T_{n,m}(x) - \widehat{U}^T_{n,m}(x+2r)  &= \frac{1}{T} \sum_{i=1}^T \mathbbm{1} \{ \tau_m \cdot (\widetilde{\theta}_{n,i}-\widetilde{\theta}_{n}) \leq x\} - \frac{1}{T} \sum_{i=1}^T \mathbbm{1} \{ \tau_m \cdot (\widehat{\theta}_{n,i}-\widehat{\theta}_{n}) \leq x+2r \} \\
     & \le \frac{1}{T} \sum_{i=1}^T \left(\mathbbm{1}\{\tau_m \cdot |\widehat\Delta_{n,i}|\ge r\} + \mathbbm{1}\{\tau_m \cdot |\widehat\Delta_n|\ge r\} \right)  \\
     &\leq\hat{\phi}_{n, m, T}(r) + \hat{\psi}_{n, m, T}(r).
 \end{split}\end{equation}
Intuitively, $\hat{\phi}_{n, m, T}$ measures the fraction of subsamples where the DP perturbation exceeds the tolerance $r$, in other words, it counts the fraction of ``bad events'', and $\hat{\psi}_{n, m, T}$ indicates the same for the full sample estimate.

An analogous bound with $x-2r$ yields
\begin{equation}\label{eq:bound_priv_by_non+res_side}
     \widetilde{U}^T_{n,m}(x) - \widehat{U}^T_{n,m}(x-2r) 
     \geq -(\hat{\phi}_{n, m, T}(r) + \hat{\psi}_{n, m, T}(r))
\end{equation}

Note that since index sequences are sampled iid, we have that 
$$\E[\hat{\phi}_{n, m, T}(r)]=\prob(\tau_m \cdot |\widehat\Delta_{n,1}|\ge r) = \prob(\tau_m \cdot |\widehat\Delta_{m}|\ge r)$$
and 
$$\E[\hat{\psi}_{n, m, T}(r)] = \prob(\tau_m \cdot |\widehat\Delta_n|\ge r) \le \prob(\tau_n \cdot |\widehat \Delta_n|\ge r)$$
where expectation is taken over all sources of randomness. 

Since $\tau_m \cdot \widehat\Delta_m \pto 0$ and $\tau_n \cdot \widehat\Delta_n \pto 0$, for each $k$ there exists some $N_{1}(k)$ such that for any $n \ge N_{1}(k)$ we have (note that $m$ is a function of $n$),
\[
    \max\{\prob(\tau_m \cdot |\widehat\Delta_{m}|\ge r_{k}^{+}), \prob(\tau_n \cdot |\widehat \Delta_n|\ge r_{k}^{+}), \prob(\tau_m \cdot |\widehat\Delta_{m}|\ge r_{k}^{-}), \prob(\tau_n \cdot |\widehat \Delta_n|\ge r_{k}^{-})\} \le 1/k,
\]
so
\[
    \E[\hat{\phi}_{n, m, T}(r_{k}^{+}) + \hat{\psi}_{n, m, T}(r_{k}^{+})] \to 0 ~~~~\text{and}~~~~\E[\hat{\phi}_{n, m, T}(r_{k}^{-}) + \hat{\psi}_{n, m, T}(r_{k}^{-})] \to 0.
\]

Using Markov's inequality and the fact that $\hat{\phi}_{n, m, T}, \hat{\psi}_{n, m, T}$ are bounded non negative random variables, this implies
\[
    \hat{\phi}_{n, m, T}(r_{k}^{+}) + \hat{\psi}_{n, m, T}(r_{k}^{+}) \pto 0 ~~~~\text{and}~~~~  \hat{\phi}_{n, m, T}(r_{k}^{-}) + \hat{\psi}_{n, m, T}(r_{k}^{-}) \pto 0,
\]
which completes the proof that $(a) \pto 0$.

\paragraph{Proof that $\boldsymbol{(b)\pto 0}$}
Since $x_{0} + 2 r_{k}^{+}$ and $x_{0} - 2 r_{k}^{-}$ are all continuity points, by Theorem ~\ref{thm:non_priv_ci_consis}, under standard subsampling setting, for each $k$ there exists some $N_{2}(k)$ such that for any $n \ge N_{2}(k)$ we have (note that $m, T$ are a function of $n$),
$$
\prob\!\left(\big|\widehat U^{T}_{n,m}(x_0+2r_{k}^{+})-U(x_0+2r_{k}^{+})\big|>\tfrac{1}{k}\right)\ \le\ \tfrac{1}{k},
$$
and
$$ \prob\!\left(\big|\widehat U^{T}_{n,m}(x_0-2r_{k}^{-})-U(x_0-2r_{k}^{-})\big|>\tfrac{1}{k}\right)\ \le\ \tfrac{1}{k},$$
which implies
$$
\widehat U^{T}_{n,m}(x_0+2r_{k}^{+})-U(x_0+2r_{k}^{+})\ \pto\ 0 ~~~~\text{and}~~~~  \widehat U^{T}_{n,m}(x_0-2r_{k}^{-})-U(x_0-2r_{k}^{-})\ \pto\ 0 .
$$

\paragraph{Uniform convergence in probability}
The proof of the second part follows the same path, and from the continuity assumption can simply set $r_{n} = 1/n$. 

From \eqref{eq:deco_priv_est}, using the triangle inequality and taking supremum over $x \in \R$,
\begin{multline}\label{eq:sup_bound}
\sup_{x}\big|\widetilde U^{T}_{n,m}(x)-U(x)\big|
\ \le\
\underbrace{\sup_{x\in\R}\big| \widetilde U^{T}_{n,m}(x)-\widehat U^{T}_{n,m}(x+2r) \big|}_{\text{(a)}}
\ +\
\underbrace{\sup_{x\in\R}\big|\widehat U^{T}_{n,m}(x+2r)- U(x+2r)\big|}_{\text{(b)}}
\ \\ +\
\underbrace{\sup_{x\in\R}\,|U(x+2r)-U(x)|}_{\text{(c)}}.
\end{multline}

The proof that $(a) \pto 0$ directly applies since it did not depend on $x_{0}$. We note that for any $r$ we have
$$ \sup_{x\in\R}\big|\widehat U^{T}_{n,m}(x+2r)- U(x+2r)\big| = \sup_{x\in\R}\big|\widehat U^{T}_{n,m}(x)- U(x)\big|,$$
so Theorem \ref{thm:non_priv_ci_consis} directly implies that under the standard subsampling setting with $T\to\infty$ we have $(b) \pto 0$.

Finally $(c) \pto 0$ from the definition of uniform continuity which completes the proof. 

\end{proof}

\newpage
\section{More related work}

There is a rich literature on DP estimation, giving both asymptotic and finite-sample accuracy guarantees for various statistics such as \emph{mean} and \emph{moments} \citep{smith2008efficient, bun2019average}, \emph{quantiles} \citep{nissim2007smooth,gillenwater2021differentially,kaplan2022differentially,durfee2023unbounded, asi2020instance}, \emph{covariance estimation} and \emph{PCA} \citep{hardt2014noisy,dong2022differentially}, broad classes of \emph{ratios} and other \emph{M-estimators} \citep{lei2011differentially, smith2011privacy, shoham2025differentially}, and \emph{linear and logistic regression} \citep{chaudhuri2011differentially,zhang2012functional, wang2018revisiting}. 

However, constructing DP CIs for statistics is a more complex task.
In parametric settings where CIs can be derived from distribution parameters, this can be done using a private estimation of these parameters~\citep{du2020differentially, karwa2018finite, shoham2022asking}. In the nonparametric setting, however, where minimal assumptions (if any) are made about the distribution (e.g., bounded range or moments), such approaches are not applicable. If the distribution of the statistic approaches a \emph{known} parametric limiting distribution (e.g., a normal distribution), CIs can be constructed by estimating its parameters, which can also be done privately \citep{wang2018statistical}. In the absence of a convenient limiting distribution, CIs can sometimes be constructed for specific quantities when their CI can be expressed as another parameter that can be empirically estimated. For example, \citet{drechsler2022nonparametric} provides nonparametric DP CIs for the median by directly estimating other quantiles.
 
The more general resampling methods, such as bootstrapping, imply each element might participate in the resampled dataset more than once, increasing the query's sensitivity to changing a single element, adversely affecting the privacy-accuracy tradeoff. 
\citet{brawner2018bootstrap} tackle this by noting that the maximal number of times each element can be sampled is very low with high probability, so capping the maximal number of appearances gives statistical guarantees that are nearly identical to classical bootstrapping. 

\citet{wang2025differentially} further refine the privacy analysis by characterizing the privacy loss of a single bootstrap sample and its composition using the $f$-DP framework (which translates to approximate-DP). However, their validity guarantees are restricted to the Gaussian mechanism and discrete underlying distributions (though these limitations might be an artifact of the analysis), and in practice, a numerical privacy profile can be provided only for Gaussian DP mechanisms. While the relative degradation in CI width caused by bootstrap privatization decreases as the number of bootstrap samples$\rightarrow \infty$, we show empirically that constant factors play a crucial role.
That said, if the distribution of the noise is independent of the data, which is the case with the noise addition mechanism, one can deconvolute the privacy perturbation noise from the resulting CDF, deriving a valid CI that quantifies the uncertainty without accounting for the extra perturbation. This is an empirical method (no formal proof of validity is provided), but empirically, they show it is very efficient.  

In recent years, a technique known as Bag-of-Little-Bootstrap (\textsc{BLB}), proposed by \citet{kleiner2014scalable}, has emerged as a valuable tool for non-private bootstrapping of large databases. It relies on splitting the data into (disjoint) subsets. For each subset, one bootstraps many samples of the original size and computes a CDF/standard deviation. Finally, one aggregates to a final estimation. BLB subset estimates can also be aggregated \emph{privately} using the subsample-and-aggregate technique \citep{nissim2007smooth}. 

The first paper using this technique was by \citep{covington2025unbiased}, where they used the CoinPress mechanism to aggregate the results, and proved the CI are valid asymptotically. That said, the coin press mechanism as an aggregator performs poorly for small sample sizes, making their method empirically limited for moderate sample sizes. Elegantly, the subsample-and-aggregate technique does not generally require a private estimator of the target quantity, but only of the aggregation, but yet, their analysis is aimed solely for a normal limiting distribution.

The work of \citep{chadha2024resampling} aimed to derive rates for the BLB framework. They worked under much stronger assumptions (Edgeworth expansion), and they also required the existence of a privacy-preserving mechanism to release the quantity of interest. Similarly, they also assume a normal limiting distribution. They proposed two aggregation methods - one, for a quantile-based CI (to which we compared in this paper), some variation of the AboveThreshold mechanism, over a decreasing in size intervals that are considered as CIs, or the inverse sensitivity mechanism over a vector of SD estimates.

Above all, the BLB technique uses sample splitting, thus, each split must be sufficiently representative of the underlying distribution, which is hard to achieve with small datasets when the underlying distribution is more challenging.

\newpage
\section{Detailed analysis}\label{app:detailed_analysis}

In this Appendix, we present additional analyses and provide further details of the experiments described in Section~\ref{sec:num_study}. Supplementary figures, extended discussions, and numerical results are included to give a more complete account of the study.

\paragraph{Code and reproducibility.}
Our implementations of the alternative methods are based on the original works: the exponential mechanism (\expMech) from \href{https://github.com/anonymous-conf-medians/dp-medians}{\texttt{Ira Globus-Harris's} Git repository}, \blb\ from  \href{https://github.com/comptastics/dp_inference}{\texttt{knchadha} Git repository}, and \privBoot\ from \href{https://github.com/Zhanyu-Wang/Differentially_Private_Bootstrap}{\texttt{Zhanyu-Wang} Git repository}. We also intend to make our implementation public.

\subsection{Full description of all methods}

we described at length our method throughout the paper. We now describe the other algorithms we compare to, and the choices of hyperparameters.

\subsubsection*{Algorithm \expMech}

The algorithm is described in \cite{drechsler2022nonparametric}. The key idea is to use the statistical property of order statistics in combination with the exponential mechanism. At the core of the approach lies the observation that the median splits the distribution into two equal halves: the probability of any data point being greater than or less than the median is exactly $0.5$. This property allows the construction of nonparametric confidence intervals by looking at the position (rank) of observations in the sorted data, rather than making assumptions about the underlying distribution. To privatize this process, one can use the exponential mechanism, where the ``utility'' corresponds to how well a chosen rank represents the desired quantile.\\
Drechsler and coauthors use a widened version of the exponential mechanism, which balances the tradeoff between errors in the rank domain and errors in the value domain. This is controlled by a parameter called \emph{granularity}, which determines how ranks are translated into intervals. Smaller values emphasize rank accuracy, while larger values favor stability in the value domain. Since overly large granularity may dominate the resulting interval width, in our implementation, we scale it with the sample size, setting it to $0.1/\sqrt{n}$, so that the intervals continue to shrink as $n$ grows.

\subsubsection*{Algorithm \blb}

The algorithm is described in \cite{chadha2024resampling}. The algorithm, similar to \privSub, first estimates the quantity of interest on the full sample in a differentially private way. For example, for median estimation, it uses the inverse sensitivity mechanism. The privacy budget is divided equally between estimating the full sample estimator and the aggregation step. We note that \citet{chadha2024resampling} also consider another algorithm that estimates the standard deviation and construct a normal-approximation CI, which we did not include, because we are interested in the most general setting. 

The dataset is then partitioned into $s$ subsets. For each set, bootstrapping is used to generate $T(n)=\max\{\min\{n^{1.5}/\log n, 1000\}, 100\}$ samples of size $n$, and generate a non-private estimate for each bootstrap sample. Each estimate is centered by the full-sample estimator, and rescaled by $\sqrt{n}$ (they assume normality, so this is the convergence rate). This results in $s$ vectors of $T$ non-private centered and scaled estimates. 
For the aggregation step, a variation of the aboveThreshold is used (See Algorithm 1 in \cite{chadha2024resampling}). A sequence of intervals decreasing at rate $\sqrt{n}$ centered at zero is fixed in advance.  For each vector of non-private estimates, the number of values contained within each interval is recorded. Consequently, for each interval, a list of coverages across the splits is obtained. Using their variant of the \texttt{AboveThreshold} algorithm, the procedure stops once the median coverage falls below the confidence level, and the previous interval is returned. Intuitively, this yields a private and consistent estimate of the estimator's variability; confidence intervals then follow by rescaling and centering at the full-sample estimator. Privacy is guaranteed by composing the budget allocated to the full-sample estimator with that of the aggregation step.

\subsubsection*{Algorithm \privBoot}

The algorithm is described in \cite{wang2025differentially}. It closely mirrors the non-private bootstrap, except that the statistic is replaced by its private counterpart. Confidence intervals are then obtained by post-processing the resulting private empirical CDF, using the quantiles corresponding to the desired confidence level. Unlike \privSub\ and \blb, this approach does not require splitting the privacy budget and estimating the statistic on the full dataset; instead, it relies on the average of the private bootstrap estimates. For a fair comparison, we use the same number of bootstrap samples~$B$ as in our \privSub\ experiments.

\subsubsection*{Non-private bootstrap}

There are many variations of the bootstrap, such as the smoothed bootstrap. In this work, the following version is employed: given a sample of size $n$, $n$ observations are drawn with replacement from the original dataset, and this procedure is repeated $T(n) = \max\{\min\{5 \sqrt{n}, 500\}, 200\}$ times. For each bootstrap sample, the center is estimated, and the statistic of interest is computed. A confidence interval is then obtained from the empirical distribution of these estimates.

\subsection{Full details of the different experiments}

\paragraph{Median estimation.}
For each distribution, sample size $n$, and algorithm, we construct $1000$ nominal $90\%$ CIs, each using an independent dataset of size $n$. We report the average CI width and the empirical coverage, defined as the fraction of intervals containing the true distributional median. We set $T=60$ for both \privSub\ and \privBoot, and use $m=n^{2/3}$ for \privSub. For the median-specific exponential-mechanism baseline \expMech, we use grid granularity $1/(10\sqrt{n})$. For \blb, we set the number of splits to $s=\lceil 10\log(n)/\varepsilon\rceil$, following \cite{chadha2024resampling}. Other method-specific hyperparameters are chosen according to the implementations and recommendations in the corresponding papers.

All median experiments use total privacy budget $\varepsilon=5$ under pure DP, except for \privBoot, which is analyzed under approximate DP and is run with $(\varepsilon=5,\delta=10^{-6})$. For \privSub\ and \blb, when the algorithm requires splitting the privacy budget across multiple private subroutines, we split the budget equally. For the private median estimator used inside the resampling procedures, we use the inverse sensitivity mechanism, which is pure $\varepsilon$-DP. Since the privacy accounting for \privBoot\ is formulated in terms of $\mu$-GDP, we convert pure DP guarantees to GDP using the fact that every $\varepsilon$-DP mechanism is also $\mu$-GDP with
$
\mu=-2\Phi^{-1}\left(1/(1+e^\varepsilon)\right)
$
(Theorem~5.1 in \citealp{liu2022identification}). We note that the performance of \privBoot\ in this setting may be improved by a sharper GDP analysis of the inverse sensitivity mechanism, or by using an alternative DP median estimator whose privacy guarantees are naturally expressed in the GDP framework.

\paragraph{Logistic regression output perturbation.}
We also consider CI construction for the slope parameter in a simple logistic regression model, following the setup of \citet{wang2025differentially}. For each observation, we sample $z_i \sim \operatorname{Unif}[0,1]$ independently and set $x_i=(1,z_i)^\top$, corresponding to an intercept and one covariate. Given a fixed parameter $\theta=(\theta_1,\theta_2)$, we generate binary responses $y_i \in \{-1,1\}$ according to
$
\prob(Y_i=y_i\mid X_i=x_i)
=
\frac{1}{1+\exp(-y_i(\theta)^\top x_i)}.
$
The target is the slope coordinate of the population minimizer of the regularized logistic risk. The parameter $\theta=(0,0.8)$ is used only to generate the data.

For each dataset, the base private estimator is obtained using the output perturbation mechanism for regularized logistic regression from \citet[Algorithm~5]{wang2019differentially}. Specifically, we first compute the regularized ERM
$
\widehat{\theta}_c
=
\arg\min_{\theta\in\mathbb{R}^2}
\left\{
\frac{1}{n}\sum_{i=1}^n -\log \Pr(y_i\mid x_i)
+
c\|\theta\|_2^2
\right\},
$
and then release
$
\widetilde{\theta}
=
\widehat{\theta}_c+\xi$,
$\xi\sim\mathcal{N}(0,\sigma^2 I_2)$.
The noise level $\sigma^2$ is calibrated using the sensitivity bound for the regularized ERM from \citet{wang2019differentially}; since $z_i\in[0,1]$, we have $\|x_i\|_2\leq \sqrt{2}$, yielding sensitivity of order $1/(nc)$. In this experiment, all private methods are run with total privacy budget $(\varepsilon=5,\delta=10^{-6})$. As in the other experiments, we construct $1000$ nominal $90\%$ CIs for each sample size and method, and report empirical coverage and average width. We use $T=60$ for \privSub\ and \privBoot, $m=n^{2/3}$ for \privSub, and the remaining method-specific hyperparameters are chosen according to the corresponding original papers. For the logistic regression, we fix parameters $\theta=(0,0.8)$, and the regularization $c=0.1$.

\paragraph{Kolmogorov--Smirnov statistic.}
Finally, we evaluate CI construction for the Kolmogorov--Smirnov statistic
$
D_n=\sup_x |F_n(x)-F(x)|,
$
where $F_n$ is the empirical CDF and $F$ is the target CDF. In the experiment shown in Figure~\ref{fig:main_fig}, we consider the uniformity-testing setting: the data are sampled as $\omega_i\sim \operatorname{Unif}[0,1]$, and the target distribution is $F(x)=x$ on $[0,1]$. Under this null model, $\sqrt{n}D_n$ converges to the Kolmogorov distribution. This example is useful because the KS statistic has a non-normal limiting distribution (Kolmogorov distribution). In this experiment, we take subsampling as the non-private baseline instead of the bootstrap, because the non-private bootstrap yields undercoverage for most of the sample sizes considered.

The KS statistic has global sensitivity $1/n$ \citep{awan2025differentially}, and can therefore be privatized directly by standard noise-addition mechanisms. For \privSub, we apply the Laplace mechanism to the KS statistic computed on each subsample, with sensitivity $1/m$. For \privBoot, we use the Gaussian mechanism, following the privacy accounting of \citet{wang2025differentially}. As in the median experiment, all pure-DP methods are run with total privacy budget $\varepsilon=5$, while \privBoot\ is run with $(\varepsilon=5,\delta=10^{-6})$. We again set $T=60$ for \privSub\ and \privBoot, use $m=n^{2/3}$ for \privSub, and construct $1000$ nominal $90\%$ CIs for each sample size and method. We report empirical coverage and average CI width.

\subsection{A discussion about the limitations and advantages of different methods}

Selecting a private inference procedure is delicate because performance depends jointly on many factors, such as the regularity of the target functional and data distribution, the availability of accurate privacy mechanisms, how privacy composition interacts with subsampling or resampling, and more. We outline guiding questions and some discussion.

First, we need to ask: Is there a privacy mechanism that accurately privatizes the statistic of interest on the full sample?
If the answer is no, i.e., there is no mechanism that satisfies $\tau_n$-consistency (e.g., ratios and other sensitive statistics), then BLB should be considered for its use of the subsample-and-aggregate technique. The algorithm \blb\ also assumes a variation of $\tau_n$-consistency, but the BLB framework can be modified to be more general.

Is the underlying distribution ``well-behaved'' at the scale of each split?
The accuracy of BLB relies on each split statistically representing the population so that the split-level statistic has (approximately) the same law as the full-sample statistic. This tends to hold for unimodal, light-tailed, continuous distributions, but can fail with heavy tails, mixture structure, or discontinuities, as shown in Figure \ref{fig:main_fig}. Some splits may land in different mixture components or be dominated by tail observations, biasing the aggregate. While the choice $s=O\left( \log(n)/\varepsilon\right)$ in \blb\ implies that these effects are asymptotically negligible, their effect is significant at a reasonable sample size.
Private subsampling and bootstrapping CDF tolerates more heterogeneity. In the subsampling framework, it only requires the classical conditions ($m\to\infty$, $m/n\to 0$) and a weak limit for the root; individual splits need not be ``representative'' in the BLB sense as long as the \emph{empirical} subsampling CDF converges.

Must the method ``know'' the convergence rate $\tau_n$?
Our approach rescales quantiles from the $m$-scale to the $n$-scale, which \emph{uses} the ratio $\tau_m/\tau_n$ (e.g., $\sqrt{m/n}$). BLB and Bootstrap methods do not assume rate knowledge; they use resampling to size $n$, but at a cost: in the case of BLB the statistical validity depends on the bootstrap approximation at the privatized-split level and on the aggregation rule, and for private bootstrap it comes at a cost in constants. 

Does one need \emph{the whole distribution} (not just a single CI)?
\privSub\ and \privBoot\ return an estimate of the \emph{entire} CDF of the estimator (at the $n$-scale). This enables symmetry checks, tail-shape assessment, and simultaneous inference across many $\alpha$ without additional privacy cost. One-shot mechanisms such as \expMech\ and \blb\ target a single pivot, and further analysis requires further privacy budget.

\subsection{Rates, hyper-parameters and privacy budget}\label{sec:rates}

Our accuracy guarantees are asymptotic, and do not provide formal rates (e.g., the coverage is $1-\alpha + f(n)$ for some function $f$), since giving rates is impossible under such minimal distributional assumptions, even in the non-private case. As a result, we cannot compute optimal values of $m$ and $T$ as a function of the sample size and the privacy parameters. The choice of $m$ and $T$ must depend on the optimized quantity (e.g., coverage accuracy, expected width, etc.), the additional distributional assumptions, and the properties of the estimated quantity. 

Here, we set some underlying rules on how to choose the hyperparameters. First, subsampling requires that $m=o(n)$ under minimal distributional assumptions. In fact, in some cases, such as linear statistics, we can choose $m=\theta(n)$ \cite{politis1999subsampling}. A rule of thumb is to consider $m = \omega(\sqrt{n})$. In classic subsampling, $m$ is smaller than $n$ in order to decrease correlation between subsamples. In our case, a smaller $m$ also balances the privacy budget (amplification by subsampling), so it has a second role. As $n \rightarrow \infty$, and the perturbation becomes negligible empirically, we can take larger $m$ (approaching the non-private optimal).
Throughout our experiments, we fixed $m=n^{2/3}$, a number that was derived for non-private subsampling under further assumptions.

Another parameter is the number of subsamples, $T$. Usually, in the non-private literature, $T$ is taken to be ``large enough'' such that the error induced from sampling $T$ instead of $\binom{n}{m}$ subsamples (Monte-Carlo error) is negligible, and in fact, it is treated as zero when analyzing resampling methods. In our case, $T$ plays a crucial role, since we pay a factor of $\sqrt{T}$ or $T$ (advanced or basic composition) in the privacy budget. Since we use the quantiles method, $T$ has to be large enough such that we can take the upper and lower $\alpha/2$ quantile of the empirical distribution. On the other hand, looking at the width of the confidence intervals, once $T$ is not too small, increasing it further does not decrease the expected width, but its variability (see also \ref{app:hyper-parameters}). On the contrary, it only increases the perturbation. Throughout our experiments, we fixed $T=60$, such that $T \cdot m > n$ for simplicity, and did not increase it as a function of $n$. 

Generally, we have that setting $ T\cdot m = \omega(n)$ is not a reasonable choice, since in this case some of the elements are completely discarded, so with these parameters, one might as well split the dataset into $T$ disjoint subsets of size $n/T$. On the other hand, setting $ T\cdot m = O(n^2)$ is also not a reasonable choice, since the perturbation would be too large for most reasonable mechanisms (i.e., noise addition).

Another hyperparameter is the split of the privacy budget between the full-sample estimator and the subsampling estimates. We do not derive a general theory for the optimal split. In Figure \ref{fig:varying_epsilon_split}, we report a sensitivity analysis showing that moderate splits perform similarly, while the best split can be statistic-dependent.

Note that our empirical evaluation considered pure-DP. This has two reasons: first, both \blb\ and \expMech\ are pure-DP, and we wanted the comparison to be as accurate as possible. Second, advanced composition's guarantees are asymptotic. When we chose $\delta=n^{-8}$, in the sample-size regime we chose, advanced composition gave worse $\varepsilon$ than basic composition. For larger sample sizes and $T$, we expect advanced composition to outperform, but for small $n$, we only considered a relatively small $T$, and the asymptotics of basic composition did not kick in yet. That said, for specific mechanisms such as Gaussian noise addition, the composition is very tight and results in very good performances, as we illustrate in Figure~\ref{fig:ks_gaussian_noise}.

\subsubsection{Hyper-parameters under normality setting}\label{app:hyper-parameters}

Consider only cases where the statistic is asymptotically normal, with variance denoted by $\sigma^2$, and the privacy mechanism on each subset is the Gaussian mechanism, so that
$$
\widetilde{\theta}_i = \theta(\dataset(I_i)) + Z_i,
\quad
Z_i \sim N(0,\sigma_z^2), \ \ \Var(\theta(\dataset(I_i))) = \frac{\sigma^2}{m}.
$$

If we denote by $\widetilde{\sigma}^2$ the estimated variance from subsampling, then we have that $\widetilde{\sigma}^2 \approx \frac{\sigma^2}{m}+\sigma_z^2$. If the goal is to minimize the \emph{expected width} of a normal-approximated CI, then increasing $T$ is not beneficial: once $T$ is large enough for the empirical variance to stabilize, increasing it further does not improve the target variance itself, and only increases the privacy perturbation. This supports taking $T$ to be a fixed (or very slowly growing) moderate constant, for example $T=60$. The same logic holds for $m$. Since taking $m<n/T$ is basically neglecting data, we get that choosing $m=n/T$ is optimal for that goal. This is also the result obtained in \cite{dette2025gaussian}. That said, if one chooses $m=n/T$, it is better to do sample-splitting instead of subsampling, so we require that $m=\omega(\sqrt{n})$, on top of $m=o(n)$. In our experiment, we take $m=n^{2/3}$. 

If the goal is \textit{to minimize the mean squared error} of our variance estimate, we end up with very different hyper-parameters regime. Concretely, let
$
\widetilde{\theta}_1,\ldots,\widetilde{\theta}_T
$
be the private estimates computed on the $T$ subsets of size $m$,
and define
$$
\bar{\widetilde{\theta}}
=
\frac{1}{T}\sum_{i=1}^T \widetilde{\theta}_i,
\qquad
\widetilde{\sigma}^2
=
\frac{1}{T-1}\sum_{i=1}^T
\bigl(\widetilde{\theta}_i-\bar{\widetilde{\theta}}\bigr)^2.
$$

Under the additional regularity assumptions of this heuristic discussion, the non-private subset statistic is approximately Gaussian, and its variance across subsets satisfies
$
\frac{m}{n-m}\Var\bigl(\theta(\dataset(I))\bigr) \propto \frac{1}{n}.
$
Therefore, we obtain
$$
\Var\bigl(\widetilde{\sigma}^2\bigr)
=
\widetilde{O}\!\left(
\frac{1}{T}
\left[
\frac{1}{n}
+
\frac{m}{n-m}\sigma_z^2
\right]^2
\right),
$$
where $\widetilde{O}(\cdot)$ hides only logarithmic privacy factors (i.e., $\delta$).

Next, assume that the statistic on a subset of size $m$ has sensitivity of order $1/m$, and that the local privacy budget $\varepsilon_{\operatorname{sub}}$ is related to the total budget $\varepsilon$ through amplification by subsampling and advanced composition, namely
$\varepsilon
\propto
\frac{m}{n}\sqrt{T}\varepsilon_{\operatorname{sub}}.$
Ignoring $\delta$-dependence and logarithmic factors, Gaussian noise addition gives
$
\sigma_z^2
\propto
\frac{1/m^2}{\varepsilon_{\operatorname{sub}}^2}
\propto
\frac{T}{n^2\varepsilon^2}.
$
Substituting this into the previous display and simplifying yields
$$
\Var\bigl(\widetilde{\sigma}^2 \bigr)
=
\widetilde{O}\!\left(
\frac{1}{Tn^2}
+
\frac{m}{(n-m)n^3\varepsilon^2}
+
\frac{m^2T}{(n-m)^2n^4\varepsilon^4}
\right).
$$

This expression highlights the different roles of $T$ and $m$. Increasing $T$ reduces the Monte Carlo fluctuation through the prefactor $1/T$ (first term), but at the same time worsens the privacy noise through composition (last term).

The choice of $m$ is simple in this case. In the minimal-assumption subsampling theory, one requires $m=o(n)$, whereas under the normal limiting distribution it is known that taking $m=\Theta(n)$ is optimal. Optimizing over $T$ gives $T=n\varepsilon^2$, which is exactly the hyper-parameter obtained in \cite{wang2025differentially}. 

We emphasize that this calculation is only a heuristic guide for a more regular regime in which a normal approximation is appropriate. It is not part of our minimal-assumption theory, and does not replace the quantile-based construction analyzed in the main text.

\subsection{Further empirical evaluation of \privBoot}

In the right panel of Figure~\ref{fig:main_fig}, we consider the KS statistic and compare \privSub\ with Laplace noise addition, which satisfies $5$-pure DP, to \privBoot, which uses Gaussian noise addition and satisfies $(\varepsilon=5,\delta=10^{-6})$-DP. We require pure DP for a fair comparison with \blb; however, in this setting, \privBoot\ is additionally allocated a $\delta=10^{-6}$ privacy budget. Allowing \privSub\ the same privacy budget as \privBoot\ enables substantially improved performance. In this case, we compute the exact privacy loss distribution (PLD) for Gaussian noise via numerical evaluation, rather than relying on the general analytic bound of Theorem~\ref{thm:eps_delta_total}. We note that the privacy analysis of \privBoot\ is likewise performed numerically. Accordingly, we include a Gaussian-noise variant of our method, denoted \texttt{PrivSub(Gaussian)}, and compare it to the pure-DP Laplace-noise variant shown in Figure~\ref{fig:main_fig}, denoted \texttt{PrivSub(Laplace)}.

Figure~\ref{fig:ks_gaussian_noise} demonstrates that, under a fair comparison in which both \privBoot\ and \privSub\ are assigned the same privacy budget $(\varepsilon=5,\delta=10^{-6})$, \privSub\ substantially outperforms \privBoot, achieving performance close to the non-private baseline, with only minor under-coverage at small sample sizes.

\begin{figure*}[t!]
\vspace{.3in}
\centering\includegraphics[width=0.8\linewidth]{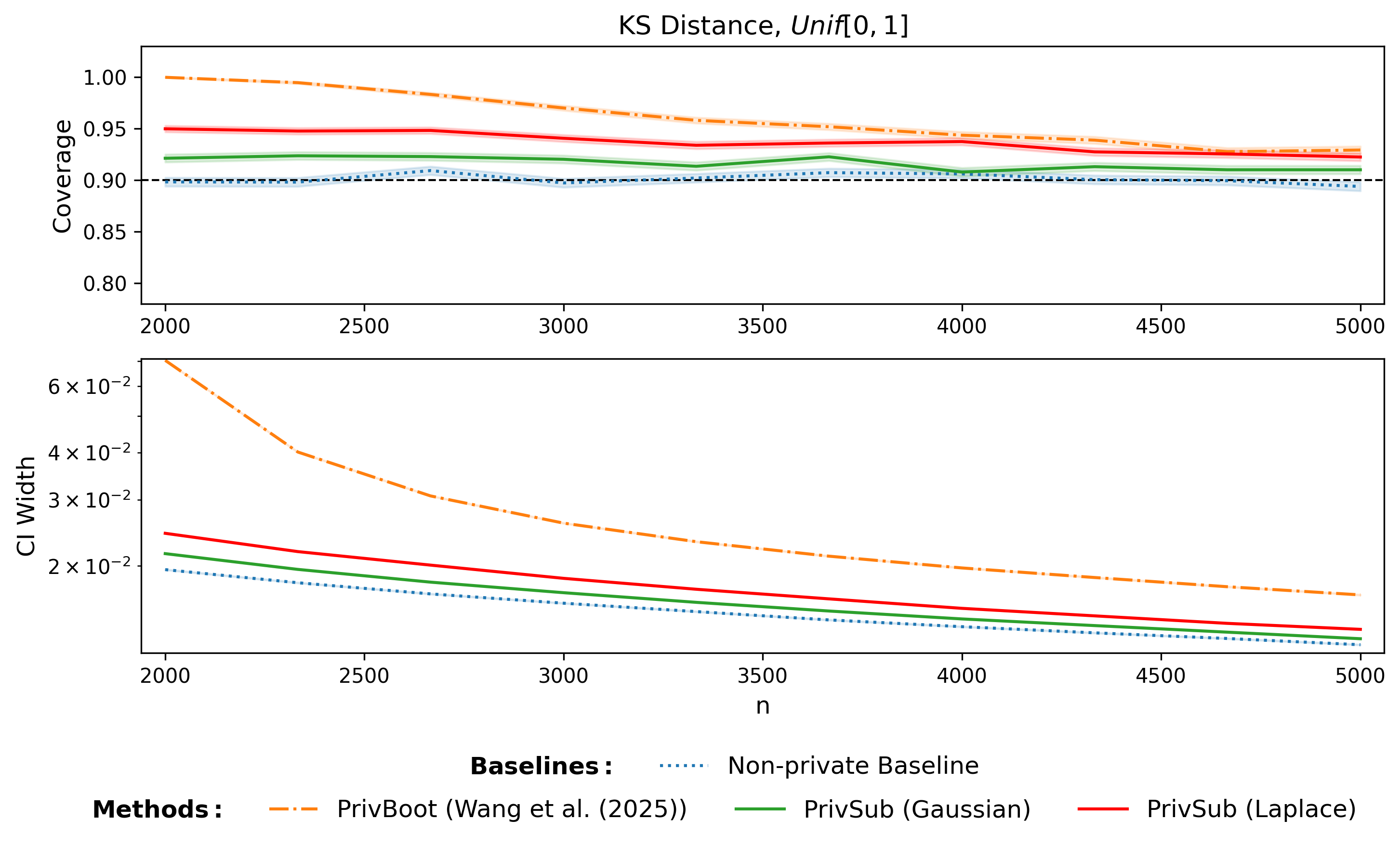}
    \caption{Comparison of \privSub\ and \privBoot\ for confidence interval construction of the KS statistic under matched privacy budgets. Both methods are assigned the same privacy budget, $(\varepsilon=5,\delta=10^{-6})$-DP, and use Gaussian noise addition. We include two variants of our method: \texttt{PrivSub (Gaussian)}, which uses Gaussian noise, and \texttt{PrivSub (Laplace)}, which uses Laplace noise and satisfies $5$-pure DP (shown in Figure~\ref{fig:main_fig}). The results demonstrate that, under a fair privacy comparison, \privSub\ substantially outperforms \privBoot, achieving performance close to the non-private baseline, with only minor under-coverage at small sample sizes.}

\label{fig:ks_gaussian_noise}
\end{figure*}

\subsection{Full empirical evaluation}\label{subsec:full_emp_eval}

In this Appendix, we test the robustness of our results by varying the privacy budget (Figures~\ref{fig:median_comparison_normal},
\ref{fig:logistic_comparison}, and \ref{fig:ks_comparison_uniform}), the subset size, $m$ (Figure~\ref{fig:varying_m}) the number of subsets (Figure~\ref{fig:varying_T}), the split of the privacy budget (Figure~\ref{fig:varying_epsilon_split}), and the confidence level (Figures ~\ref{fig:varying_significance_width} and \ref{fig:varying_significance_coverage}). 
Overall, our results show that the method is robust to hyperparameter choices and can be improved by calibrating them to the specific problem.

\paragraph{Varying the privacy budget}
We repeat the experiment described in Section \ref{sec:num_study} varying the privacy parameter, $\varepsilon=2,5,8$. Overall, Figures~\ref{fig:median_comparison_normal}, \ref{fig:logistic_comparison}, and 
\ref{fig:ks_comparison_uniform} demonstrate the robustness of our results across a range of privacy parameters. In Figure~\ref{fig:median_comparison_normal}, we observe that for $\varepsilon=2$, \privSub\ under-performs relative to \privBoot; however, this gap narrows as the sample size increases. This behavior is attributable to an overestimation of variability induced by rescaling the private empirical CDF from subsamples of size~$m$ to the full sample size~$n$. In contrast, when $\varepsilon=8$, \privSub\ performs comparably to the non-private baseline, while \privBoot\ continues to exhibit a substantial gap from the non-private benchmark. \blb\ shows a similar trend for $\varepsilon=8$, with some under-coverage. 

In Figure~\ref{fig:logistic_comparison}, we consider the logistic-regression slope experiment. For a small privacy budget, $\varepsilon=2$, \privSub\ produces wider intervals and noticeable under-coverage at smaller sample sizes, but its coverage improves with $n$, and its width decreases toward the non-private baseline. For $\varepsilon=5$ and $\varepsilon=8$, \privSub\ is well calibrated and remains close to the non-private benchmark in both coverage and width. In contrast, \privBoot\ is substantially more conservative and produces much wider intervals for $\varepsilon=2,5$, while \blb\ often has competitive widths with good coverage.

For interpreting the \blb\ results, we note that one of its hyperparameters is the range over which the CI endpoints are searched. Following \citet{chadha2024resampling}, we use a range whose width is about $2.5$ times the true CI width. This is a favorable tuning choice, since it depends on the true variance of the estimator rather than only on the problem domain. We keep this choice to match the original implementation, but a more conservative range can substantially affect the performance of \blb.

In Figure~\ref{fig:ks_comparison_uniform}, we consider the KS statistic for data drawn from the $\mathrm{Unif}[0,1]$ distribution. The figure illustrates that \privSub\ extends naturally to non-smooth statistics with non-normal limiting distributions, achieving performance close to the non-private baseline. \privSub\ is well calibrated and performs favorably, while \privBoot\ yields substantially wider confidence intervals and, for $\varepsilon=8$, simultaneously exhibits under-coverage. As in Figure \ref{fig:main_fig}, we can see that \blb\ produces substantially wider confidence intervals.

\begin{figure}[ht]
\vspace{.3in}
\centering\includegraphics[width=\linewidth]{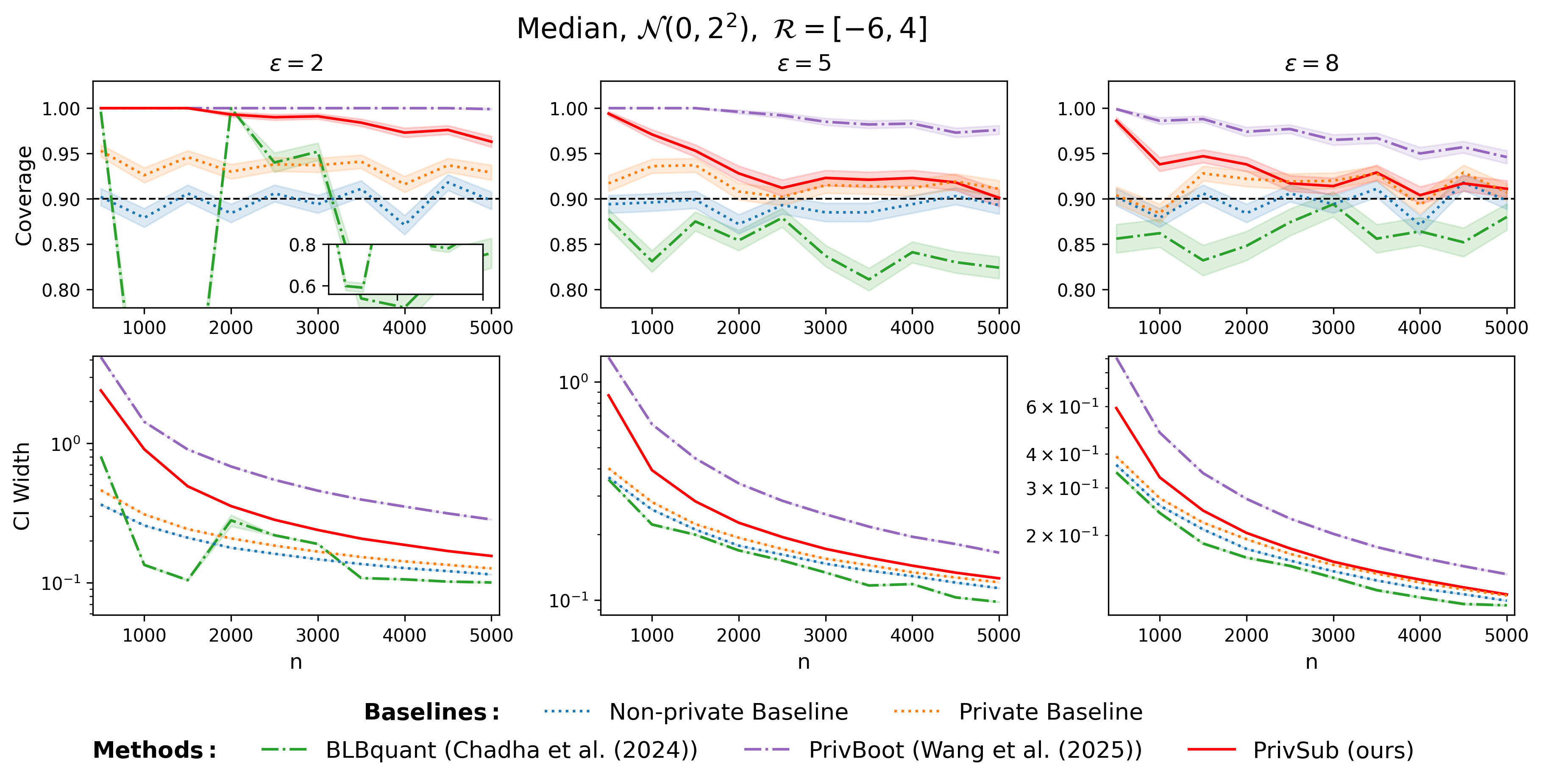}
\caption{
Comparison of our method (\privSub) with existing general, nonparametric DP CI methods: the BLB-based approach (\blb; \citealp{chadha2024resampling}) and the private bootstrap (\privBoot; \citealp{wang2025differentially}). We also include two baselines: a median-specific private method based on the exponential mechanism (\expMech; \citealp{drechsler2022nonparametric}), and non-private procedures—bootstrap for median estimation referred to as the non-private baseline. We evaluate $0.9$-CI estimation for the median under a truncated normal distribution. All methods operate under $\varepsilon$-pure DP ($\varepsilon=2,5,8$), except for \privBoot, which additionally uses a privacy parameter of $\delta=10^{-6}$. A detailed discussion is provided in Section~\ref{sec:num_study}.}
\label{fig:median_comparison_normal}
\end{figure}

\begin{figure}[ht]
\vspace{.3in}
\centering\includegraphics[width=\linewidth]{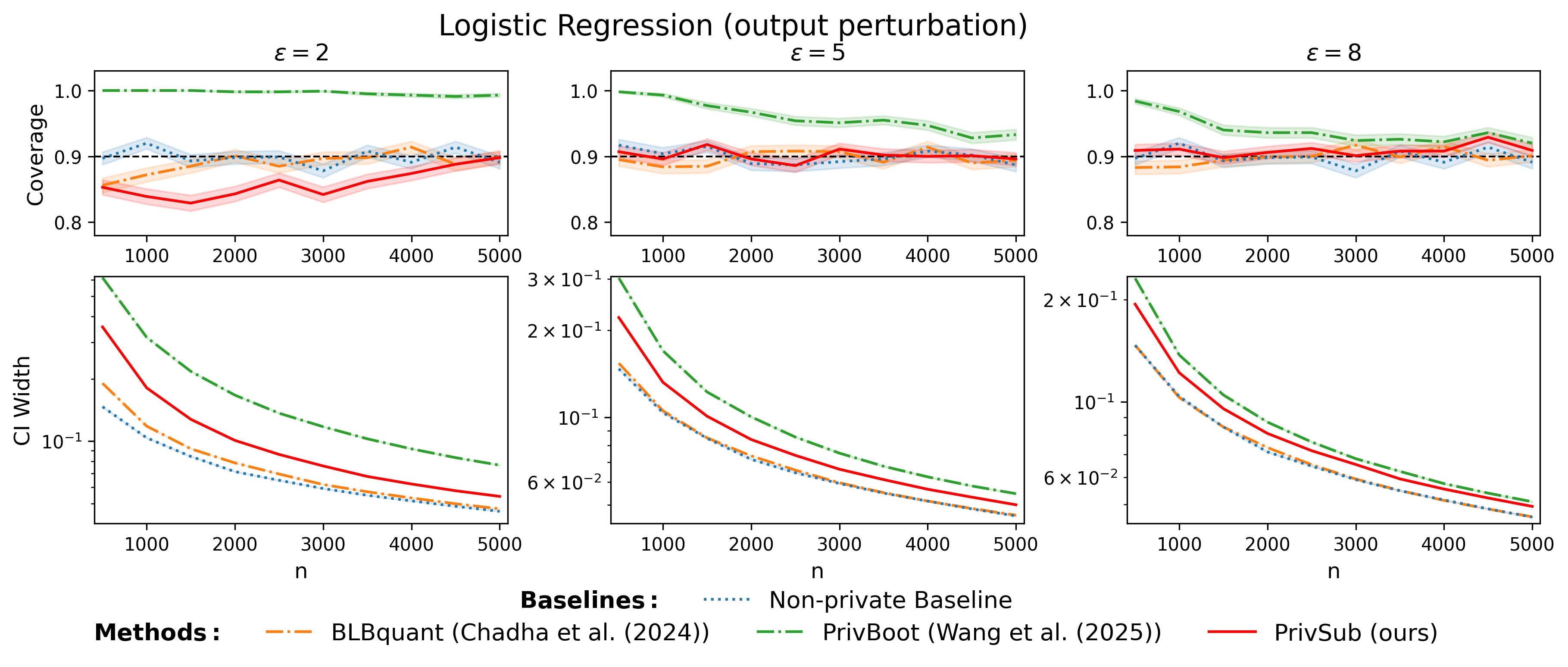}
\caption{Comparison of our method (\privSub) with existing general, nonparametric DP CI methods: the BLB-based approach (\blb; \citealp{chadha2024resampling}) and the private bootstrap (\privBoot; \citealp{wang2025differentially}). We also include a non-private bootstrap referred to as the non-private baseline. We evaluate $0.9$-CI estimation of logistic regression slope coefficient, where the data are drawn from the uniform distribution, and the response from the logistic model. All methods operate under $(\varepsilon,\delta=10^{-6})$-DP, ($\varepsilon=2,5,8$). A detailed discussion is provided in Section~\ref{sec:num_study}.}
\label{fig:logistic_comparison}
\end{figure}

\begin{figure}[ht]
\vspace{.3in}
\centering\includegraphics[width=\linewidth]{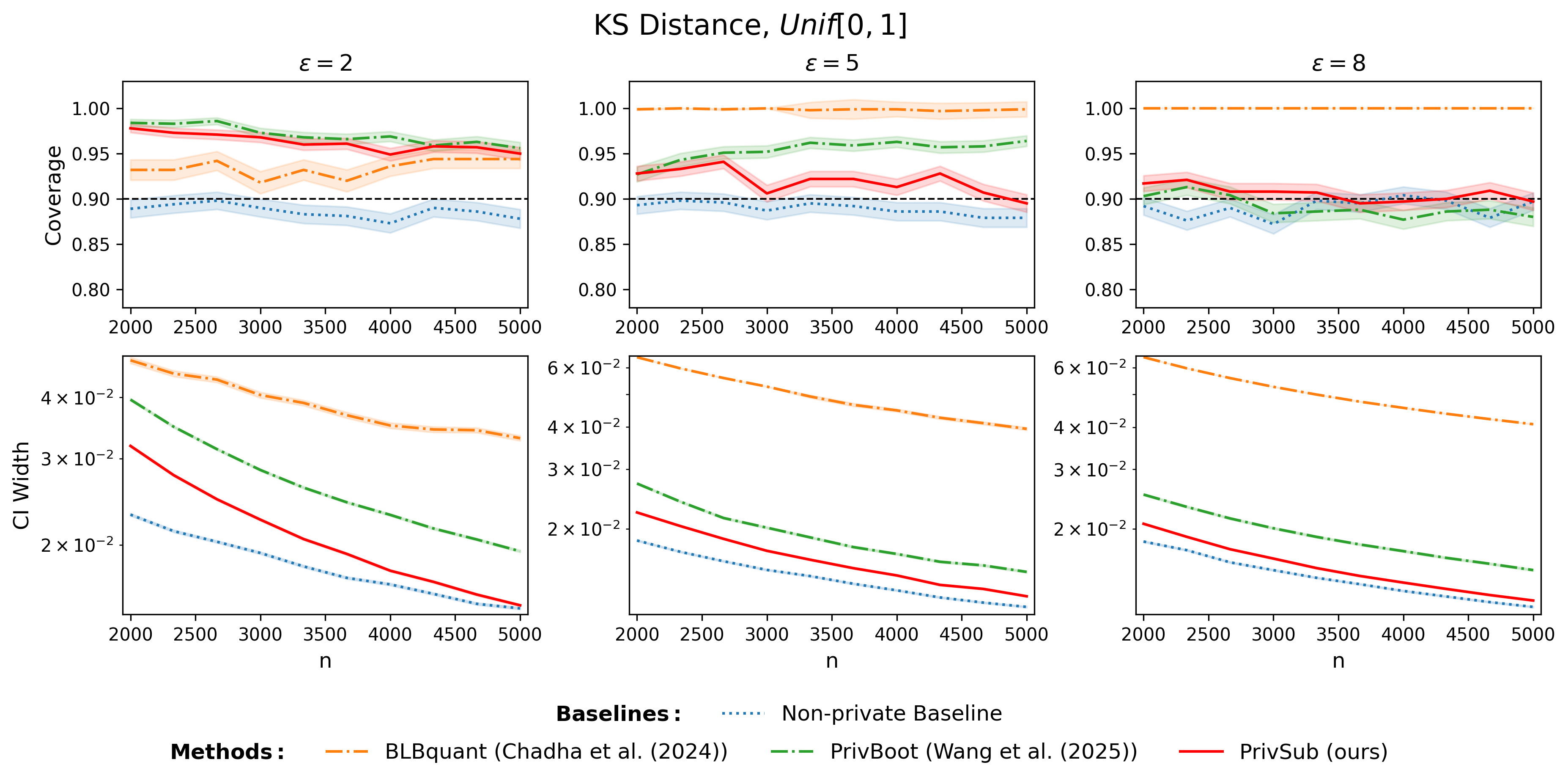}
\caption{
Comparison of our method (\privSub) with existing general, nonparametric DP CI methods: the BLB-based approach (\blb; \citealp{chadha2024resampling}) and the private bootstrap (\privBoot; \citealp{wang2025differentially}). We also include non-private subsampling for the KS statistic referred to as the non-private baseline. We evaluate $0.9$-CI estimation KS distance in uniformity testing, where the data are drawn from the uniform distribution. All methods operate under $\varepsilon$-pure DP ($\varepsilon=2,5,8$), except for \privBoot, which additionally uses a privacy parameter of $\delta=10^{-6}$. A detailed discussion is provided in Section~\ref{sec:num_study}.}
\label{fig:ks_comparison_uniform}
\end{figure}

\paragraph{Varying the subset size, $m$.}
Figure~\ref{fig:varying_m} studies the sensitivity of \privSub\ to the choice of the subsample size $m$. Across all three settings, the method remains close to nominal coverage for a fairly wide range of choices of $m$, suggesting that the procedure is not overly sensitive to this tuning parameter. As expected, $m=\sqrt{n}$ provides the narrower confidence intervals, as explained in Section \ref{sec:priv_and_acc_privsub}. Increasing $m$ reduces the privacy amplification obtained from subsampling, and hence requires more noise in each private release. The empirical results suggest that, in the regimes considered here, the statistical benefit of increasing $m$ slightly is outweighed by this additional privacy cost, leading to modestly smaller widths for smaller choices of $m$. Nevertheless, the differences are not dramatic, and the default choice $m=n^{2/3}$ provides a stable middle ground: it achieves near-nominal coverage while producing widths close to those obtained by the more aggressive choices of $m$.

\begin{figure}[ht]
\vspace{.3in}
\centering\includegraphics[width=\linewidth]{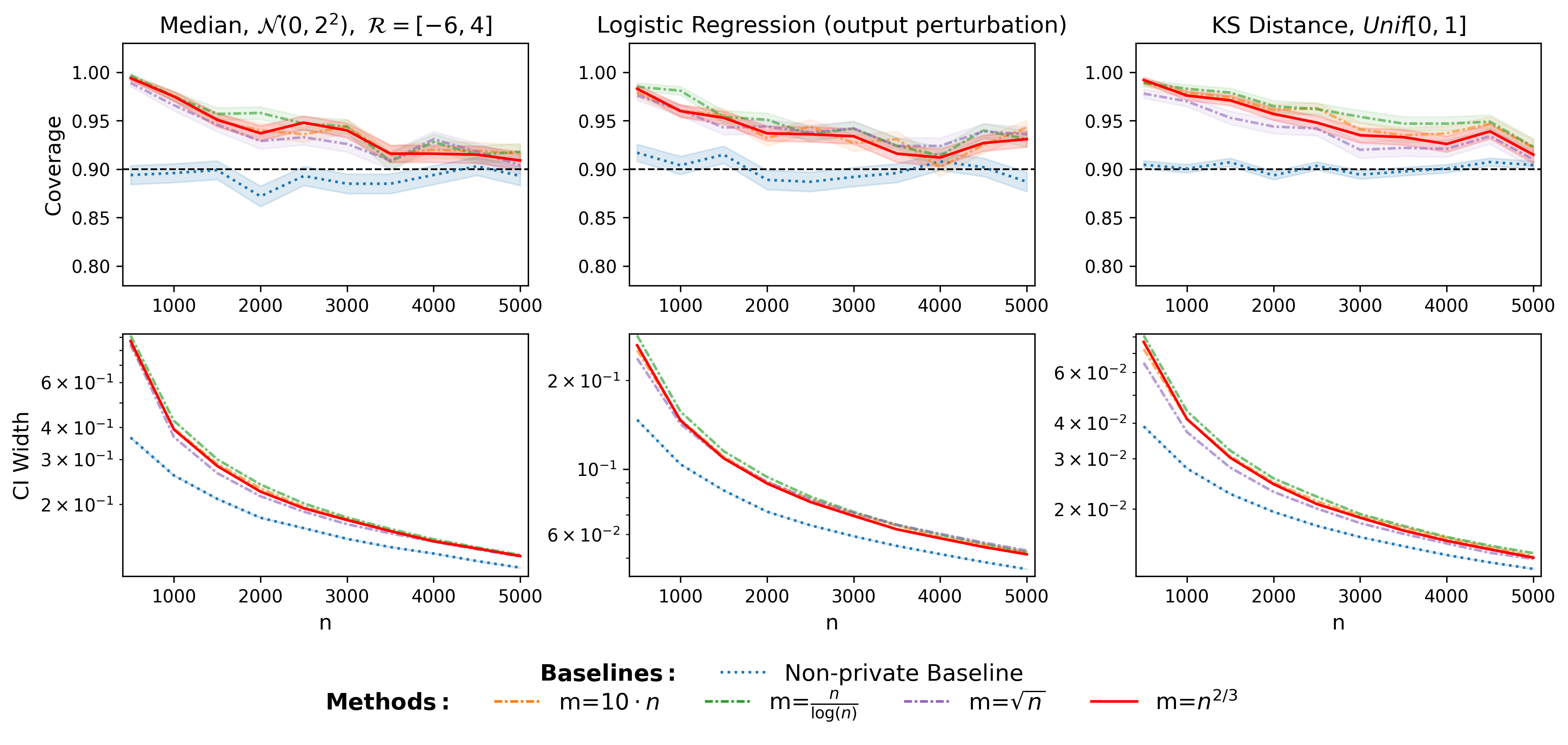}
\caption{Sensitivity of \privSub\ to the subsample-size function $m(n)$.\\
Finite-sample performance of \privSub\ under different choices of the subsample size $m$ as a function of the full sample size $n$. Results are shown for median estimation, logistic-regression slope estimation with output perturbation, and KS-distance estimation, with $T=60$ and $\varepsilon=5$. The tested choices are $m=0.1n$, $m=n/\log(n)$, $m=
n^{2/3}$, and $m=\sqrt n$, with $m=n^{2/3}$ used throughout the main experiments. The dotted curve denotes the corresponding non-private baseline: bootstrap for the median and logistic-regression experiments, and subsampling for the KS statistic. For width, shaded bands indicate $\pm2$ empirical standard errors across repetitions; for coverage, shaded bands indicate $\pm 2$ binomial standard errors. 
Across these choices, \privSub\ maintains near-nominal $0.9$ coverage, while larger subsample sizes generally yield slightly wider confidence intervals. \\
In general, smaller choices of $m$ are preferable, although the optimal choice is statistic-dependent; nevertheless, performance remains relatively robust across the tested $m(n)$ functions.}
\label{fig:varying_m}
\end{figure}

\paragraph{Varying the number of subsets, $T$.}
Figure~\ref{fig:varying_T} examines the sensitivity of \privSub\ to the number of subsamples $T$. Overall, the results indicate that the method is fairly robust to this choice: across the three examples, all values of $T$ considered yield near-nominal coverage and comparable confidence interval widths. At the same time, the figure suggests that taking $T$ unnecessarily large is not always beneficial. Increasing $T$ provides a more stable empirical approximation to the subsampling distribution, but it also increases the privacy cost through composition, and hence the magnitude of the privacy perturbation. In several of the settings considered here, smaller or moderate values of $T$ therefore achieve similar, and sometimes slightly better, performance than larger values. This supports using a relatively modest value of $T$ in practice, once the empirical distribution is estimated with sufficient stability. The optimal choice, however, is not uniform across statistics: depending on the stability of the statistic and the relative contribution of privacy noise, larger values of $T$ may still be useful. In our experiments, the default choice $T=60$ provides a conservative and stable compromise across all settings.

\begin{figure}[ht]
\vspace{.3in}
\centering\includegraphics[width=\linewidth]{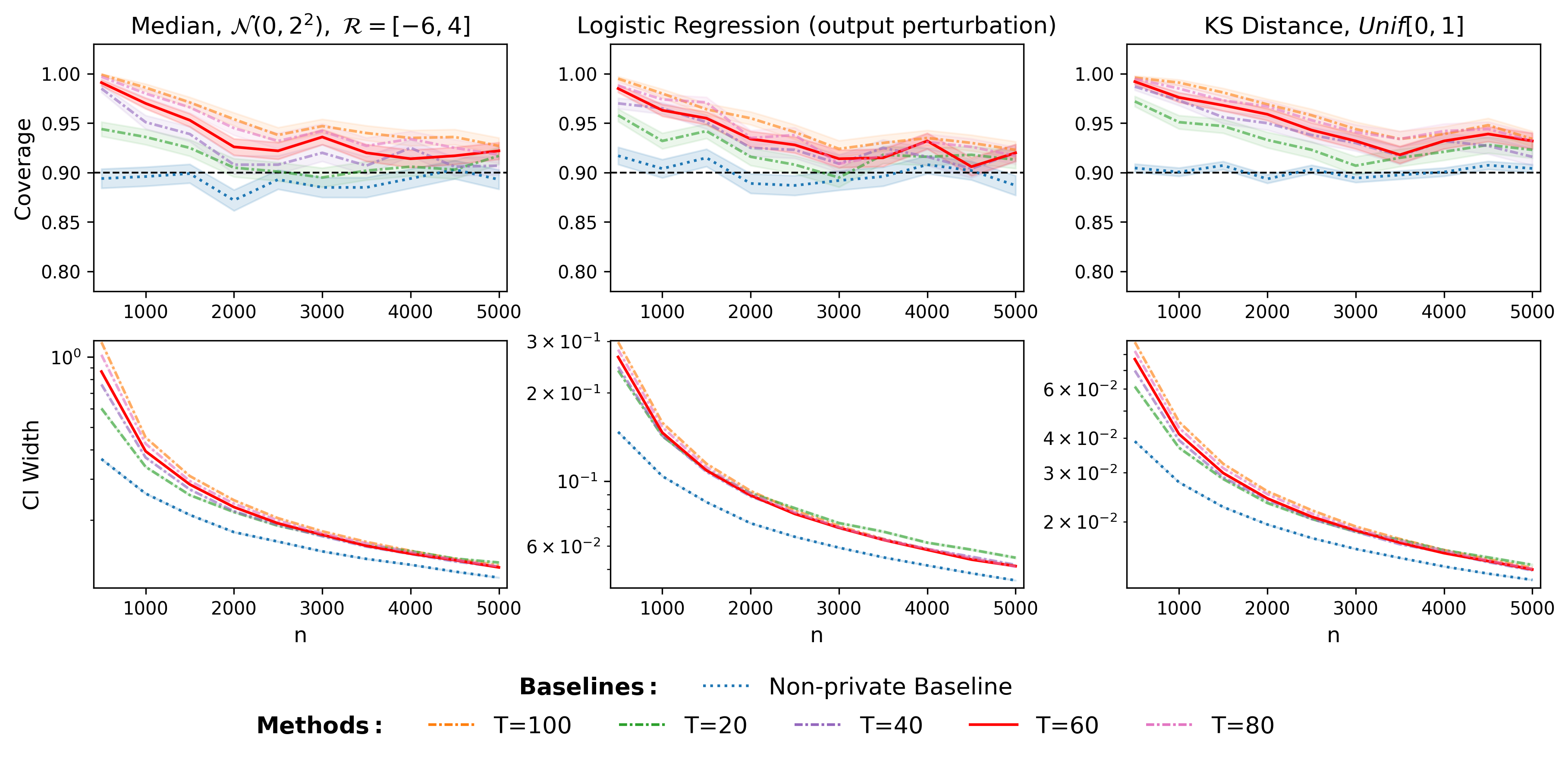}
\caption{Sensitivity of \privSub\ to the number of subsamples $T$.\\
Finite-sample performance of \privSub\ under different choices of the number of subsamples $T$ used to construct the private confidence interval. Results are shown for median estimation, logistic-regression slope estimation with output perturbation, and KS-distance estimation, for $m=n^{2/3}$ and $\varepsilon=5$. The dotted curve denotes the corresponding non-private baseline: bootstrap for the median and logistic-regression experiments, and subsampling (with $m=n^{2/3}$) for the KS statistic. For width, shaded bands indicate $\pm 2$ empirical standard errors across repetitions; for coverage, shaded bands indicate $\pm 2$ binomial standard errors. The curve corresponding to \(T=60\) indicates the fixed default number of subsamples used throughout the main experiments; it is included here as a reference rather than as a tuned choice.
Across the tested range of subsample counts, \privSub\ maintains near-nominal $0.9$ coverage while achieving comparable CI widths, indicating that performance is relatively robust to this tuning parameter.\\
$T$ should generally be kept relatively small, though the optimal choice is statistic-dependent: larger $T$ increases the composition cost, forcing more noise per release under a fixed total privacy budget, whereas overly small $T$ can yield overly conservative quantiles.}
\label{fig:varying_T}
\end{figure}

\paragraph{Varying the privacy budget split.}
Figure~\ref{fig:varying_epsilon_split} studies the effect of splitting the privacy budget between the full-sample estimate and the subsampling step. We let $q$ denote the fraction of the total budget allocated to the full-sample statistic, so that $(\varepsilon_{\mathrm{full}},\delta_{\mathrm{full}}) = q(\varepsilon,\delta)$, while the remaining budget is used for the private subsample estimates. This split controls a natural tradeoff: allocating more budget to the full-sample statistic improves the accuracy of the center of the CI, but leaves less budget for estimating the subsampling distribution, thereby increasing the privacy perturbation in the CI calibration step. 

Overall, the procedure is fairly stable for moderate choices of the split. In the median and KS experiments, the choices $q \in \{0.25,0.4,0.5,0.6\}$ lead to broadly similar coverage and width, while $q=0.75$ is noticeably more conservative and yields wider intervals. This suggests that allocating too little budget to the subsampling stage can be harmful, since the empirical distribution used to calibrate the CI becomes overly noisy. The logistic-regression experiment shows that the best split can be statistic-dependent: in this setting, $q=0.25$ gives particularly well-calibrated coverage and relatively narrow intervals. These results suggest that the optimal privacy split depends on the statistic, mechanism, and sample size. For simplicity and to avoid tuning the split separately in each experiment, we use the balanced choice $q=0.5$ as our default.

\begin{figure}[ht]
\vspace{.3in}
\centering\includegraphics[width=\linewidth]{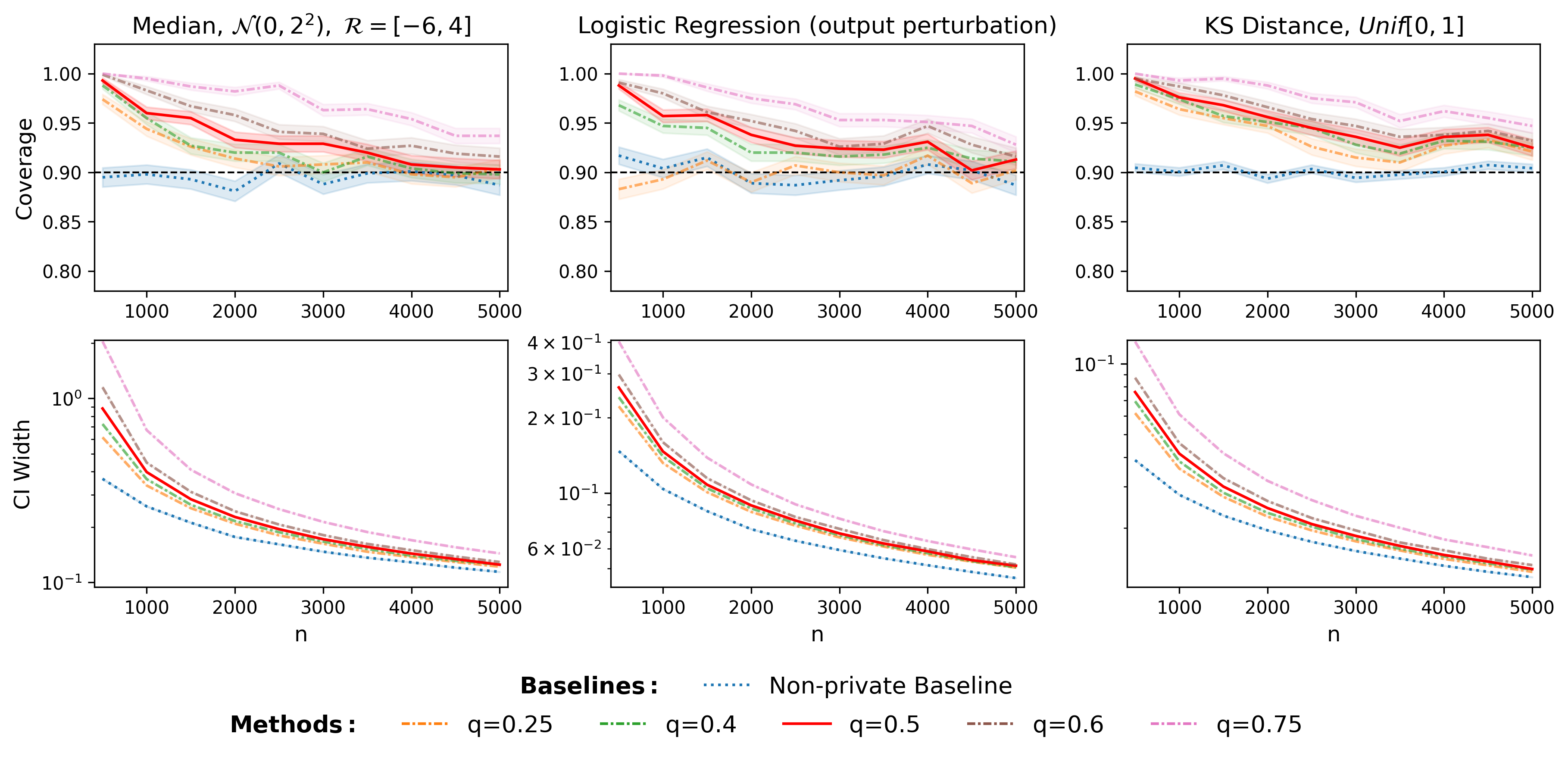}
\caption{Sensitivity of \privSub\ to the privacy-budget split parameter $q$.\\
Finite-sample performance of \privSub\ under different allocations of the total privacy budget $\varepsilon=5$ between center estimation and subsampling, where $q\varepsilon$ is assigned to $\varepsilon_{\mathrm{full}}$ and $(1-q)\varepsilon$ is assigned to $\varepsilon_{\mathrm{sub}}$.  The dotted curve denotes the corresponding non-private baseline: bootstrap for the median and logistic-regression experiments, and subsampling with $T=60, m=n^{2/3}$ for the KS statistic. The red curve corresponds to the fixed default allocation \(q=0.5\) used throughout the main experiments; it is included here as a reference rather than as a tuned choice. For width, shaded bands indicate $\pm 2$ empirical standard errors across repetitions; for coverage, shaded bands indicate $\pm 2$ binomial standard errors.\\
Results are shown for median estimation, logistic-regression slope estimation with output perturbation, and KS-distance estimation. Across a range of values $q \in \{0.25,0.4,0.5,0.6,0.75\}$, \privSub\ maintains near-nominal $0.9$ coverage while achieving comparable CI widths, indicating that performance is relatively robust to the choice of privacy-budget split.
}
\label{fig:varying_epsilon_split}
\end{figure}

\paragraph{Varying the confidence level, $1-\alpha$}
Figures~\ref{fig:varying_significance_width} and~\ref{fig:varying_significance_coverage}
compare the widths and coverages of CIs across nominal confidence levels
$1-\alpha \in \{0.85,0.9,0.95\}$.
As expected, increasing the confidence level leads to wider intervals for all methods and all
statistics. The qualitative comparison between the methods is stable across the three choices of
$\alpha$: \privSub\ consistently produces substantially tighter intervals than \privBoot, with the largest gains at smaller sample sizes and in the KS experiment. For median estimation and logistic regression, \privSub\ remains close to the private and non-private baselines in terms of width as $n$ grows, while for the KS statistic it is much closer to the non-private subsampling baseline than \privBoot.

The corresponding coverage results show that this reduction in width does not come from systematic
under-coverage. Across the three nominal levels, \privSub\ generally tracks the target coverage and
improves as $n$ increases, with mild finite-sample under-coverage in some of the median and logistic regression settings at smaller sample sizes. In contrast, \privBoot\ is consistently conservative, especially for the median and KS statistic, which explains its substantially larger widths. Overall, these experiments indicate that the main conclusions are robust to the choice of confidence level:
changing $\alpha$ affects the absolute width and the nominal coverage target, but does not change the relative behavior of the methods.

For the KS statistic, the mild under-coverage of the non-private baseline is not an implementation artifact, but rather a finite-sample statistical feature of this particular non-smooth statistic; since it is orthogonal to the privacy question studied here, we do not pursue this issue further.

\begin{figure}[ht]
\vspace{.3in}
\centering\includegraphics[width=\linewidth]{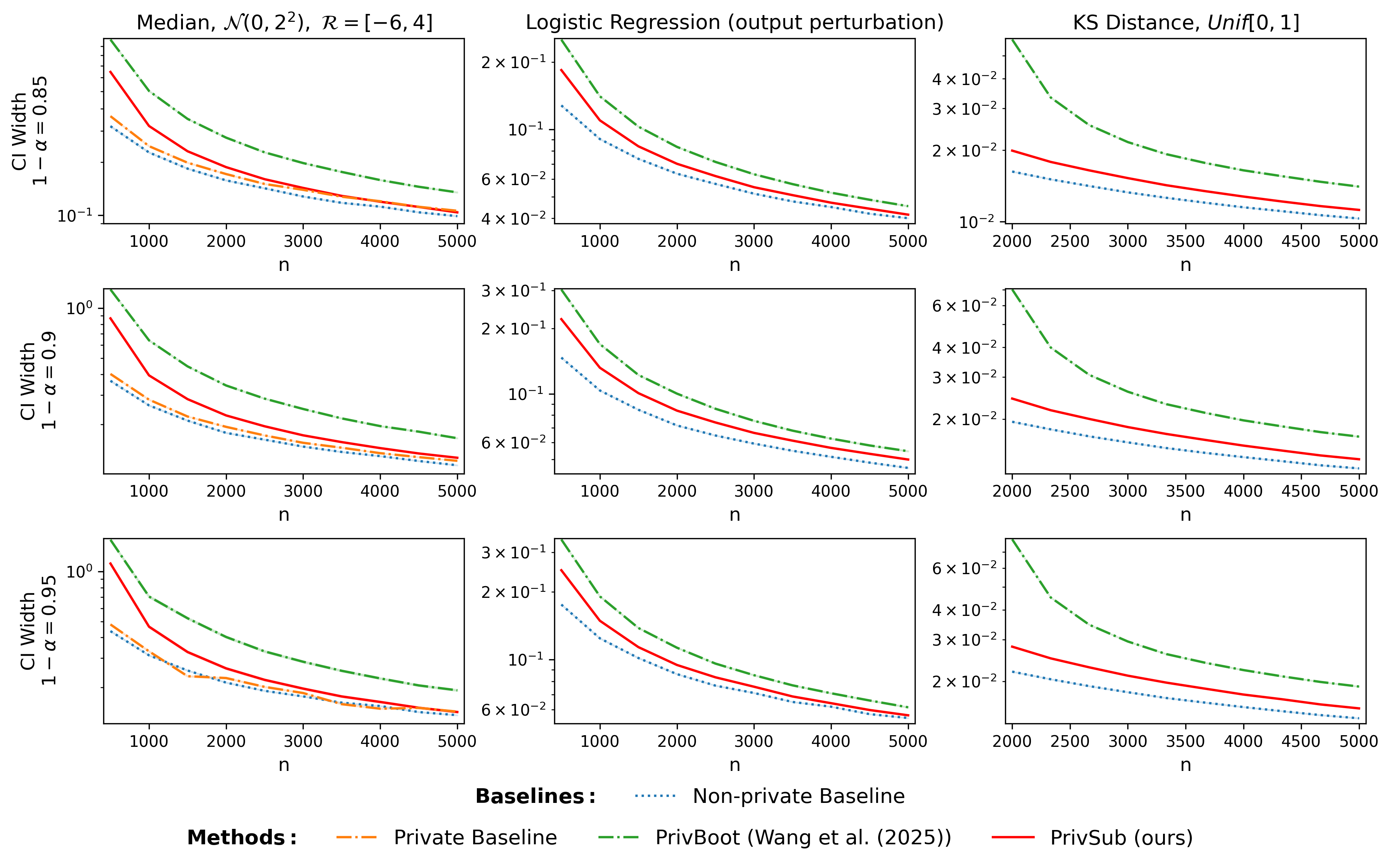}
\caption{Sensitivity of private CI width to the nominal coverage level \(\alpha\).\\
Comparison of our method (\privSub) with existing general, nonparametric DP CI methods: the private bootstrap (\privBoot; \citealp{wang2025differentially}), a median-specific private method based on the exponential mechanism (\expMech; \citealp{drechsler2022nonparametric}), alongside  non-private baseline: bootstrap for the median and
logistic-regression experiments, and subsampling for the KS statistic. We evaluate finite-sample CI width as a function of sample size $n$ across three statistical tasks: median estimation under $\mathcal N(0,2^2)$ with $\mathcal{R}=[-6,4]$, logistic-regression slope estimation with output perturbation, and KS-distance estimation under $\operatorname{Unif}[0,1]$. For the median and KS experiments, all private methods are $5$-pure
DP, except \privBoot, which is ($\varepsilon = 5, \delta = 10^{-6}$)-DP; for logistic regression, all private methods
are ($\varepsilon = 5, \delta = 10^{-6}$)-DP.
Results are shown for nominal coverage levels $1-\alpha\in\{0.85,0.9,0.95\}$.\\
Across all tasks and coverage levels, CI widths decrease with increasing $n$. \privSub\ consistently yields substantially narrower intervals than \privBoot, with the largest gains at smaller sample sizes, and remains close to the private and non-private baselines as $n$ grows, indicating robust improvement across confidence levels.
}
\label{fig:varying_significance_width}
\end{figure}

\begin{figure}[ht]
\vspace{.3in}
\centering\includegraphics[width=\linewidth]{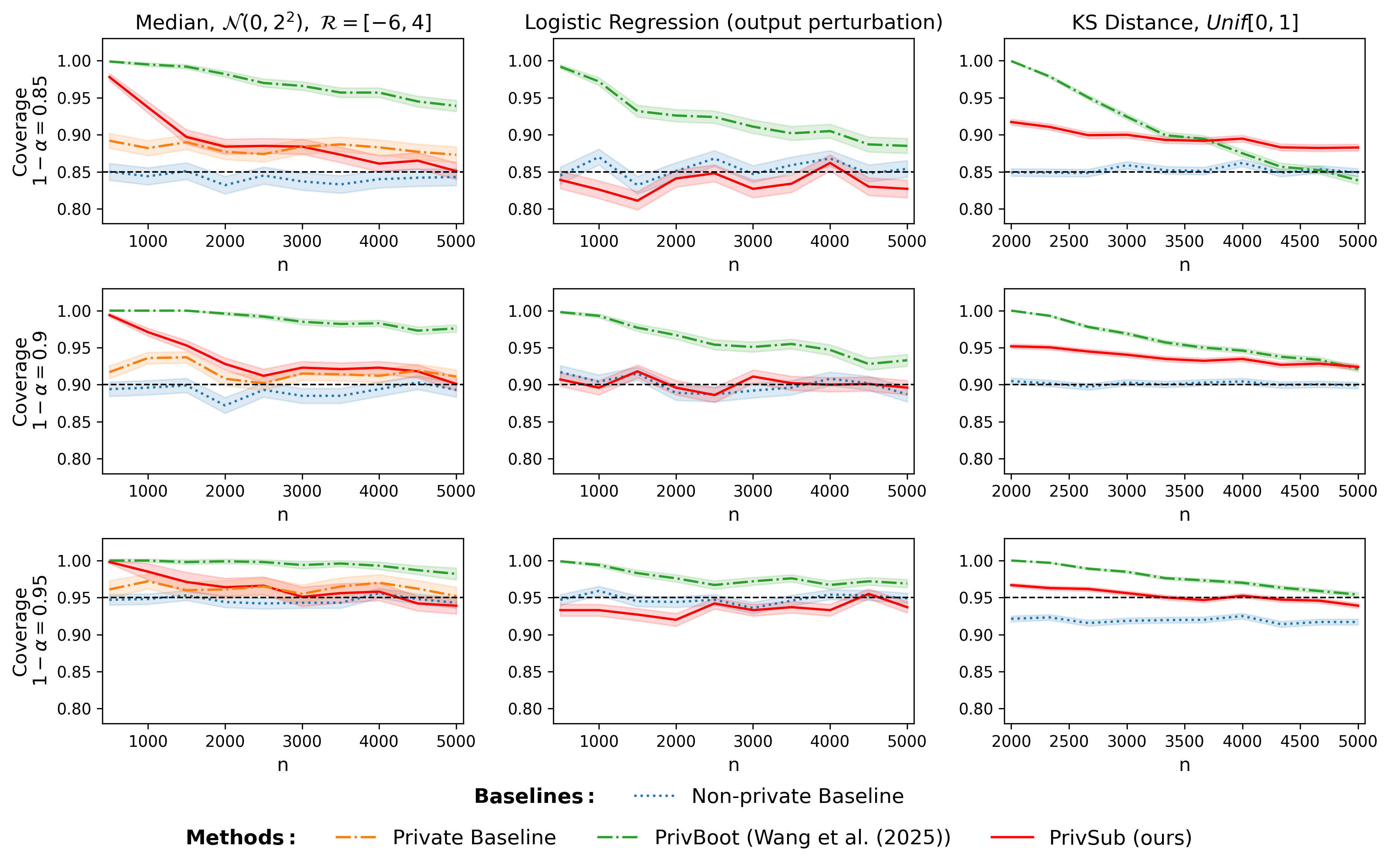}
\caption{Sensitivity of private CI coverage to the nominal coverage level \(\alpha\).\\
Comparison of our method (\privSub) with existing general, nonparametric DP CI methods: the private bootstrap (\privBoot; \citealp{wang2025differentially}), a median-specific private method based on the exponential mechanism (\expMech; \citealp{drechsler2022nonparametric}), alongside the non-private baseline: bootstrap for the median and
logistic-regression experiments, and subsampling for the KS statistic. We evaluate finite-sample CI coverage as a function of sample size $n$ across three statistical tasks: median estimation under $\mathcal N(0,2^2)$ with $\mathcal{R}=[-6,4]$, logistic-regression slope estimation with output perturbation, and KS-distance estimation under $\operatorname{Unif}[0,1]$. For the median and KS experiments, all private methods are $5$-pure
DP, except \privBoot, which is ($\varepsilon = 5, \delta = 10^{-6}$)-DP; for logistic regression, all private methods
are ($\varepsilon = 5, \delta = 10^{-6}$)-DP.
Results are shown for nominal coverage levels $1-\alpha\in\{0.85,0.9,0.95\}$. Shaded bands indicate $\pm 2$ binomial standard errors across repetitions.\\
Across all tasks and coverage levels, \privSub\ achieves stable empirical coverage close to the target level and is generally comparable to the private and non-private baselines. In contrast, \privBoot\ tends to be conservative, especially at lower nominal coverage levels, while remaining above the target in many settings. The same qualitative pattern holds across $\alpha$, 
indicating that \privSub\ provides robust finite-sample coverage across confidence levels and statistical tasks.}
\label{fig:varying_significance_coverage}
\end{figure}

\subsection{Comparison of CDF estimation}

Our method of proof of the validity goes through proving point-wise (or uniform) convergence of the distribution (Theorem \ref{thm:cons_priv_quan_CI}). In this subsection, we show the convergence of the CDF as a function of the sample size. Figure ~\ref{fig:cdf} provides empirical verification of this convergence for two representative distributions. Unlike Figure 1, we use the smaller privacy budget $\varepsilon=2$ to make the effect of privacy noise visible; as $n$ increases, the private CDF nevertheless approaches the theoretical target.

The figure displays the estimated CDF from \privSub~(solid blue line) alongside the non-private empirical CDF (dashed orange line) and the theoretical CDF (dashed black line) for sample sizes  $n=1,000, 3,000, 5,000$. 

Several patterns emerge: First, as sample size increases, both the private and non-private CDFs converge toward the theoretical CDF, with the gap narrowing systematically. We can clearly see that the distribution estimated by \privSub\ is less concentrated when the underlying distribution is normal, consistent with our previous empirical evaluations. 

\begin{figure}[ht]
\vspace{.3in}
\centering\includegraphics[width=\linewidth]{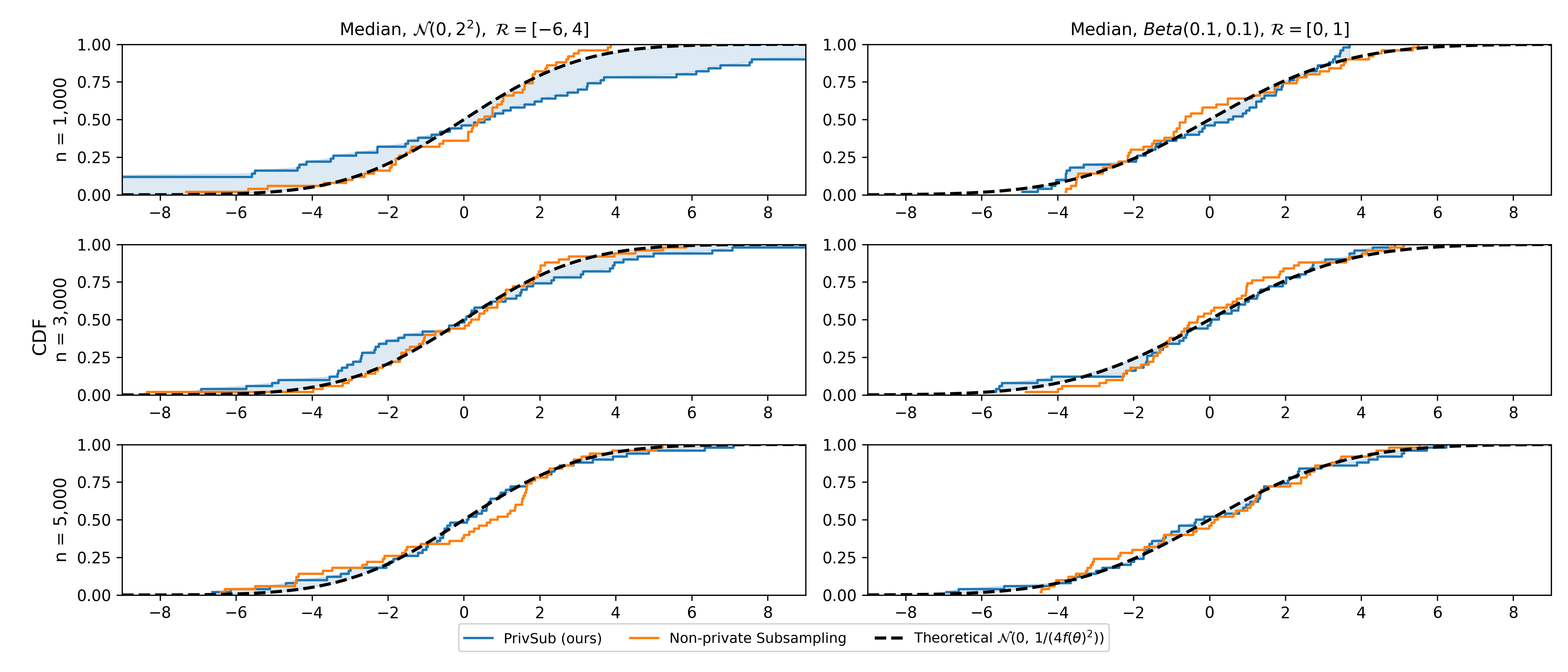}
\caption{Empirical CDFs of the median under different distributions with $\varepsilon=2$. Three methods are shown: the theoretical centralized estimator (with its theoretical CDF), non-private subsampling, and \privSub. Both subsampling methods (\privSub\ and {non-private subsampling}) partition the dataset to subsamples of size $n^{2/3}$, repeat this process $T=50$ times, and construct empirical CDFs from the resulting estimates.}
\label{fig:cdf}
\end{figure}


\end{document}